\documentclass[preprint,11pt]{elsarticle}

\journal{\ }
\usepackage{graphicx}
\usepackage{pifont,latexsym,ifthen,amsthm,rotating,calc,textcase,booktabs}
\usepackage{amsfonts,amssymb,amsbsy,amsmath}
\usepackage{multirow}
\allowdisplaybreaks
\newtheorem{theorem}{Theorem}[section]
\newtheorem{lemma}[theorem]{Lemma}

\usepackage{natbib}
\newtheorem{prob}{RH problem}[section]

\numberwithin{equation}{section}
\usepackage{verbatim}
\usepackage{todonotes}
\usepackage{float}
\usepackage{subfigure}
\usepackage{hyperref}
\usepackage{appendix}
\usepackage[percent]{overpic}
\bibliographystyle{unsrt}
\hypersetup{
    colorlinks=true, 
    linktoc=all,     
    linkcolor=blue,  
    citecolor=blue
}

\DeclareMathOperator*{\im}{Im}
\DeclareMathOperator*{\re}{Re}
\usepackage{color}
\usepackage{geometry}
\geometry{right=3.0cm,left=3.0cm,top = 3.0cm, bottom = 3.0cm}

\begin{document}

\begin{frontmatter}

\title{   The  long-time asymptotic of the derivative nonlinear Schr$\ddot{o}$dinger equation with step-like initial value }
\author[inst1]{Lili Wen}
\address[inst1]{School of Mathematical Sciences, Shanghai Key Laboratory of Pure Mathematics and Mathematical Practice, East China Normal University, Shanghai, 200241, China.}
\author[inst1,inst2]{Yong Chen}
\ead{ychen@sei.ecnu.edu.cn.}
\address[inst2]{ College of Mathematics and Systems Science, Shandong University of Science and Technology, Qingdao, 266590, China.}
\author[inst3]{Jian Xu}
\address[inst3]{College of Sciences, University of Shanghai for Science and Technology, Shanghai 200093, China.}

\begin{abstract}
\baselineskip=18pt
Consideration in this present paper is the long-time asymptotic of solutions to the derivative nonlinear Schr$\ddot{o}$dinger equation with the step-like initial value
 \begin{eqnarray}
q(x,0)=q_{0}(x)=\begin{cases}
\begin{split}
A_{1}e^{i\phi}e^{2iBx}, \quad\quad x<0,\\
A_{2}e^{-2iBx}, \quad\quad~~ x>0.
\end{split}\nonumber
\end{cases}
\end{eqnarray}
by Deift-Zhou method. The step-like initial problem described by a matrix Riemann-Hilbert problem. A crucial ingredient used in this paper is to introduce $g$-function mechanism for solving the problem of the entries of the jump matrix growing exponentially as $t\rightarrow\infty$. It is shown that the leading order term of the asymptotic solution of the DNLS equation expressed by the Theta function $\Theta$ about the Riemann-surface of genus 3 and the subleading order term expressed by parabolic cylinder and Airy functions.

\end{abstract}

\begin{keyword}
long-time asymptotic\sep DNLS equation\sep step-like initial value\sep  Riemann-Hilbert problem\sep Deift-Zhou method\sep genus 3\sep leading order\sep subleading order.

  \textit{Mathematics Subject Classification:} 35Q55; 35P25;   35Q15; 35C20; 35G25.
\end{keyword}

\end{frontmatter}

\tableofcontents

\section{Introduction and Main Result}
This paper is devoted to the long-time asymptotic behavior of solutions to the derivative nonlinear Schr$\ddot{o}$dinger (DNLS) equation with step-like initial value
\begin{align}
&iq_{t}+q_{xx}-iq^{2}\overline{q}_{x}+\frac{1}{2}|q|^{4}q=0,\label{dnls3}\\
&
q(x,0)=\begin{cases}
A_{1}e^{i\phi_{1}}e^{-2iB_{1}x},\quad  x<0,\\
A_{2}e^{i\phi_{2}}e^{-2iB_{2}x}, \quad x>0.
\end{cases}\label{init}
\end{align}
Equation (\ref{dnls3}) alternatively termed by DNLS-III equation and sometimes referred as the Gerdjikov-Ivanov equation, to model weakly nonlinear dispersive water waves, Alfv$\acute{e}$n waves propagating along with the constant magnetic field in cold plasmas and ultrafast waves in optical fibers \cite{mw1976,me1976,ky1985}. Here and after, the overbar denotes the complex conjugation and the subscript denote differential  with respect to the corresponding variables. The DNLS-I,-II equations
\begin{align}
&iq_{t}+q_{xx}+i(|q|^{2}q)_{x}=0,\label{dnls1}\\
&iq_{t}+q_{xx}+i|q|^{2}q_{x}=0,\label{dnls2}
\end{align}
or termed by Kaup-Newell equation and Chen-Lee-Liu equation are also the canonical models of the DNLS equation. There exist a chain of gauge transformations to relate DNLS-I,-II,-III equation with each other \cite{wm1983}. Theoretically, a solution $\check{q}(x,t)$ of DNLS-I equation (\ref{dnls1}),  the invertible gauge transformation $q(x,t)=\check{q}(x,t)\exp\left(\mp i\int_{-\infty}^{x}|\check{q}|^{2}(y,t)\mathrm{d}{y}\right)$ maps the solutions of the DNLS equation (\ref{dnls3}) with
$t\rightarrow\frac{t}{2}$. However, it is very hard to be done explicitly due to the involved indefinite integration. Therefore, one needs to work with these equations separately.

Note that the problematic term $(|q|^{2}q)_{x}$ in (\ref{dnls1}) is replaced by the quintic term $|q|^{2}q$ without derivative and a derivative term $q^{2}\overline{q}_{x}$ with a better convolution structure in (\ref{dnls3}).  Therefore, the present paper is concerned with the long-time asymptotic under the step-like (asymmetric) initial value problem for $x\lessgtr0$ of the DNLS equation (\ref{dnls3}). The now well-known method of nonlinear steepest descent for studying the long-time asymptotic of solutions of integrable nonlinear equations with initial value was introduced in the early 1990s in a seminal paper by Deift and Zhou \cite{dz1993}, building on earlier works of Manakov \cite{msv1973} and Its \cite{its1981}. For a detailed historical review of this method please see \cite{dpa1993} and further extended by Deift, Venakides and Zhou \cite{dp1994,dp1997}. This method increasing perfect is based on the development of the nonlinear steepest descent
method for Riemann-Hilbert (RH) problem associated with integrable nonlinear equations. The intermesh of the RH formalism and the Deift-Zhou approach to the step-like initial value problems gradually been the subject of more works \cite{gav1973,ky1976,kvp1986,vs1986}. This idea was adapted by  Venakides to
problems in the shock problem with initial data for the integrable
equation \cite{dp1994}. Buckingham, Boutet de Monvel, Biondini, Minakov and Grava considered the long-time asymptotic for the step-like initial value \cite{bdm2011,rb2007,bm2009,bdm2022,bdm2021,gb2017,gb2021,gb2014,am2011,tg2020}.

Before stating our assumptions and result more precisely, we recall known results concerning the DNLS equation (\ref{dnls3}). The multiple soliton solutions are addressed for the DNLS equation (\ref{dnls3}) under the initial value with zero/nonzero boundary conditions as $x\rightarrow\pm\infty$ by analyzing a matrix Riemann-Hilbert (RH) problem \cite{zzc2021}. The Dirichlet initial-boundary value problem for the DNLS equation (\ref{dnls3}) is exhibited to
be locally well-posed in $H^{s}(\mathbb{R}^{+})$ for $s\in(1/2, 5/2)$ and $s= 3/2$ on the half-line \cite{emb2018}. For the nondecaying boundary value, the existence of a solution is classified
for the DNLS equation (\ref{dnls3}) with asymptotical time-periodic boundary values and two particularly families of parameters lying on the quarter plane $\{(x, t) \in \mathbb{R}^2|x \geq 0, t \geq 0\}$ \cite{fs2020}. Due to its integrability,
many explicit solutions in the closed form (including solitons and algebraic solitons,
breathers and rogue wave solutions, algebro-geometric solutions), Hamiltonian structures, integrable decompositions and similarity reductions have been presented for the DNLS equation (\ref{dnls3}) \cite{f2001,yh2015,zss2022,xsw2012,hy2013,zp2020,ks2004,ks2005,cjb2022}. The global existence for the DNLS equation (\ref{dnls3}) was proved by inverse scattering method in \cite{ljq2016}.

In the context of inverse scattering, the long-time asymptotics were studied for the DNLS equation (\ref{dnls3}) with step-like initial values  $q(x,0)=\begin{cases}
Ae^{i\phi}e^{-2iBx},  x\leq0,\\
0, \quad\quad\quad\quad~ x>0,
\end{cases}$
 time-periodic initial value on the quarter plane and the nonzero boundary condition by the nonlinear steepest descent method \cite{xu2013,ln2019,tsf2018}. Liu studied the long-time behavior of solutions to the DNLS equation (\ref{dnls3}) for soliton-free initial data \cite{ljq2018}. These works gave the leading order asymptotics where error is $\mathcal{O}(t^{-1/2})$. However, the subleading order asymptotics not derived.

In the present paper, we consider the long-time asymptotic in a shock case of DNLS equation (\ref{dnls3}) with the more general step-like initial value conditions (\ref{init}). Moreover, we derive the leading order and the subleading order asymptotics where error is $\mathcal{O}(t^{-1/2}\ln(t))$.

In the initial value (\ref{init}), $\{A_{j}, B_{j}, \phi_{j}\}_{1}^{2}\in\mathbb{R}$ and $A_{j}>0$. The equation (\ref{dnls3}) with initial condition (\ref{init}) admits the plane wave solution
$q_{j}^{\pm\infty}(x,t)=A_{j}e^{i\phi_{j}}e^{-2iB_{j}x+2i\omega_{j}t}$, where $\omega_{j}=\frac{1}{4}(A_{j}^{4}+4A_{j}^{2}B_{j}-8B_{j}^{2})$. For $B_{1}>B_{2}$ (the rarefaction case), the asymptotical does not depend on the values of $D_{j}$ ($D_{j}$ defined in subsection \ref{sec2.2}), the details see \cite{xu2013}. For $B_{1}<B_{2}$ (the shock case), the asymptotic is influenced by $D_{j}/(B_{2}-B_{1})$. For simplicity, we consider the symmetric shock $D_{1}=D_{2}=D>0$ and $B_{2}=-B_{1}=B>0$. The infinite branch of $\im{g}=0$ pass through the points $E_{1}$ and $\overline{E}_{1}$ before the two real zeros $\mu_{1}$ and $\mu_{2}$ of $\im{g}=0$ directly lead to the genus 3. The distribution of the genus see Figure \ref{spacetime}, where $\xi_{E_{1}}$ denotes $\xi=\xi_{E_{1}}=2(B+\sqrt{D^{2}+B^{2}})$ as the infinite branch pass through the points $E_{1}$ and $\overline{E}_{1}$, $\xi_{\mu}$ denotes $\xi=\xi_{\mu}$ the two real zeros $\mu=\mu_{1}=\mu_{2}$ of $\im g=0$. And the two zeros merge where $\xi_{mer}=4(-B+\sqrt{D})$. So the infinite branch pass through the two points $E_{1}$ and $\overline{E}_{1}$ before the zeros merge if and only if $\xi_{E_{1}}>\xi_{mer}$, i.e. $D/B<\frac{4+6\sqrt{2}}{7}$.

The present paper is devoted to study the long-time asymptotic of the genus 3 for the DNLS equation (\ref{dnls3}) by means of the matrix RH problem (RH problems in this paper are $2\times2$ matrix-valued). A critical step in the nonlinear steepest descent method consists in deforming the contour associated to the RH problem in a way adapted to the structure of the phase function that defines the oscillatory dependence on parameters. When the entries of the jump matrix are not analytic, they must be approximated by rational functions so that the deformation can be carried out. Therefore, we bring in the $g$-function mechanism which is introduced when the entries of the jump matrix grow exponentially or oscillate as $t\rightarrow\infty$ \cite{dv1994}. The core idea of $g$-function mechanism is transform the phase function $\theta$ of the basic RH problem to a $g$-function so that the jump matrix of RH problem is constant or decay to a identity matrix by some matrix deformations.

      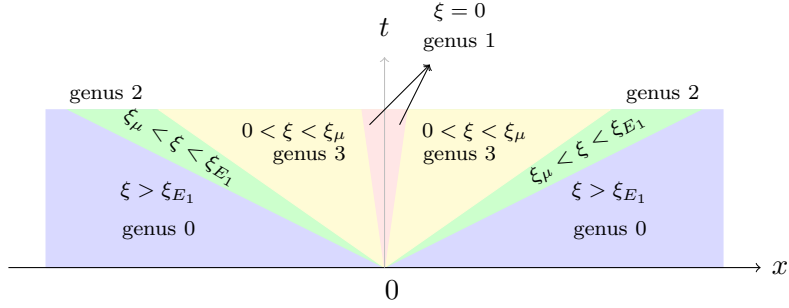
\begin{figure}[H]
	\begin{center}
		\begin{tikzpicture}
            \draw[blue!15, fill=blue!15] (-4.5,0)--(4.5,0)--(4.5,2.1)--(-4.5, 2.1)--(-4.5,0);
                   \draw[green!20, fill=green!20] (0,0 )--(-4.2,2.1)--(4.2,2.1)--(0,0);
            \draw[yellow!20, fill=yellow!20] (0,0 )--(-3.0,2.1)--(3.0,2.1)--(0,0);
          \draw[red!10, fill=red!10] (0,0 )--(-0.3,2.1)--(0.3,2.1)--(0,0);
		\draw [-> ](-5,0)--(5,0);
		\draw [gray!50,-> ](0,0)--(0,2.8);
		\node    at (0.1,-0.3)  {$0$};
		\node    at (5.26,0)  { $x$};
		\node    at (0,3.2)  { $t$};
	   \node  []  at (1,1.5) {\scriptsize genus 3};
     \node  []  at (-1,1.5) {\scriptsize genus 3};
       \node  []  at (1.2,1.8) {\scriptsize $0<\xi<\xi_{\mu}$};
     \node  []  at (-1.2,1.8) {\scriptsize $0<\xi<\xi_{\mu}$};
     \node [] at (1,3) {\scriptsize genus 1 };
     \node [] at (1,3.4) {\scriptsize  $\xi=0$};
         \draw [-> ](0.2,1.9)--(0.6,2.7);
         \draw [-> ](-0.2,1.9)--(0.6,2.7);
         \node [] at (-3,1) {\scriptsize $\xi>\xi_{E_{1}}$};
        \node [] at (-3,0.5) {\scriptsize genus 0};
            \node [] at (3,1) {\scriptsize $\xi>\xi_{E_{1}}$};
    \node [] at (3,0.5) {\scriptsize genus 0};
      \node  []  at (3.7,2.3) {\scriptsize genus 2};
        \node  []  at (-3.7,2.3) {\scriptsize genus 2};
         \node  []  at (-2.7,1.6) {\scriptsize \rotatebox{-30}{$\xi_{\mu}<\xi<\xi_{E_{1}}$}};
         \node  []  at (2.7,1.6) {\scriptsize \rotatebox{30}{$\xi_{\mu}<\xi<\xi_{E_{1}}$}};
		\end{tikzpicture}
	\end{center}
	\caption{\footnotesize The space-time region of $x$ and $t$ for $\frac{4+6\sqrt{2}}{7}>\frac{D}{B}>1$, where the blue region $\color{blue!20}\blacksquare$ denote the genus 0 region; the green region $\color{green!20}\blacksquare$ denote the genus 2 region; the yellow region $\color{yellow!30}\blacksquare$ denote the genus 3 region; the red region $\color{red!10}\blacksquare$ denote the genus 1 region.}
	\label{spacetime}
\end{figure}

For this purpose, we fix some notations for this paper. We define $\mathbb{C}^{+}$ and $\mathbb{C}^{-}$ are the upper and lower plane of the complex plane $\mathbb{C}$, see the second one in Figure \ref{con1}. All RH problems in this paper are considered in the $L^{2}$-RH problem \cite{pa1999,lp2002,as2006,zx1989,lj2017,lj2018}.

Our main result is addressed as follows:
\begin{theorem}\label{solution}
The long-time asymptotic of the solution to the DNLS equation (\ref{dnls3}) with the initial value condition (\ref{init})
is given by the following formula
\begin{equation}
q(x,t)=q_{0}+\frac{q_{1}}{\sqrt{t}}+\mathcal{O}(t^{-1}\ln{t}),\quad t\rightarrow\infty,
\end{equation}
where the leading order term shown as
\begin{equation}
q_{0}=e^{2i(tg^{(0)}+h(\infty))}\mathrm{Im}(E_{1}+E_{2}+\alpha+\beta)\frac{\Theta(\varphi(\infty^{+})+d)\Theta(\varphi(\infty^{+})-v(t)-d)}{\Theta(\varphi(\infty^{+})+v(t)+d)\Theta(\varphi(\infty^{+})-d)},
\end{equation}
and the subleading order term shown as
\begin{equation}
q_{1}=-2ie^{2i(tg^{(0)}+h(\infty))}\frac{(Y_{\mu}(x,t,\mu)m_{1}^{pc}Y_{\mu}^{-1}(x,t,\mu))_{12}}{\psi_{\mu}(\mu)}.
\end{equation}
The constants $g^{(0)}$ and $h(\infty)$ are given by (\ref{g0}) and (\ref{hinfty}).
The Riemann Theta function $\Theta$ and Abel map $\varphi$ are defined by (\ref{theta}) and (\ref{varphi}), respectively. The matrixes $Y_{\mu}$ and $m_{1}^{pc}$ are given by (\ref{ymu}) and (\ref{m1pc}). The constant $\psi_{\mu}$ defined by (\ref{psimu}).
\end{theorem}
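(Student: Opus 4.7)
The plan is to implement the nonlinear steepest descent analysis of Deift--Zhou together with the Venakides $g$-function mechanism on the genus 3 Riemann surface determined by the branch points $\mu,E_1,E_2,\overline{E}_1,\overline{E}_2$ and their conjugates. Starting from the initial $2\times 2$ RH problem $m(x,t;k)$ associated to (\ref{dnls3})--(\ref{init}) via the direct/inverse scattering transform, I would first absorb the exponentially growing phases $e^{\pm 2it\theta}$ in the jump matrix by introducing a scalar $g$-function defined on the genus 3 Riemann surface. The $g$-function is determined by the conditions that it be analytic off the bands, behave like $\theta$ plus lower order terms at infinity, take constant real values on each band, and have the imaginary part vanishing on appropriate arcs; these are precisely the standard Whitham-type conditions that, combined with the vanishing of the $B$-periods of $dg$, give a system whose solvability fixes the band endpoints described in the introduction. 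Once $g$ is in hand, the transformation $m \mapsto m\, e^{(itg-h)\sigma_3}$ (with a companion scalar function $h$ used to diagonalize the constant jumps on the bands) produces a new RH problem whose jump is piecewise constant on the bands and tends to the identity off them as $t\to\infty$.

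Next I would carry out the lens (or ``$\bar\partial$/opening of lenses'') deformation on the gaps: factor the remaining jump matrix as $(\mathbb I+w_-^{-1})(\mathbb I+w_+)$ and deform the oscillatory/exponentially decaying triangular factors into the appropriate upper and lower half planes, using the sign structure of $\mathrm{Im}(g-\theta)$ worked out in the space--time region $0<\xi<\xi_\mu$. This step leaves a limit RH problem whose jumps are supported only on the bands with constant jump matrices, and small neighborhoods of the band endpoints and of the double (stationary phase) point $\mu$. The limit problem is solved globally in terms of the Riemann Theta function $\Theta$ associated to the genus 3 surface; the explicit solution is constructed from the Abel map $\varphi$ and a meromorphic differential with poles at $\infty^{\pm}$, and its entries evaluated at $k=\infty$ yield, after tracking the factor $e^{2i(tg^{(0)}+h(\infty))}$ coming from the dressing transformations, exactly the formula given for $q_0$.

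For the subleading term I would construct local parametrices. At each branch point $E_j,\overline{E}_j$ one uses the standard Airy parametrix, which contributes an error of order $O(t^{-1})$; at the merging stationary phase point $\mu$ one must use a parabolic cylinder parametrix $m^{pc}$, because this is where two real zeros of $\mathrm{Im}\, g$ coalesce. Matching the parabolic cylinder parametrix to the outer Theta function model on the boundary of a small disk around $\mu$, I would extract $Y_\mu$, $\psi_\mu$, and $m_1^{pc}$, and then the contribution of this local model to the $k\to\infty$ expansion of $m$ gives the $t^{-1/2}$ correction $q_1$ in the stated form. The Airy parametrices contribute only to the $O(t^{-1}\ln t)$ error, while the logarithmic factor arises from the overlap estimates between the parabolic cylinder disk and the outer model on the band near $\mu$.

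Finally, the error analysis is a standard small-norm RH argument: after dividing out the global parametrix and all local parametrices, the remaining RH problem has jump $\mathbb I + O(t^{-1/2})$ in $L^2\cap L^\infty$ on the disk boundaries and exponentially small jump elsewhere, so its solution $R$ satisfies $R-\mathbb I = O(t^{-1/2}\ln t)$ uniformly, and the correction to $q$ extracted from $R$ is of the same order. The main obstacles I expect are: (i) proving solvability of the moduli equations that determine the band endpoints in the full shock parameter range $D/B<(4+6\sqrt 2)/7$, so that the $g$-function exists and the sign structure of $\mathrm{Im}(g-\theta)$ is as required for the lens opening; and (ii) the matching of the parabolic cylinder parametrix to the outer Theta function model at the coalescing stationary point $\mu$, which is responsible both for the precise form of $q_1$ and for the $\ln t$ factor in the error.
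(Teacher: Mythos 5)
Your proposal follows essentially the same route as the paper: a $g$-function on the genus 3 surface, successive contour deformations (including the scalar $\delta$-type factor and the companion function $h$) that render the band jumps constant, a Riemann Theta function outer model giving $q_{0}$, Airy parametrices at the soft edges, a parabolic cylinder parametrix at the stationary point $\mu$ producing the $t^{-1/2}$ term $q_{1}$, and a small-norm Cauchy-operator argument yielding the $\mathcal{O}(t^{-1}\ln t)$ error. The only slight inaccuracy is the placement of the Airy parametrices: in the paper they are needed at the $g$-function band edges $\alpha,\beta,\overline{\alpha},\overline{\beta}$, where the lens jumps fail to decay uniformly, and not at $E_{j},\overline{E}_{j}$, where the deformed jump already coincides exactly with the model jump; this does not affect the structure of the argument.
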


\section{Preliminaries}\label{sec2}
In this section, we mainly introduce some preparations for studying the long-time asymptotic of the DNLS equation (\ref{dnls3}) with the initial value (\ref{init}), such as Jost solutions, scattering datas, basic RH problem, $g$-function.
\subsection{Jost solution}
It is well known that the DNLS equation (\ref{dnls3}) can be represented as the compatibility condition of two linear spectral problem (Lax pair). The Lax pair
makes it possible to reduce the long-time asymptotics of the solutions with the initial value problem for the equation (\ref{dnls3}) to the matrix RH problem, which involves the Jost solutions of the Lax pair. The DNLS equation (\ref{dnls3}) admits the Lax pair \cite{kav1997}
\begin{equation}
\Psi_{x}=U\Psi,\quad \quad  \Psi_{t}=V\Psi,\label{laxpair1}
\end{equation}
where $\Psi$ is a $2\times2$ matrix-valued function of $(x,t,k)$, $k\in\mathbb{C}$ is the spectral parameter and
\begin{subequations}
\begin{align}
&U=-ik^{2}\sigma_{3}+kQ+\frac{i}{2}|q|^{2}\sigma_{3},\nonumber\\
&V=-2ik^{4}\sigma_{3}+2k^{3}Q+ik^{2}|q|^{2}\sigma_{3}-ikQ_{x}\sigma_{3}+\frac{i}{4}|q|^{4}\sigma_{3}+\frac{1}{2}(q\overline{q}_{x}-\overline{q}q_{x})\sigma_{3},\nonumber
\end{align}
\end{subequations}
with
\begin{equation}
Q=\left(
\begin{array}{cc}
0&q\\
-\overline{q}&0
\end{array}
\right),\quad
\sigma_{3}=\left(
\begin{array}{cc}
1&0\\
0&-1
\end{array}
\right).\nonumber
\end{equation}
Setting $\Psi_{j}^{\pm\infty }$ (for convenient we omit $x, t, k$) is the asymptotic solutions of the Lax pair (\ref{laxpair1}) under the step-like initial value (\ref{init}). And $\Psi_{j}^{\pm\infty }$ satisfy the asymptotic Lax pair as $x\rightarrow\pm\infty$,
\begin{equation}
\Psi_{jx}^{\pm\infty}=U_{j}^{\pm\infty}\Psi_{j}^{\pm\infty},\quad \Psi_{jt}^{\pm\infty}=V_{j}^{\pm\infty}\Psi_{j}^{\pm\infty},\nonumber
\end{equation}
where $U_{j}^{\pm\infty}$ and $V_{j}^{\pm\infty}$ are defined by $U$ and $V$ with the $q$ instead by $q_{j}^{\pm\infty}$.
The asymptotic solutions $\Psi_{j}^{\pm\infty}$ with the form
\begin{equation}
\Psi_{j}^{\pm\infty}=e^{(-iB_{j}x+i\omega_{j}t)\sigma_{3}}Y_{j}(k)e^{(-ixX_{j}(k)-it\Omega_{j}(k))\sigma_{3}},\nonumber
\end{equation}
where
\begin{subequations}
\begin{align}
&X_{j}^{2}(k)=(k^{2}-B_{j}-\frac{A_{j}^{2}}{2})+k^{2}A_{j}^{2},\quad \Omega_{j}(k)=2(k^{2}+B_{j})X_{j}(k),\nonumber\\
&Y_{j}(k)=\frac{1}{2}e^{\frac{i\phi_{j}}{2}\sigma_{3}}\left(
\begin{array}{cc}
y_{j}(k)+y_{j}^{-1}(k)&y_{j}(k)-y_{j}^{-1}(k)\\
y_{j}(k)-y_{j}^{-1}(k)&y_{j}(k)+y_{j}^{-1}(k)
\end{array}
\right)e^{-\frac{i\phi_{j}}{2}\sigma_{3}},\nonumber\\
&y_{j}(k)=\left(\frac{k^{2}-B_{j}-\frac{A_{j}^{2}}{2}-ikA_{j}}{k^{2}-B_{j}-\frac{A_{j}^{2}}{2}+ikA_{j}}\right)^{\frac{1}{4}}.\nonumber
\end{align}
\end{subequations}
The branch cuts for $X_{j}$ and $y_{j}$ are taken along the segment
$\gamma_{j}\cup\overline{\gamma}_{j}=\{k\in\mathbb{C}|\re^{2}{k}-\im^{2}{k}=B_{j},\re^{2}{k}\leq B_{j}+\frac{A_{j}^{2}}{2}\}$,
where $\gamma_{j}=\{k\in\mathbb{C}|\re^{2}{k}-\im^{2}{k}=B_{j},\re^{2}{k}\leq B_{j}+\frac{A_{j}^{2}}{2},\im k^{2}>0\}$. $X_{j}$ and $y_{j}$ with the asymptotics
\begin{equation}
X_{j}(k)=k^{2}-B+\mathcal{O}(k^{-2}),\quad y_{j}(k)=1+\mathcal{O}(k^{-1}),\quad k\rightarrow\infty.\nonumber
\end{equation}
For $k^{2}\in\gamma_{j}\cup\overline{\gamma}_{j}$, $y_{j+}(k)=iy_{j-}(k)$ and $Y_{j}(k)$ satisfy the jump condition
\begin{equation}
Y_{j+}(k)=Y_{j-}(k)\left(
\begin{array}{cc}
0&ie^{i\phi_{j}}\\
ie^{-i\phi_{j}}&0
\end{array}
\right).\nonumber
\end{equation}
Introduce the transformation
$\Psi_{j}(x,t,k)=\mu_{j}(x,t,k)e^{-ix(X_{j}+B_{j})\sigma_{3}-it(\Omega_{j}-\omega_{j})\sigma_{3}}$.
The Lax pair (\ref{laxpair1}) rewritten as a new version about Jost solution $\Psi_{j}$
\begin{equation}
\Psi_{jx}=(U-U_{j})^{\pm\infty}\Psi_{j}+U_{j}^{\pm\infty}\Psi_{j},\quad\Psi_{jt}=(V-V_{j}^{\pm\infty})\Psi_{j}+V_{j}^{\pm\infty}\Psi_{j},\label{lax2}
\end{equation}
and one can note that the version of Lax pair (\ref{laxpair1}) about $\mu_{j}$
\begin{equation}
\mu_{jx}=i(X_{j}+B_{j})\mu_{2}\sigma_{3}+U\mu_{j},\quad \mu_{jt}=i(\Omega_{j}-\omega_{j})\mu_{2}\sigma_{3}+V\mu_{j}.
\end{equation}
Multiply both side by $(\Psi_{j}^{\pm\infty})^{-1}$ for the equation of (\ref{lax2}) and derive the full derivative form
\begin{equation}
\mathrm{d}\left[(\Psi_{j}^{\pm\infty})^{-1}\Psi_{j}\right]=(\Psi_{j}^{\pm\infty})^{-1}(U-U_{j}^{\pm\infty})\Psi_{j}\mathrm{d}x+(\Psi_{j}^{\pm\infty})^{-1}(V-V_{j}^{\pm\infty})\Psi_{j}\mathrm{d}t.\nonumber
\end{equation}
Furthermore, the solutions $\Psi_{j}(x,t,k)$ and $\mu_{j}(x,t,k)$ can be represented as the Volterra integral equations
\begin{equation}
\begin{split}
&\Psi_{j}(x,t,k)=\Psi_{j}^{\pm\infty}(x,t,k)+\int_{\pm\infty}^{x}\Lambda(x,y,t,k)\Lambda^{\natural}(y,t,k)\Psi_{j}(y,t,k)\mathrm{d}y.\\
&\begin{split}\mu_{j}(x,t,k)=
&e^{i(\omega_{j} t-B_{j}x)\widehat{\sigma}_{3}}Y_{j}(k)\\
&+\int_{\pm\infty}^{x}\Lambda(x,y,t,k)\Lambda^{\natural}(y,t,k)\mu_{j}(y,t,k)e^{-i(X_{j}+B_{j})(y-x)}\mathrm{d}y.
\end{split}
\end{split}\nonumber
\end{equation}
where
\begin{equation}
\begin{split}
&\Lambda(x,y,t,k)=\Psi_{j}^{\pm\infty}(x,t,k)(\Psi_{j}^{\pm\infty})^{-1}(y,t,k),\\ &\Lambda^{\natural}(y,t,k)=k(Q-Q_{j}^{\pm\infty})(y,t)+\frac{i}{2}|q|^{2}(y,t)\sigma_{3}-\frac{i}{2}A_{j}^{2}\sigma_{3}.
\end{split}
\end{equation}
The existence, analyticity and differentiation of $\Psi_{j}$ and $\mu_{j}$ can be proven directly, here we just list their properties, for details, see \cite{xu2013}.
\begin{lemma}
The Jost solutions $\Psi_{j}(x,t,k)$ of (\ref{laxpair1}) admit the following properties:\\
$\blacktriangleright$ $\Psi_{1}^{(1)}(x,t,k)$ analytic in $\{k\in\mathbb{C}\mid\mathrm{Im}k^{2}>0\}$ and $\Psi_{2}^{(2)}$ analytic in $\{k\in\mathbb{C}\mid\mathrm{Im}k^{2}>0\}$, $\Psi_{1}^{(2)}(x,t,k)$ analytic in $\{k\in\mathbb{C}\mid\mathrm{Im}k^{2}<0\}$ and $\Psi_{2}^{(1)}$ analytic in $\{k\in\mathbb{C}\mid\mathrm{Im}k^{2}<0\}$, where $\gamma_{j}\cup\overline{\gamma}_{j}=\{k\in\mathbb{C}\mid\mathrm{Re}^{2}k-\mathrm{Im}^{2}k=B,\mathrm{Re}^{2}k\leq B_{j}+A_{j}^{2}/4\}$. The distribution of analysis region see the first one in Figure \ref{con1};\\
$\blacktriangleright$ $\Psi_{j}(k)$ satisfy the symmetries
\begin{equation}
\Psi_{j}(k)=\sigma_{1}\sigma_{3}\overline{\Psi_{j}(\overline{k})}\sigma_{3}\sigma_{1},\quad \sigma_{1}=\left(
\begin{array}{cc}
0&1\\
1&0
\end{array}
\right);\nonumber
\end{equation}
$\blacktriangleright$ $(\Psi_{0j}^{-1}\Psi_{j})^{(1)}(x,t,k)$ and $(\Psi_{0j}^{-1}\Psi_{j})^{(2)}$ admit the asymptotic
\begin{equation}
\begin{split}
&\left((\Psi_{j}^{\pm\infty})^{-1}\Psi_{j}\right)^{(1)}(x,t,k)=e^{(1)}+\mathcal{O}(k^{-1}),\quad k\rightarrow\infty;\\
&\left((\Psi_{j}^{\pm\infty})^{-1}\Psi_{j}\right)^{(2)}(x,t,k)=e^{(2)}+\mathcal{O}(k^{-1}),\quad k\rightarrow\infty;\nonumber
\end{split}
\end{equation}
where the superscripts $\left((\Psi_{j}^{\pm\infty})^{-1}\Psi_{j}\right)^{(j)}$ denote the $j^{th}$ column of the matrix $(\Psi_{j}^{\pm\infty})^{-1}\Psi_{j}$, $e^{(j)}$ denotes the $j^{th}$ column of a identity matrix $I_{2\times2}$.
\end{lemma}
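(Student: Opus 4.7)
The plan is to base all three assertions on the Volterra integral representation for $\Psi_j$ established in this subsection, together with the explicit form $\Psi_j^{\pm\infty}=e^{(-iB_jx+i\omega_jt)\sigma_3}Y_j(k)e^{-i(xX_j+t\Omega_j)\sigma_3}$ of the background solution. The main inputs will be the identities $\overline{X_j(\overline k)}=X_j(k)$, $\overline{\Omega_j(\overline k)}=\Omega_j(k)$, $\overline{y_j(\overline k)}=y_j^{-1}(k)$ (each an immediate consequence of the real coefficients in the defining biquadratics), together with the large-$k$ expansions $X_j(k)=k^2-B_j+\mathcal O(k^{-2})$ and $y_j(k)=1+\mathcal O(k^{-1})$ already stated in the text.

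For the analyticity I would pass to $\mu_j$, whose Volterra equation has off-diagonal oscillatory factor $e^{-2i(X_j+B_j)(y-x)}$. One checks by direct computation, using $X_j^2-(k^2-B_j)^2=B_jA_j^2+A_j^4/4\in\mathbb{R}$, that on the sheet fixed by $X_j\sim k^2$ at infinity the sign of $\mathrm{Im}(X_j+B_j)$ is definite on each connected component of $\mathbb{C}\setminus(\gamma_j\cup\overline\gamma_j)$ and agrees with the sign of $\mathrm{Im}(k^2)$. Thus for $j=1$ (Volterra anchored at $-\infty$, so $y<x$) the exponential is bounded on the first column precisely when $\mathrm{Im}(k^2)>0$ and on the second column when $\mathrm{Im}(k^2)<0$; for $j=2$ (anchored at $+\infty$, $y>x$) the signs reverse, which isolates exactly the four domains in the statement. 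Standard Neumann-series estimates for a Volterra equation with bounded kernel of class $L^1$ in $y$ then give uniform convergence on compact subsets and analyticity in $k$.

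The symmetry follows from the pointwise identity $\sigma_1\sigma_3\overline{U(x,t,k)}\sigma_3\sigma_1=U(x,t,\overline k)$, and its analogue for $V$, which is immediate from $\sigma_1\sigma_3 Q\sigma_3\sigma_1=-\overline Q$ together with the invariance of $|q|^2\sigma_3$ under the combined conjugation. Checking that $\Psi_j^{\pm\infty}$ enjoys the same symmetry---via the three real-coefficient identities listed above---shows that both $\Psi_j(k)$ and $\sigma_1\sigma_3\overline{\Psi_j(\overline k)}\sigma_3\sigma_1$ solve the same Volterra equation with identical prescribed asymptotics at $x\to\pm\infty$, so uniqueness forces them to coincide. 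The large-$k$ expansion of $(\Psi_j^{\pm\infty})^{-1}\Psi_j$ is then obtained by plugging the expansions of $X_j$ and $y_j$ into that same Volterra equation and performing one integration by parts to transfer the lone power of $k$ in $\Lambda^{\natural}\sim kQ$ onto the oscillatory exponential; the boundary term yields the identity column $e^{(j)}$, and the residual integral is $\mathcal O(k^{-1})$ by the integrability of $Q-Q_j^{\pm\infty}$.

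The hardest part I expect is the branch-cut bookkeeping in the analyticity step. The cuts $\gamma_j\cup\overline\gamma_j$ are non-trivial hyperbolic arcs in the $k$-plane, so one must verify that the sheet of $X_j$ singled out by the large-$k$ normalisation really does give $\mathrm{Im}(X_j+B_j)$ a definite sign throughout each half-plane $\{\mathrm{Im}(k^2)\gtrless 0\}$, and that no puncture of the claimed analyticity domain occurs at the endpoints of the cuts. Because the step-like profile is asymmetric, each Jost solution lives on its own Riemann surface of $X_j$, so the cut-by-cut estimates must be carried out separately for $j=1$ and $j=2$ and only reconciled when the full RH problem is assembled.
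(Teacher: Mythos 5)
The paper does not actually prove this lemma: it states explicitly that the existence, analyticity and asymptotics of $\Psi_j$ and $\mu_j$ ``can be proven directly'' and defers all details to the reference \cite{xu2013}. Your proposal is the standard argument that that reference carries out — Volterra/Neumann-series analyticity governed by the sign of $\mathrm{Im}(X_j+B_j)$ (which, on the sheet with $X_j\sim k^2$, agrees with $\mathrm{sign}\,\mathrm{Im}(k^2)$ off the cuts), symmetry of the Lax pair plus uniqueness of the Volterra solution, and term-by-term large-$k$ expansion — so it is essentially the intended proof. One small correction: $\sigma_1\sigma_3 Q\sigma_3\sigma_1=\overline{Q}$, not $-\overline{Q}$ (for $Q=\bigl(\begin{smallmatrix}0&q\\-\overline q&0\end{smallmatrix}\bigr)$ one has $\sigma_3Q\sigma_3=-Q$ and $\sigma_1Q\sigma_1=-\overline Q$); the combined identity $\sigma_1\sigma_3\overline{U(x,t,\overline k)}\sigma_3\sigma_1=U(x,t,k)$ that you actually need still holds, so the symmetry argument goes through unchanged.
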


\subsection{scattering~data}\label{sec2.2}
For the reason that $\Psi_{j}(x,t,k)$ are solutions of the Lax pair (\ref{laxpair1}), there exist a scattering matrix $S(\lambda)$ obey the scattering relation
\begin{equation}
\Psi_{2}(x,t,k)=\Psi_{1}(x,t,k)S(k), \quad k\in\mathbb{R}, \quad k\neq B_{j}.\label{psis}
\end{equation}
From the symmetries of $\Psi_{j}(x,t,k)$, the scattering matrix $S(k)$ with the following structure
\begin{equation}
S(k)=\left(
\begin{array}{cc}
\overline{a(\overline{k})}&b(k)\\
-\overline{b(\overline{k})}&a(k)
\end{array}
\right),\nonumber
\end{equation}
and $\det{S}(k)=1$. From the analyticities of $\Psi_{j}(x,t,k)$, one can derive the analyticities of $a(k)$ and $\overline{a}(k)$ are in $\mathbb{C}^{+}\backslash\{\gamma_{j}\cup\overline{\gamma}_{j}\}_{j=1}^{2}$ and $\mathbb{C}^{-}\backslash\{\gamma_{j}\cup\overline{\gamma}_{j}\}_{j=1}^{2}$, respectively. The scatter coefficient $r(k)=\frac{\overline{b(\overline{k})}}{a(k)}$. And $a(k)$ with the asymptotic
$a(k)=1+\mathcal{O}(k^{-1})$ for $k\rightarrow\infty$.  Generally, the map  $q\rightarrow\{a,b,r\}$ is the direct scattering map.

From the analyticities and asymptotics of $\Psi_{j}(x,t,k)$, a piecewise matrix function $m(x,t,k)$ is given by
\begin{eqnarray}
m(x,t,k)=\begin{cases}
\begin{split}
\left(\frac{\Psi_{1}^{(1)}e^{it\theta}}{a},\Psi_{2}^{(2)}e^{-it\theta}\right), \quad \Omega^{+}\setminus\gamma_{j},\\
\left(\Psi_{2}^{(1)}e^{it\theta},\frac{\Psi_{1}^{(2)}e^{-it\theta}}{\overline{a}}\right), \quad \Omega^{-}\setminus\overline{\gamma}_{j},
\end{split}\label{lambda}
\end{cases}\label{M}
\end{eqnarray}
where $\theta=2k^{4}+\xi k^{2}$, $\xi=x/t$ and  the analyticity regions are defined by $\Omega^{\pm}=\{k\in\mathbb{C}\mid\pm\mathrm{Im}k^{2}>0\}$.
This matrix function $m(x,t,k)$ admits the jump condition
\begin{equation}
m_{+}(x,t,k)=m_{-}(x,t,k)J(x,t,k),\quad k^{2}\in\complement.
\end{equation}
The jump contour $\complement$ can be viewed as the
boundary of the regions $\Omega^{\pm}$.
The jump matrix $J(x,t,k)$ is given by
\begin{equation}
J(x,t,k)=\left(
\begin{array}{cc}
1+r\overline{r}&\overline{r}\\
r&1
\end{array}
\right).
\end{equation}
Due to the multi-value of spectrum parameter $k^{2}$,
we reduce by symmetry from
scattering data on the oriented contour $\complement$ to scattering data on the oriented contour $\mathbb{R}$.
Both contours with orientation are shown in Figure \ref{con1}.
Introduce the transformation \cite{kdj1978,lka2017}
\begin{equation}
\tilde{m}(x,t,\lambda)=\bigtriangledown k^{-\frac{\widehat{\sigma}_{3}}{2}}m(x,t,k),\label{ktolambda}
\end{equation}
where
$\bigtriangledown=\left(
\begin{array}{cc}
1&0\\
-\frac{i}{2}\overline{q}&1
\end{array}
\right)$.
Using transformation (\ref{ktolambda}), we can now reduce the spectrum problem with $k\in\complement$ to $\lambda\in\mathbb{R}$.
And $\tilde{m}$ satisfies the asymptotic $\tilde{m}\rightarrow I$ as $\lambda\rightarrow\infty$. The modified scattering coefficient $\varrho(\lambda)=\frac{r(k)}{k}$.
$X_{j}^{2}$ and $\Omega_{j}$ rewritten as
\begin{equation}
X_{j}^{2}(\lambda)=(\lambda-B_{j})^{2}+\frac{A_{j}^{4}}{4}+A_{j}^{2}B_{j},\quad
\Omega_{j}(\lambda)=2(\lambda+B_{j})X_{j}(\lambda).\nonumber
\end{equation}
The branch points $E_{j}=B_{j}+iD_{j}$, where $D_{j}^{2}=\frac{A_{j}^{4}}{4}+A_{j}^{2}B_{j}$. The branch cuts $\gamma_{j}=[B_{j},E_{j}]$ and $\overline{\gamma}_{j}=[\overline{E}_{j},B_{j}]$.
For $\lambda\rightarrow\infty$, $X_{j}(\lambda)$ and $\Omega_{j}(\lambda)$ with the asymptotics
\begin{equation}
X_{j}(\lambda)=\lambda-B+\mathcal{O}(\lambda^{-1}),\quad \Omega_{j}(\lambda)=2\lambda^{2}+\omega_{j}+\mathcal{O}(\lambda^{-1}).\quad \lambda\rightarrow\infty.\nonumber
\end{equation}
Functions $\tilde{m}_{\pm}(x,t,\lambda)$ are analyticity in the regions $\mathbb{C}^{\pm}\backslash\{\gamma_{j}\cup\overline{\gamma}_{j}\}$ ($\gamma_{j}$ and $\overline{\gamma}_{j}$ see Figure \ref{con2})
and satisfy the jump condition
\begin{equation}
\tilde{m}_{+}(x,t,\lambda)=\tilde{m}_{-}(x,t,\lambda)e^{-it(\xi\lambda+2\lambda^{2})\hat{\sigma}_{3}}\tilde{J}(x,t,\lambda),
\end{equation}
where jump matrix $\tilde{J}(x,t,\lambda)$ is given by
\begin{equation}
\tilde{J}(x,t,\lambda)=
\left(
\begin{array}{cc}
1+\lambda\varrho\overline{\varrho}&\overline{\varrho}\\
\lambda\varrho&1
\end{array}
\right).\quad~ \lambda\in\mathbb{R}.\\
\end{equation}

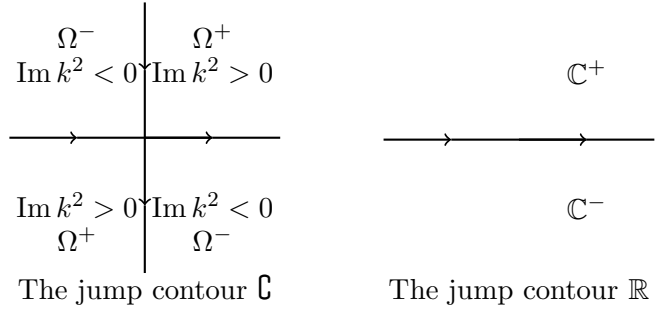
\begin{figure}[H]
\begin{center}
\begin{tikzpicture}[scale=0.45]
\draw[thick,->] (0,4) -- (0,2);
\draw[thick,->] (0,2) -- (0,-2);
\draw[thick,-] (0,-2) -- (0,-4);
\draw[thick,->] (-4,0)--(-2,0);
\draw[thick,-] (-2,0)--(2,0);
\draw[thick,->] (0,0)--(2,0);
\draw[thick,-] (2,0) -- (4,0);
\draw [] (2,2) circle [radius=0] node[] {$\im{k^{2}}>0$};
\draw [] (-2,-2) circle [radius=0] node[] {$\im{k^{2}}>0$};
\draw [] (-2,2) circle [radius=0] node[] {$\im{k^{2}}<0$};
\draw [] (2,-2) circle [radius=0] node[] {$\im{k^{2}}<0$};
\draw [] (2,3) circle [radius=0] node[] {$\Omega^{+}$};
\draw [] (-2,-3) circle [radius=0] node[] {$\Omega^{+}$};
\draw [] (-2,3) circle [radius=0] node[] {$\Omega^{-}$};
\draw [] (2,-3) circle [radius=0] node[] {$\Omega^{-}$};
\draw [] (0,-4.5) circle [radius=0] node[] {The jump contour $\complement$};
\end{tikzpicture}
\quad \quad\quad
\begin{tikzpicture}[scale=0.45]
\draw[white] (0,-4) -- (0,4);
\draw[thick,->] (-4,0)--(-2,0);
\draw[thick,-] (-2,0)--(2,0);
\draw[thick,->] (0,0)--(2,0);
\draw[thick,-] (2,0) -- (4,0);
\draw [] (2,2) circle [radius=0] node[] {$\mathbb{C}^{+}$};
\draw [] (2,-2) circle [radius=0] node[] {$\mathbb{C}^{-}$};
\draw [] (0,-4.5) circle [radius=0] node[] {The jump contour $\mathbb{R}$};
\end{tikzpicture}
\end{center}
\caption{The jump contour $\complement=\mathbb{R}\cup i\mathbb{R}$ and $\mathbb{R}$. In this figure we omit the branch cuts. }\label{con1}
\end{figure}

Now, we successfully map the spectral problem of $k$-plane to $\lambda$-plane. For the reason that there exist branch cuts $\gamma_{j}\cup\overline{\gamma}_{j}$, the function $\tilde{m}(x,t,\lambda)$  does not continuation as $\lambda\in\gamma_{j}\cup\overline{\gamma}_{j}$. We consider the jump condition about the branch cuts $\gamma_{j}\cup\overline{\gamma}_{j}$ as shown in Lemma \ref{lemma}:
\begin{lemma}\label{lemma}
For $\lambda\in\gamma_{j}$ or $\lambda\in\overline{\gamma}_{j}$, the jump matrices are given by
\begin{eqnarray}
\tilde{J}=\begin{cases}
\left(
\begin{array}{cc}
1&0\\
\lambda f(\lambda)&1
\end{array}
\right),~\lambda\in\gamma_{1}\\
\left(
\begin{array}{cc}
\frac{\tilde{a}_{-}}{\tilde{a}_{+}}&ie^{i\phi_{2}}\\
0&\frac{\tilde{a}_{+}}{\tilde{a}_{-}}
\end{array}
\right),\lambda\in\gamma_{2}
\end{cases},\quad
\tilde{J}=\begin{cases}
\left(
\begin{array}{cc}
1&-\overline{f(\overline{\lambda})}\\
0&1
\end{array}
\right),\quad \lambda\in\overline{\gamma}_{1}\\
\left(
\begin{array}{cc}
\frac{\tilde{\overline{a}}_{+}}{\tilde{\overline{a}}_{-}}&0\\
ie^{-i\phi_{2}}&\frac{\tilde{\overline{a}}_{-}}{\tilde{\overline{a}}_{+}}
\end{array}
\right),\lambda\in\overline{\gamma}_{2}
\end{cases}\nonumber
\end{eqnarray}
where $f(\lambda)=\varrho_{+}(\lambda)-\varrho_{-}(\lambda)$.
\end{lemma}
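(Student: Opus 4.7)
The plan is to pull each jump back through the gauge transformation (\ref{ktolambda}) and compute the jumps of $m$ across the branch cuts directly from the piecewise definition (\ref{M}) together with the known discontinuity of the plane-wave solutions $\Psi_j^{\pm\infty}$. Because the transformation $\tilde m=\bigtriangledown k^{-\widehat\sigma_3/2}m$ is multiplication by a matrix that is smooth across the cuts in the $\lambda$-variable (the factor $k^{-\widehat\sigma_3/2}$ is even in $k$, hence single-valued in $\lambda=k^2$ away from $\lambda=0$), the jump of $\tilde m$ is related to that of $m$ by a simple conjugation together with the scalar factor $\lambda$ that appears when moving off-diagonal entries through $k^{-\widehat\sigma_3/2}$. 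This explains why the cut $\gamma_1$ contribution will carry the prefactor $\lambda$.

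The cuts split into four cases that I would handle in turn. On $\gamma_1$ only the first column $\Psi_1^{(1)}e^{it\theta}/a$ of $m$ is discontinuous, since $\gamma_1$ is the branch cut of $\Psi_1^{\pm\infty}$ but not of $\Psi_2^{\pm\infty}$. The input is the jump $Y_{1+}=Y_{1-}\bigl(\begin{smallmatrix}0&ie^{i\phi_1}\\ ie^{-i\phi_1}&0\end{smallmatrix}\bigr)$, which propagates into a jump relation between $\Psi_{1+}^{(1)}$ and the two columns $\Psi_{1-}^{(1,2)}$ and, via (\ref{psis}), into a jump for $a(k)$ across $\gamma_1$. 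Combining the two, the off-diagonal factors $e^{\pm i\phi_1}$ cancel out and the first-column ratio collapses to a lower-triangular unipotent matrix whose off-diagonal entry is exactly $b_+/a_+-b_-/a_-$, i.e.\ $\varrho_+-\varrho_-=f(\lambda)$ up to the scalar $k$ absorbed into the conjugation. After applying $k^{-\widehat\sigma_3/2}$ the off-diagonal entry picks up the advertised factor $\lambda$.

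On $\gamma_2$ the discontinuous column is $\Psi_2^{(2)}e^{-it\theta}$, and the analogous jump $Y_{2+}=Y_{2-}\bigl(\begin{smallmatrix}0&ie^{i\phi_2}\\ ie^{-i\phi_2}&0\end{smallmatrix}\bigr)$ exchanges the two columns of $\Psi_2^{\pm\infty}$; reinserting via (\ref{psis}) produces an upper-triangular jump whose off-diagonal phase is precisely $ie^{i\phi_2}$ and whose diagonal entries are the boundary ratios $\tilde a_-/\tilde a_+$ and $\tilde a_+/\tilde a_-$ (here the $1/a$ sitting in the first column of $m$ compensates the $a$ picked up when rewriting $\Psi_{2\pm}^{(2)}$ in terms of $\Psi_1^{(1,2)}$). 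The conjugate-cut cases $\overline\gamma_1,\overline\gamma_2$ are then obtained for free from the symmetry $\Psi_j(k)=\sigma_1\sigma_3\overline{\Psi_j(\overline k)}\sigma_3\sigma_1$ stated in the previous lemma: conjugating by $\sigma_1\sigma_3$ and taking complex conjugates of the $\gamma_j$ jumps sends $f$ to $-\overline{f(\overline\lambda)}$, $e^{i\phi_2}$ to $e^{-i\phi_2}$, and flips the triangular structure, giving the remaining two matrices.

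The main obstacle I expect is the bookkeeping in the first two cases, specifically tracking how the jump of $a(k)$ interacts with the $1/a$ denominator in the first column of $m$, so that the surviving off-diagonal entry is exactly the combination $\varrho_+-\varrho_-$ rather than a more complicated rational function of $b_\pm,a_\pm$. A secondary subtlety is the scalar $\lambda$ that materialises only on the $\gamma_1$ cut but not on $\gamma_2$; this asymmetry is traced to the fact that the lower-triangular (resp.\ upper-triangular) jump on $\gamma_1$ (resp.\ $\gamma_2$) conjugates differently under $k^{-\widehat\sigma_3/2}$, since $k^{-\widehat\sigma_3/2}\bigl(\begin{smallmatrix}1&0\\ c&1\end{smallmatrix}\bigr)k^{\widehat\sigma_3/2}=\bigl(\begin{smallmatrix}1&0\\ kc&1\end{smallmatrix}\bigr)$ multiplies the off-diagonal by $k$ (hence $\lambda$ after a careful reading of the square-root conventions for $k$), while the corresponding conjugation on $\gamma_2$ pairs the phase $ie^{i\phi_2}$ with $k^{-1}$ and is absorbed into the definition of $\tilde a$. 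Once these two points are pinned down, the remaining verifications reduce to algebraic identities on $2\times 2$ matrices.
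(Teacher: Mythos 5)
Your plan follows essentially the same route as the paper: feed the explicit jump of the background eigenfunctions, $Y_{j+}=Y_{j-}\bigl(\begin{smallmatrix}0&ie^{i\phi_j}\\ ie^{-i\phi_j}&0\end{smallmatrix}\bigr)$, through the scattering relation (\ref{psis}) to get the boundary-value relations for the Jost solutions and for $a$, then read off the triangular jump of $m$ column by column (the determinant identity $\tilde S_{22\pm}=\tilde a_\pm$ doing the bookkeeping you worry about, so that the $\gamma_1$ entry collapses to $\varrho_+-\varrho_-$ and the $\gamma_2$ diagonal to $\tilde a_\mp/\tilde a_\pm$), with the conjugate cuts obtained from the symmetry. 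The one step you assert rather than argue is precisely where the paper spends the first half of its proof: the claim that the jump of the plane-wave background ``propagates'' verbatim to the Jost solution, and that each Jost solution is continuous across the \emph{other} cut. The paper justifies this by introducing the auxiliary functions $\aleph_j$ solving a Volterra equation whose kernel $\Psi_j^{\pm\infty}(x,t,\lambda)(\Psi_j^{\pm\infty})^{-1}(y,t,\lambda)$ is entire in $\lambda$ (the branch cut cancels in the product), so that $\Psi_{j\pm}=\aleph_j\Psi_{j\pm}^{\pm\infty}$ with $\aleph_j$ single-valued across the cuts; your plan should include this (or an equivalent uniqueness-of-Volterra-solutions argument), since without it the transfer of $Y_{j-}^{-1}Y_{j+}$ to $\Psi_j$ is not established. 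The remaining differences are cosmetic: you compute in the $k$-plane and push the result through $\bigtriangledown k^{-\widehat\sigma_3/2}$ to explain the factor $\lambda$, whereas the paper works directly with the $\lambda$-plane data $\tilde S_\pm$, $\tilde a$, $\varrho=r/k$, where that factor is built into the definitions; both accounts of the $\lambda$ prefactor agree.
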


\begin{proof}
For $\lambda\in\gamma_{j}\cup\overline{\gamma}_{j}$, introduce the
\begin{equation}
\aleph_{j}(x,t,\lambda)=I+\int_{\pm\infty}^{x}\Lambda(x,y,t,\lambda)\Lambda^{\natural}(y,t,\lambda)\aleph_{j}(y,t,\lambda)\Psi_{j}^{\pm\infty}(y,t,\lambda)(\Psi_{j}^{\pm\infty})^{-1}(x,t,\lambda)\mathrm{d}y.
\end{equation}
For every fixed $(y,t)$, the function $\Psi_{j}^{\pm\infty}(x,t,\lambda)(\Psi_{j}^{\pm\infty})^{-1}(y,t,\lambda)$ is a solution of the $x$-part with $q$ replaced by $q_{j}^{\pm\infty}$. Since this solution equals the identity matrix at $x=y$ and the matrix $M$ in the Lax pair (\ref{laxpair1}) is a polynomial in $\lambda$, we conclude that $\Psi_{j}^{\pm\infty}(x,t,\lambda)(\Psi_{j}^{\pm\infty})^{-1}(y,t,\lambda)$ is an entire function of $\lambda$, well defined for $\lambda\in\gamma_{1}\cup\gamma_{2}$. Thus, $\Psi_{j\pm}$ and $\aleph_{j}\Psi_{j\pm}^{\pm\infty}$ solve the same integral equation for $\lambda\in\Sigma_{j'},j'\neq j$. Hence, $\Psi_{1\pm}(x,t,\lambda)$ and $\Psi_{2\pm}(x,t,\lambda)$ can be written as follows for $\lambda\in\gamma_{1}\cup\overline{\gamma}_{1}\cup\gamma_{2}\cup\overline{\gamma}_{2}$:
\begin{subequations}
\begin{align}
&\Psi_{1\pm}=\aleph_{1}\Psi_{1\pm}^{-\infty},\quad \Psi_{2}=\aleph_{2}\Psi_{2}^{+\infty}\quad \lambda\in\gamma_{1}\cup\overline{\gamma}_{1},\\
&\Psi_{2\pm}=\aleph_{2}\Psi_{2\pm}^{+\infty},\quad \Psi_{1}=\aleph_{1}\Psi_{1}^{-\infty}\quad \lambda\in\gamma_{2}\cup\overline{\gamma}_{2},
\end{align}
\end{subequations}
 The scattering matrix $\tilde{S}_{\pm}(\lambda)$ on branch cuts $\gamma_{1}\cup\overline{\gamma}_{1}\cup\gamma_{2}\cup\overline{\gamma}_{2}$  and $\det{\tilde{S}}_{\pm}(\lambda)=1$. There exist the relations
 \begin{subequations}
 \begin{align}
 &\Psi_{2\pm}(x,t,\lambda)=\Psi_{1}(x,t,\lambda)\tilde{S}_{\pm}(\lambda),\quad \lambda\in\gamma_{2}\cup\overline{\gamma}_{2},\\
 &\Psi_{2}(x,t,\lambda)=\Psi_{1\pm}(x,t,\lambda)\tilde{S}_{\pm}(\lambda),\quad \lambda\in\gamma_{1}\cup\overline{\gamma}_{1},
 \end{align}\label{sigma12}
 \end{subequations}
 For $\lambda\in\gamma_{2}\cup\overline{\gamma}_{2}$, one can derive $\tilde{S}_{\pm}(\lambda)=\Psi_{1}^{-1}(x,t,\lambda)\aleph_{2}(x,t,\lambda)\Psi_{2\pm}^{+\infty}(x,t,\lambda)$.
Letting $x=t=0$, one have $\tilde{S}_{\pm}(\lambda)=P_{2}(\lambda)Y_{2\pm}(\lambda)$, where $P_{2}(\lambda)=\Psi_{1}^{-1}(0,0,\lambda)\aleph_{2}(0,0,\lambda)$. Hence
 \begin{equation}
 \tilde{S}_{+}(\lambda)=\tilde{S}_{-}(\lambda)\left(
 \begin{array}{cc}
 0&ie^{i\phi_{2}}\\
 ie^{-i\phi_{2}}&0
 \end{array}
 \right),\quad \lambda\in\gamma_{2}\cup\overline{\gamma}_{2}.
 \end{equation}
This implies that
 \begin{equation}
 \tilde{S}_{12+}=ie^{i\phi_{2}}\tilde{S}_{11-},\quad \tilde{S}_{22+}=ie^{i\phi_{2}}\tilde{S}_{21-}.
 \end{equation}
 The jump relation across $\gamma_{2}$ follows
 \begin{equation}
 \left(
 \begin{array}{cc}
 \frac{\Psi_{1}^{(1)}}{\tilde{a}_{+}}&\Psi_{2+}^{(2)}
 \end{array}
 \right)=\left(
 \begin{array}{cc}
 \frac{\Psi_{1}^{(1)}}{\tilde{a}_{-}}&\Psi_{2-}^{(2)}
 \end{array}
 \right)\left(
 \begin{array}{cc}
 \frac{\tilde{a}_{-}}{\tilde{a}_{+}}&c_{2}\\
 0&\frac{\tilde{a}_{+}}{\tilde{a}_{-}}
 \end{array}
 \right),
 \end{equation}
 where $c_{2}$ is some function about $\lambda$. Furtherly,
 \begin{equation}
 \frac{\Psi_{2+}^{(2)}}{\tilde{a}_{+}}-\frac{\Psi_{2-}^{(2)}}{\tilde{a}_{-}}=\frac{c_{2}}{\tilde{a}_{+}\tilde{a}_{-}}\Psi_{1}^{(1)}.
 \end{equation}
From (\ref{sigma12}),
 \begin{equation}
 \Psi_{2\pm}^{(2)}=\tilde{S}_{12\pm}\Psi_{1}^{(1)}+\tilde{S}_{22\pm}\Psi_{1}^{(2)}.
 \end{equation}
 One can derive $\tilde{S}_{22\pm}=\det{\left(
 \begin{array}{cc}
 \Psi_{1}^{(1)}&\Psi_{2\pm}^{(2)}
 \end{array}
 \right)}=\tilde{a}_{\pm}$ and
  \begin{equation}
 \frac{\Psi_{2+}^{(2)}}{\tilde{a}_{+}}-\frac{\Psi_{2-}^{(2)}}{\tilde{a}_{-}}=\left(\frac{\tilde{S}_{12+}}{\tilde{S}_{22+}}-\frac{\tilde{S}_{12-}}{\tilde{S}_{22-}}\right)\Psi_{1}^{(1)}.
 \end{equation}
Hence
  \begin{equation}
\frac{\tilde{S}_{12+}}{\tilde{S}_{22+}}-\frac{\tilde{S}_{12-}}{\tilde{S}_{22-}}=ie^{i\phi_{2}}\frac{\tilde{S}_{11-}\tilde{S}_{22-}-\tilde{S}_{12-}\tilde{S}_{21-}}{\tilde{S}_{22+}\tilde{S}_{22-}}=\frac{ie^{i\phi_{2}}}{\tilde{a}_{+}\tilde{a}_{-}}.
 \end{equation}
 Thus $c_{2}=ie^{i\phi_{2}}$.

  For $k\in\gamma_{1}\cup\overline{\gamma}_{1}$, one can derive $\tilde{S}_{\pm}(\lambda)=(\Phi_{1\pm}^{-\infty})^{-1}(x,t,\lambda)\aleph_{1}(x,t,\lambda)\Phi_{2}(x,t,\lambda)$.
Letting $x=t=0$, one have $\tilde{S}_{\pm}(\lambda)=Y_{1\pm}^{-1}(\lambda)P_{1}(k)$, where $P_{1}(\lambda)=\aleph_{1}^{-1}(0,0,\lambda)\Phi_{2}(0,0,\lambda)$. Hence
 \begin{equation}
 \tilde{S}_{-}(\lambda)\tilde{S}_{+}^{-1}(\lambda)=Y_{1-}^{-1}Y_{1+}=\left(
 \begin{array}{cc}
 0&ie^{i\phi_{1}}\\
 ie^{-i\phi_{1}}&0
 \end{array}
 \right).
 \end{equation}
This implies that
  \begin{equation}
 \tilde{S}_{-}(\lambda)=\left(
 \begin{array}{cc}
 0&ie^{i\phi_{1}}\\
 ie^{-i\phi_{1}}&0
 \end{array}
 \right)\tilde{S}_{+}(k).
 \end{equation}
That is
 \begin{equation}
 \tilde{S}_{21-}=ie^{-i\phi_{1}}\tilde{S}_{11+},\quad \tilde{S}_{22-}=ie^{-i\phi_{1}}\tilde{S}_{12+},\quad \lambda\in\gamma_{1}\cup\overline{\gamma}_{1}.
 \end{equation}
 According the jump relation across $\gamma_{2}$ as
 \begin{equation}
 \left(
 \begin{array}{cc}
 \frac{\Psi_{1+}^{(1)}}{\tilde{a}_{+}}&\Psi_{2}^{(2)}
 \end{array}
 \right)=\left(
 \begin{array}{cc}
 \frac{\Psi_{1-}^{(1)}}{\tilde{a}_{-}}&\Psi_{2}^{(2)}
 \end{array}
 \right)\left(
 \begin{array}{cc}
 1&0\\
 c_{1}&1
 \end{array}
 \right),
 \end{equation}
 where $c_{1}$ is some function about $\lambda$.
 \begin{equation}
 \frac{\Psi_{1+}^{(1)}}{\tilde{a}_{+}}-\frac{\Psi_{1-}^{(1)}}{\tilde{a}_{-}}=c_{1}\Psi_{2}^{(2)}.
 \end{equation}
From (\ref{sigma12}),
 \begin{equation}
 \Psi_{1\pm}^{(1)}=\tilde{S}_{22\pm}\Psi_{2}^{(1)}-\tilde{S}_{21\pm}\Psi_{2}^{(2)}.
 \end{equation}
 One can derive $\tilde{S}_{22\pm}=\det{\left(
 \begin{array}{cc}
 \Psi_{1}^{(1)}&\Psi_{2\pm}^{(2)}
 \end{array}
 \right)}=\tilde{a}_{\pm}$. Thus
 \begin{equation}
 \frac{\Psi_{1+}^{(1)}}{\tilde{a}_{+}}-\frac{\Psi_{1-}^{(1)}}{\tilde{a}_{-}}=\left(\frac{\tilde{S}_{21-}}{\tilde{S}_{22-}}-\frac{\tilde{S}_{21+}}{\tilde{S}_{22+}}\right)\Psi_{2}^{(2)}.
 \end{equation}
As above, one can derive
  \begin{equation}
\frac{\tilde{S}_{21-}}{\tilde{S}_{22-}}-\frac{\tilde{S}_{21+}}{\tilde{S}_{22+}}=\frac{ie^{i\phi_{1}}}{\tilde{a}_{+}\tilde{a}_{-}}.
 \end{equation}
 Thus $c_{1}=\frac{ie^{-i\phi_{1}}}{\tilde{a}_{+}\tilde{a}_{-}}$.\end{proof}
Furthermore, we note that the scattering datas obey the following relationships from the proof of Lemma \ref{lemma}
\begin{equation}
\begin{cases}
\tilde{a}_{+}=-ie^{-i\phi_{1}}\tilde{b}_{-}\\
\tilde{b}_{+}=-ie^{i\phi_{1}}\tilde{a}_{-}
\end{cases},\quad \lambda\in\gamma_{1}\cup\overline{\gamma}_{1},
\quad
\begin{cases}
\tilde{a}_{+}=-ie^{i\phi_{2}}\tilde{\overline{b}}_{-}\\
\tilde{b}_{+}=ie^{i\phi_{2}}\tilde{\overline{a}}_{-}
\end{cases},\quad \lambda\in\gamma_{2}\cup\overline{\gamma}_{2}.\nonumber
\end{equation}

\begin{figure}[H]
\begin{center}
\begin{tikzpicture}[scale=0.65]
\draw[thick,->] (-6,0)--(-3,0);
\draw[thick,->] (-3,0)--(0,0);
\draw[thick,->] (0,0)--(3,0);
\draw[thick,-] (3,0) -- (6,0);
\draw[thick,->] (-1.5,-2) -- (-1.5,-1);
\draw[thick,->] (-1.5,-1) -- (-1.5,1);
\draw[thick,-] (-1.5,1) -- (-1.5,2);
\draw[thick,->] (1.5,-2) -- (1.5,-1);
\draw[thick,->] (1.5,-1) -- (1.5,1);
\draw[thick,-] (1.5,1) -- (1.5,2);
\draw [] (1.8,2) circle [radius=0] node[right] {$E_{2}$};
\draw [] (-2.5,2) circle [radius=0] node[right] {$E_{1}$};
\draw [] (-2.5,-2) circle [radius=0] node[right] {$\overline{E}_{1}$};
\draw [] (1.8,-2) circle [radius=0] node[right] {$\overline{E}_{2}$};
\draw [] (4.4,0) circle [radius=0] node[below] {$\mathbb{R}$};
\draw [] (-1.5,1) circle [radius=0] node[left] {$\gamma_{1}$};
\draw [] (1.5,1) circle [radius=0] node[right] {$\gamma_{2}$};
\draw [] (-1.5,-1) circle [radius=0] node[left] {$\overline{\gamma}_{1}$};
\draw [] (1.5,-1) circle [radius=0] node[right] {$\overline{\gamma}_{2}$};
\draw [] (-1.0,0) circle [radius=0] node[below] {$-B$};
\draw [] (1.0,0) circle [radius=0] node[below] {$B$};
\end{tikzpicture}
\end{center}
\caption{The jump contour $\Sigma=\mathbb{R}\cup\gamma_{1}\cup\overline{\gamma}_{1}\cup\gamma_{2}\cup\overline{\gamma}_{2}$.}\label{con2}
\end{figure}
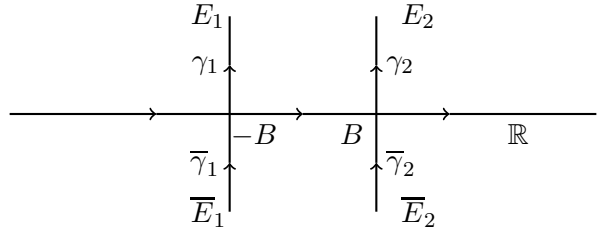
From the Lemma \ref{lemma}, we summarize the function $\tilde{m}(x,t,\lambda)$ satisfies the RH problem:
\begin{prob}
\begin{itemize} $\tilde{m}(x,t,\lambda)$ satisfies the RH problem
\item  $\tilde{m}(x,t,\lambda)$ is analytical in $\mathbb{C}\backslash\Sigma$, where jump contour $\Sigma$ see Figure \ref{con2}.
\item  $\tilde{m}(x,t,\lambda)$ satisfies the jump condition
\begin{equation}
\tilde{m}_{+}(x,t,\lambda)=\tilde{m}_{-}(x,t,\lambda)e^{-it(\xi\lambda+2\lambda^{2})\hat{\sigma}_{3}}\tilde{J}(x,t,\lambda),
\end{equation}
where the jump matrix $\tilde{J}(x,t,\lambda)$ is given by
\begin{equation}
\tilde{J}(x,t,\lambda)=\begin{cases}
\left(
\begin{array}{cc}
1+\lambda\varrho\overline{\varrho}&\overline{\varrho}\\
\lambda\varrho&1
\end{array}
\right),\quad~ \lambda\in\mathbb{R},\\
\left(
\begin{array}{cc}
1&0\\
\lambda f(\lambda)&1
\end{array}
\right),\quad\quad\lambda\in\gamma_{1},\\
\left(
\begin{array}{cc}
\frac{\tilde{a}_{-}}{\tilde{a}_{+}}&i\\
0&\frac{\tilde{a}_{+}}{\tilde{a}_{-}}
\end{array}
\right),\quad\quad~\lambda\in\gamma_{2},\\
\left(
\begin{array}{cc}
1&-\overline{f(\overline{\lambda})}\\
0&1
\end{array}
\right),\quad~~ \lambda\in\overline{\gamma}_{1},\\
\left(
\begin{array}{cc}
\frac{\tilde{\overline{a}}_{+}}{\tilde{\overline{a}}_{-}}&0\\
i&\frac{\tilde{\overline{a}}_{-}}{\tilde{\overline{a}}_{+}}
\end{array}
\right),\quad\quad~\lambda\in\overline{\gamma}_{2}.
\end{cases}\nonumber
\end{equation}
\item $\tilde{m}(x,t,\lambda)$ satisfies the asymptotic behavior
\begin{equation}
\tilde{m}(x,t,\lambda)\rightarrow I,\quad \lambda\rightarrow\infty.
\end{equation}
\end{itemize}
\end{prob}

\subsection{the basic Riemann-Hilbert problem}
It is necessary to regularize function $\tilde{m}(x,t,\lambda)$ to get the RH problem of subsequent deformations in the $L^{2}$. Define function $\hat{m}(x,t,\lambda)$ by
\begin{equation}
\hat{m}(x,t,\lambda)=\tilde{m}(x,t,\lambda)\nu^{-\sigma_{3}}(\lambda),\nonumber
\end{equation}
where $\nu=\left(\frac{\lambda-E_{1}}{\lambda-\overline{E}_{1}}\right)^{1/4}$ admits $\nu_{+}=i\nu_{-}$. This transformation implies that
$\hat{a}\hat{\overline{a}}=a\overline{a},~ \hat{b}\hat{\overline{b}}=b\overline{b},~ \hat{\varrho}=\varrho\nu^{-2},~\hat{\varrho}\hat{\overline{\varrho}}=\varrho\overline{\varrho}=|\varrho|^{2}$.
$\hat{a}$, $\hat{b}$ and $\frac{\hat{\overline{\varrho}}}{1+\lambda\hat{\varrho}\hat{\overline{\varrho}}}=\hat{a}\hat{b}$ are bounded near $E_{1}$, $\hat{\varrho}$ is bounded near $E_{2}$.

In order to study the long-time asymptotic of the step-like initial value problem for the DNLS equation (\ref{dnls3})
via the Deift-Zhou method, we refresh the RH problem based on the previous analyses as follow:
\begin{prob}\label{rhbsic}
\begin{itemize} $\hat{m}(x,t,\lambda)$ satisfies the RH problem
\item  $\hat{m}(x,t,\lambda)$ is analytical in $\mathbb{C}\backslash\Sigma$, where jump contour $\Sigma$ see Figure \ref{con2}.
\item  $\hat{m}(x,t,\lambda)$ satisfies the jump condition
\begin{equation}
\hat{m}_{+}(x,t,\lambda)=\hat{m}_{-}(x,t,\lambda)\hat{J}(\lambda),
\end{equation}
where the jump matrix
\begin{equation}
\hat{J}(x,t,\lambda)=e^{-it\theta(\lambda)\sigma_{3}}\hat{J}^{(0)}(\lambda)e^{it\theta(\lambda)\sigma_{3}},\quad \lambda\in\Sigma,
\end{equation}
and
\begin{equation}
\hat{J}^{(0)}(\lambda)=\begin{cases}
\left(\begin{array}{cc}
1+\lambda\hat{\varrho}\hat{\overline{\varrho}}&\hat{\overline{\varrho}}\\
\lambda\hat{\varrho}&1
\end{array}
\right),\quad \lambda\in\mathbb{R},\\
\left(\begin{array}{cc}
-i&0\\
\lambda\hat{f}&i
\end{array}
\right),\quad\quad\quad \lambda\in\gamma_{1},\\
\left(\begin{array}{cc}
-i&-\hat{\overline{f}}\\
0&i
\end{array}
\right),\quad\quad~ \lambda\in\overline{\gamma}_{1},\\
\left(
\begin{array}{cc}
\frac{\hat{a}_{-}}{\hat{a}_{+}}&i\nu^{2}\\
0&\frac{\hat{a}_{+}}{\hat{a}_{-}}
\end{array}
\right),\quad\quad~ \lambda\in\gamma_{2},\\
\left(
\begin{array}{cc}
\frac{\hat{\overline{a}}_{+}}{\hat{\overline{a}}_{-}}&0\\
i\nu^{-2}&\frac{\hat{\overline{a}}_{-}}{\hat{\overline{a}}_{+}}
\end{array}
\right),\quad~ \lambda\in\overline{\gamma}_{2}.
\end{cases}\nonumber
\end{equation}
\item $\hat{m}(x,t,\lambda)$ satisfies the asymptotic behavior
\begin{equation}
\hat{m}(x,t,\lambda)\rightarrow I,\quad \lambda\rightarrow\infty.
\end{equation}
\end{itemize}
\end{prob}
The solution $q(x,t)$ of the DNLS equation (\ref{dnls3}) with the initial value is reconstructed by
\begin{equation}
q(x,t)=2i\lim_{\lambda\rightarrow\infty}\lambda\hat{m}_{12}(x,t,\lambda).
\end{equation}
The jump matrix $\hat{J}(x,t,\lambda)$ and $\hat{m}(x,t,\lambda)$ admit the following symmetries
\begin{subequations}
\begin{align}
&\hat{m}(x,t,\lambda)=\sigma_{1}\sigma_{3}\overline{\hat{m}(x,t,\overline{\lambda})}\sigma_{3}\sigma_{1},\quad\quad~ \lambda\in\mathbb{C}\backslash\Sigma,\nonumber\\
&\hat{J}(x,t,\lambda)=\begin{cases}
\sigma_{1}\sigma_{3}\overline{\hat{J}(x,t,\overline{\lambda})}\sigma_{3}\sigma_{1},\quad\quad \lambda\in\gamma_{1}\cup\overline{\gamma}_{1}\cup\gamma_{2}\cup\overline{\gamma}_{2},\\
\sigma_{1}\sigma_{3}\overline{\hat{J}(x,t,\overline{\lambda})}^{-1}\sigma_{3}\sigma_{1}, \quad \lambda\in\mathbb{R}.
\end{cases}\nonumber
\end{align}
\end{subequations}

\subsection{$g$-function}
Consider the initial value $D_{1}=D_{2}=D$, $B_{2}=-B_{1}=B=1$, $\phi_{1}=\phi$ and $\phi_{2}=0$,
for $\frac{D}{B}>1$ and $\varepsilon<|\xi|<\xi_{0}$, $\xi_{0}$ is some positive constant and $\varepsilon\in(0,\xi_{0})$, an $g$-function mechanism need to introduce for solving the genus 3 asymptotics.
Define the $g$-function by
\begin{equation}
g(\lambda)=\int_{\overline{E}_{2}}^{\lambda}\mathrm{d}g,\quad \lambda\in\mathbb{C}\backslash\Sigma^{mod},\label{g}
\end{equation}
where
$\Sigma^{mod}=\gamma_{1}\cup\overline{\gamma}_{1}\cup\gamma_{2}\cup\overline{\gamma}_{2}\cup\gamma_{(\beta,\alpha)}\cup\gamma_{(\overline{\alpha},\overline{\beta})}\cup\gamma_{(\overline{\beta},\beta)}$.
And $\mathrm{d}g$ defined by $\frac{\mathrm{d}{g}}{\mathrm{d}{\lambda}}=\frac{\curlyvee}
{\hbar}$,
where
\begin{equation}
\begin{split}
&\curlyvee=4(\lambda-\mu)(\lambda-\alpha)(\lambda-\overline{\alpha})(\lambda-\beta)(\lambda-\overline{\beta}),\\
&\hbar=\left((\lambda-E_{1})(\lambda-\overline{E}_{1})(\lambda-E_{2})(\lambda-\overline{E}_{2})(\lambda-\alpha)(\lambda-\overline{\alpha})(\lambda-\beta)(\lambda-\overline{\beta})\right)^{1/2},
\end{split}\nonumber
\end{equation}
with $\mu\in\mathbb{R}$, $\alpha=\mathrm{Re}\alpha+i\mathrm{Im}\alpha$, $\beta=\mathrm{Re}\beta+i\mathrm{Im}\beta$ which are determined by
\begin{subequations}
\begin{align}
&\int_{a_{1}}\hat{\mathrm{d}g}=\int_{a_{2}}\hat{\mathrm{d}g}=\int_{a_{3}}\hat{\mathrm{d}g}=0,\\
&\lim_{\lambda\rightarrow\infty}\left(\frac{\mathrm{d}g}{\mathrm{d}\lambda}-4\lambda\right)=\xi,\quad \lim_{\lambda\rightarrow\infty}4\left(\frac{\mathrm{d}g}{\mathrm{d}\lambda}-4\lambda-\xi\right)=0.
\end{align}\label{condition}
\end{subequations}
the contour $a_{j}$ see Figure \ref{remiannsur}.
The systems of (\ref{condition}) ensure that
\begin{equation}
\begin{split}
&\mathrm{d}g(\lambda)=4\lambda+\xi+\mathcal{O}(\lambda^{-2}),\quad \lambda\rightarrow\infty,\\
&g(\lambda)=\theta(\lambda)+g^{(0)}+\mathcal{O}(\lambda^{-1}),\quad \lambda\rightarrow\infty.
\end{split}\label{g0}
\end{equation}

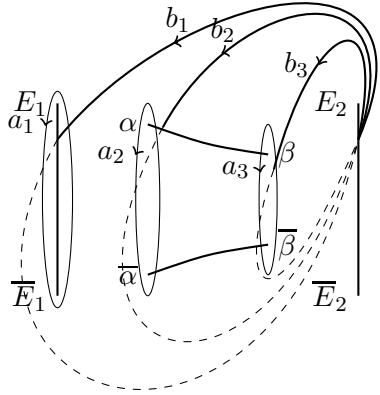
\begin{figure}[H]
\begin{center}
\begin{tikzpicture}[scale=0.4]
\draw[thick](5,-3.2)--(5,3.2);
\draw[thick] (-5,-3.2)--(-5,3.2);
\draw[black] (-5,0) ellipse (0.5 and 3.6);
\draw[black] (-2,0) ellipse (0.4 and 3.2);
\draw[black] (2,0) ellipse (0.3 and 2.5);
\draw[thick] (-2,2.5).. controls  (0,1.8) ..(2,1.5);
\draw[thick] (-2,-2.5).. controls  (0,-1.8) ..(2,-1.5);
\draw[thick] (5,2.0).. controls  (8,9) and (-1,7)..(-5,2.0);
\draw[dashed] (-5,2.0).. controls  (-10,-7) and (2,-11)..(5,2.0);
\draw[thick] (5,2.0).. controls  (7,8) and (1,7)..(-1.5,2.4);
\draw[dashed] (-1.5,2.4).. controls  (-6,-6) and (2,-8)..(5,2.0);
\draw[thick] (5,2.0).. controls  (6,7) and (3.5,6)..(2.19,1);
\draw[dashed] (2.19,1).. controls  (0.8,-3) and (2,-5)..(5,2.0);
\draw[thick,->] (-1,5.25)--(-1.15,5.2);
\draw[thick,->] (0.5,4.89)--(0.45,4.87);
\draw[thick,->] (3.7,4.5)--(3.69,4.48);
\draw[thick,->] (-5.4,2.5)--(-5.41,2.45);
\draw[thick,->] (-2.405,1.5)--(-2.41,1.49);
\draw[thick,->] (1.73,1.0)--(1.75,0.9);
\draw[ ](5,-3.2)node[left] {$\overline{E}_{2}$} (-2,2.5) node[left] {$\alpha$} (-2,-2.5) node[left] {$\overline{\alpha}$}(2,1.5) node[right] {$\beta$} (2,-1.5) node[right] {$\overline{\beta}$};
\draw[ ](5,3.2)node[left] {$E_{2}$} (-5,-3.2)node[left] {$\overline{E}_{1}$} (-1,5.25) node[above] {$b_{1}$} (0.5,4.89) node[above] {$b_{2}$}  (3.7,4.5) node[left] {$b_{3}$};
\draw[ ](-5,3.2)node[left] {$E_{1}$} (-5.4,2.5)node[left] {$a_{1}$} (-2.405,1.5)node[left] {$a_{2}$} (1.73,1.0)node[left] {$a_{3}$};
\end{tikzpicture}
\end{center}
\caption{The contour $a_{j}$ and $b_{j}$ of Riemann surface with genus 3.}\label{remiannsur}
\end{figure}

\begin{lemma}\label{lemg}
The $g$-function defined by (\ref{g}) with the following properties:\\
$\blacktriangleright$ $g(\lambda)-\theta(\lambda)$ is analytic and bounded for $\lambda\in\hat{\mathbb{C}}\backslash\Sigma^{mod}$ with continuous boundary values on $\Sigma^{mod}$, where $\hat{\mathbb{C}}=\mathbb{C}\cup{\infty}$.\\
$\blacktriangleright$ $g(\lambda)$ admits the symmetry $g(\lambda)=\overline{g(\overline{\lambda})}$.\\
$\blacktriangleright$ $g(\lambda)$ admits the RH problem
\begin{equation}
\begin{split}
&g_{+}(\lambda)+g_{-}(\lambda)=
\begin{cases}
2\Delta_{1},\quad \lambda\in\gamma_{1}\cup\overline{\gamma}_{1},\\
2\Delta_{2},\quad \lambda\in\gamma_{(\overline{\alpha},\overline{\beta})}\cup\gamma_{(\beta,\alpha)},\\
0,\quad\quad~ \lambda\in\gamma_{2}\cup\overline{\gamma}_{2},
\end{cases}\\
&g_{+}(\lambda)-g_{-}(\lambda)=2\Delta_{3},\quad\quad \lambda\in\gamma_{(\overline{\beta},\beta)},
\end{split}\nonumber
\end{equation}
where
\begin{equation}
\Delta_{1}=g(E_{1})g(\overline{E}_{1}),\quad \Delta_{2}=g(\alpha)g(\overline{\alpha}),\quad \Delta_{3}=\frac{g_{+}(\beta)-g_{-}(\beta)}{2}=\frac{g_{+}(\overline{\beta})-g_{-}(\overline{\beta})}{2}.\nonumber
\end{equation}
\end{lemma}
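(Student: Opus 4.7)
The plan is to exploit the structure of $\mathrm{d}g=(\curlyvee/\hbar)\,\mathrm{d}\lambda$ as a meromorphic differential on the Riemann surface of genus $3$ determined by $\hbar^{2}$, whose eight branch points form four complex-conjugate pairs. I would prove the three bulleted properties in the order 1, 3, 2, since the third item feeds a small fact that is needed to close the second.

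For the first bullet I would argue analyticity and boundedness separately. Away from $\Sigma^{mod}$ the integrand $\curlyvee/\hbar$ is holomorphic, and the $a$-cycle conditions (\ref{condition}a) force the Abelian integral to be single-valued, so $g$ is analytic on $\hat{\mathbb{C}}\setminus\Sigma^{mod}$. Near each branch point $p\in\{E_{j},\overline{E}_{j},\alpha,\overline{\alpha},\beta,\overline{\beta}\}$, $\curlyvee/\hbar=\mathcal{O}((\lambda-p)^{-1/2})$, which is integrable, so $g$ extends continuously (hence boundedly) to $p$. At infinity, (\ref{condition}b) gives $\mathrm{d}g/\mathrm{d}\lambda=4\lambda+\xi+\mathcal{O}(\lambda^{-2})$, hence $g(\lambda)=2\lambda^{2}+\xi\lambda+g^{(0)}+\mathcal{O}(\lambda^{-1})=\theta(\lambda)+g^{(0)}+\mathcal{O}(\lambda^{-1})$, so $g-\theta$ is bounded at infinity. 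Combining these observations gives the first bullet.

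For the third bullet, each connected component of $\Sigma^{mod}$ other than $\gamma_{(\overline{\beta},\beta)}$ is a branch cut of $\hbar$, across which $\hbar_{+}=-\hbar_{-}$ and therefore $\mathrm{d}g_{+}=-\mathrm{d}g_{-}$. Hence $(g_{+}+g_{-})'(\lambda)=0$ on each such arc, so $g_{+}+g_{-}$ is constant on each connected component. On $\gamma_{2}\cup\overline{\gamma}_{2}$ the base point $\overline{E}_{2}$ is an endpoint, at which $g_{+}=g_{-}=0$, so the constant is $0$. On $\gamma_{1}\cup\overline{\gamma}_{1}$, continuity of $g$ at the branch point $E_{1}$ gives $g_{+}(E_{1})=g_{-}(E_{1})=g(E_{1})$, forcing the constant to equal $2g(E_{1})$, and similarly $2g(\overline{E}_{1})$; this identifies $\Delta_{1}=g(E_{1})=g(\overline{E}_{1})$. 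The same branch-point argument on $\gamma_{(\beta,\alpha)}\cup\gamma_{(\overline{\alpha},\overline{\beta})}$ produces $\Delta_{2}=g(\alpha)=g(\overline{\alpha})$. Finally, $\gamma_{(\overline{\beta},\beta)}$ is not a branch cut of $\hbar$ but the period cut introduced to single-value $g$ across a $b$-cycle; the jump $g_{+}-g_{-}$ is then the $b$-period of $\mathrm{d}g$, which is constant along the cut and, evaluated at $\beta$, equals $2\Delta_{3}$.

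For the second bullet, the zero set of $\curlyvee$ consists of the real point $\mu$ and the conjugate pairs $\{\alpha,\overline{\alpha}\},\{\beta,\overline{\beta}\}$, and the branch points of $\hbar^{2}$ also come in conjugate pairs; so $\overline{\curlyvee(\overline{\lambda})}=\curlyvee(\lambda)$ and $\overline{\hbar(\overline{\lambda})^{2}}=\hbar(\lambda)^{2}$. Fixing the branch of $\hbar$ by Schwarz reflection gives $\overline{\hbar(\overline{\lambda})}=\hbar(\lambda)$, so $\tilde{g}(\lambda):=\overline{g(\overline{\lambda})}$ satisfies $\mathrm{d}\tilde{g}=\mathrm{d}g$, and therefore $\tilde{g}-g$ is constant on $\hat{\mathbb{C}}\setminus\Sigma^{mod}$. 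To identify this constant, apply the third bullet: $g_{+}+g_{-}=0$ on $\gamma_{2}\cup\overline{\gamma}_{2}$ together with continuity of $g$ at $E_{2}$ yields $g(E_{2})=0$; consequently $\tilde{g}(\overline{E}_{2})=\overline{g(E_{2})}=0=g(\overline{E}_{2})$, and the constant vanishes.

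The main obstacle is not any individual computation above but the background well-posedness of the construction: one must verify that, in the regime $\varepsilon<|\xi|<\xi_{0}$ with $D/B>1$, the nonlinear system (\ref{condition}) really does admit a solution $(\mu,\alpha,\beta)$ producing the contour configuration of Figure \ref{remiannsur} together with the correct sign pattern of $\operatorname{Im}g$ off $\Sigma^{mod}$ needed for the subsequent steepest-descent deformations. Once this genus-$3$ configuration is taken as given, the three listed properties reduce to the Riemann-surface arguments sketched above.
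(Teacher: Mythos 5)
The paper states Lemma \ref{lemg} without any proof, so there is no argument of the authors to compare against; judged on its own terms, your outline follows the standard Abelian-integral route (local integrability of $\curlyvee/\hbar$ at the branch points plus the normalization (\ref{condition}) for analyticity and boundedness of $g-\theta$, constancy of $g_{+}+g_{-}$ across the cuts of $\hbar$ from $\hbar_{+}=-\hbar_{-}$, constancy of $g_{+}-g_{-}$ across $\gamma_{(\overline{\beta},\beta)}$ because $\hbar$ has no cut there, and Schwarz symmetry of $\mathrm{d}g$ with the constant fixed at the base point), and most of it is sound, including your sensible reading of the paper's $\Delta_{1}=g(E_{1})g(\overline{E}_{1})$ as $\Delta_{1}=g(E_{1})=g(\overline{E}_{1})$ and your order of proof (third bullet before second, so that $g(E_{2})=0$ is available).

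There is, however, one genuine gap: the step ``the same branch-point argument on $\gamma_{(\beta,\alpha)}\cup\gamma_{(\overline{\alpha},\overline{\beta})}$ produces $\Delta_{2}=g(\alpha)=g(\overline{\alpha})$''. Unlike $\gamma_{1}\cup\overline{\gamma}_{1}$, which is a single connected cut joining $E_{1}$ to $\overline{E}_{1}$ (so constancy of $g_{+}+g_{-}$ along it, evaluated at the two endpoints, does force $g(E_{1})=g(\overline{E}_{1})$), the arcs $\gamma_{(\beta,\alpha)}$ and $\gamma_{(\overline{\alpha},\overline{\beta})}$ are disjoint components of the cut system of $\hbar$, joined only through $\gamma_{(\overline{\beta},\beta)}$, where it is $g_{+}-g_{-}$ rather than $g_{+}+g_{-}$ that is controlled. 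Your argument therefore yields only $g_{+}+g_{-}=2g(\alpha)$ on the first arc and $g_{+}+g_{-}=2g(\overline{\alpha})$ on the second, and even after the symmetry bullet one gets merely $g(\overline{\alpha})=\overline{g(\alpha)}$, not equality; so the existence of a single constant $\Delta_{2}$ (equivalently the reality of $g(\alpha)$) does not follow from what you wrote. The missing ingredient is the $a$-cycle normalization itself: a cycle encircling $\{\alpha,\overline{\alpha}\}$ (or $\{\beta,\overline{\beta}\}$) splits into one planar arc on each sheet joining the two cuts, and a short computation gives $\int_{a}\hat{\mathrm{d}g}=\bigl(g_{+}+g_{-}\bigr)\big|_{\gamma_{(\beta,\alpha)}}-\bigl(g_{+}+g_{-}\bigr)\big|_{\gamma_{(\overline{\alpha},\overline{\beta})}}$, so condition (\ref{condition}a) is exactly what forces the two constants to coincide (and, combined with the symmetry, to be real). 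Relatedly, your single-valuedness claim should also invoke $\mathrm{d}g-(4\lambda+\xi)\mathrm{d}\lambda=\mathcal{O}(\lambda^{-2})\mathrm{d}\lambda$ from (\ref{condition}b) to kill the period around the remaining component $\gamma_{2}\cup\overline{\gamma}_{2}$, since only three cycle conditions are imposed. Both points are repairable, but as written the identification of $\Delta_{2}$ is not proved.
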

\section{Deformation of the Jump Contour}\label{deformation}
In this section, our main purpose is to re-normalize the RH Problem \ref{rhbsic} such
that it is well-behaved as $t\rightarrow\infty$ along any characteristic by establishing  transformation $\hat{m}\rightarrow\hat{m}^{(j)}$. For deriving the long-time asymptotic of equation (\ref{dnls3}), the jump matrixes need to be transformed as a constant matrix or decay to identity matrix. Now we will perform five transformations for the RH problem.

Introduce the matrix function $\hat{m}^{(1)}(x,t,\lambda)$
\begin{equation}
\hat{m}^{(1)}(x,t,\lambda)=e^{-itg^{(0)}\sigma_{3}}\hat{m}(x,t,\lambda)e^{-it(g(\lambda)-\theta(\lambda))\sigma_{3}},
\end{equation}
where $g^{(0)}=(g-\theta)(\xi,\infty)$.
Function $\hat{m}^{(1)}(x,t,\lambda)$ admits the following RH problem:
\begin{prob}
Function $\hat{m}^{(1)}(x,t,\lambda)$ satisfies the following jump condition:
\begin{itemize}
\item  $\hat{m}^{(1)}(x,t,\lambda)$ is analytical in $\mathbb{C}\backslash\Sigma^{(1)}$, where $\Sigma^{(1)}$ see Figure \ref{sigma1}.
\item  $\hat{m}^{(1)}(x,t,\lambda)$ satisfies the jump condition
\begin{equation}
\hat{m}^{(1)}_{+}=\hat{m}^{(1)}_{-}\hat{J}^{(1)},
\end{equation}
where the jump matrix
\begin{equation}
\hat{J}^{(1)}=\begin{cases}
\left(\begin{array}{cc}
1&\hat{\overline{\varrho}}e^{-2itg}\\
0&1
\end{array}
\right)\left(\begin{array}{cc}
1&0\\
\lambda\hat{\varrho}e^{2itg}&1
\end{array}
\right),\quad\quad\quad \lambda\in(\mu,+\infty),\\
\left(
\begin{array}{cc}
1&0\\
\frac{\lambda\hat{\varrho}e^{2itg(k)}}{1+\lambda\hat{\varrho}\hat{\overline{\varrho}}}&1
\end{array}
\right)\left(
\begin{array}{cc}
1+\lambda\hat{\varrho}\hat{\overline{\varrho}}&0\\
0&\frac{1}{1+\lambda\hat{\varrho}\hat{\overline{\varrho}}}
\end{array}
\right)\left(
\begin{array}{cc}
1&\frac{\hat{\overline{\varrho}}e^{-2itg(k)}}{1+\lambda\hat{\varrho}\hat{\overline{\varrho}}}\\
0&1
\end{array}
\right),\quad \lambda\in(-\infty,\mu),\\
\left(
\begin{array}{cc}
-ie^{it(g_{+}-g_{-})}&0\\
\lambda\hat{f}e^{it(g_{+}+g_{-})}&ie^{-it(g_{+}-g_{-})}
\end{array}
\right),\quad\quad~~ \lambda\in\gamma_{1},\\
\left(
\begin{array}{cc}
\frac{\hat{a}_{-}}{\hat{a}_{+}}e^{it(g_{+}-g_{-})}&i\nu^{2}e^{-it(g_{+}-g_{-})}\\
0&\frac{\hat{a}_{+}}{\hat{a}_{-}}e^{-it(g_{+}-g_{-})}
\end{array}
\right),~~\quad\lambda\in\gamma_{2},\\
\left(
\begin{array}{cc}
-ie^{it(g_{+}-g_{-})}&-\hat{\overline{f}}e^{-it(g_{+}+g_{-})}\\
0&ie^{-it(g_{+}-g_{-})}
\end{array}
\right),\quad~~ \lambda\in\overline{\gamma}_{1},\\
\left(
\begin{array}{cc}
\frac{\hat{\overline{a}}_{+}}{\hat{\overline{a}}_{-}}e^{it(g_{+}-g_{-})}&0\\
i\nu^{-2}e^{it(g_{+}+g_{-})}&\frac{\hat{\overline{a}}_{-}}{\hat{\overline{a}}_{+}}e^{-it(g_{+}-g_{-})}
\end{array}
\right),\quad \lambda\in\overline{\gamma}_{2},\\
\left(
\begin{array}{cc}
e^{it(g_{+}-g_{-})}&0\\
0&e^{-it(g_{+}-g_{-})}
\end{array}
\right),\quad\quad\quad\quad~ \lambda\in\gamma_{(\beta,\alpha)}\cup\gamma_{(\overline{\alpha},\overline{\beta})}\cup\gamma_{(\overline{\beta},\beta)}.
\end{cases}\nonumber
\end{equation}
\item $\hat{m}^{(1)}(x,t,\lambda)$ satisfies the asymptotic behavior
\begin{equation}
\hat{m}^{(1)}(x,t,\lambda)\rightarrow I,\quad \lambda\rightarrow\infty.
\end{equation}
\end{itemize}
\end{prob}

\begin{figure}[H]
\begin{center}
\begin{tikzpicture}[scale=0.45]
\draw[dashed] (-6.2,0)--(6.2,0);
\draw[thick,->] (4,-3.2)--(4,-1.6);
\draw[thick,->] (4,-1.6)--(4,1.6);
\draw[thick,-] (4,1.6)--(4,3.2);
\draw[thick,->] (-4,-3.2)--(-4,-1.6);
\draw[thick,->] (-4,-1.6)--(-4,1.6);
\draw[thick,-] (-4,1.6)--(-4,3.2);
\draw[dashed] (-1.5,6) parabola bend (-2.8,3)  (-4,3.2);
\draw[dashed] (-1.5,-6) parabola bend (-2.8,-3)  (-4,-3.2);
\draw[dashed] (0.4,2) .. controls (0,0)  .. (0.4,-2);
\draw[dashed] (4,3.2)--(0.4,2);
\draw[dashed] (4,-3.2)--(0.4,-2);
\draw[dashed] (0.4,2).. controls (-1,2.0)..(-2.8,3);
\draw[dashed] (0.4,-2).. controls (-1,-2.0)..(-2.8,-3);
\draw[thick,->] (-6.2,0)--(-2.5,0);
\draw[thick,->] (-2.5,0)--(2.5,0);
\draw[thick,-] (2.5,0)--(6.2,0);
\draw[thick] (0.4,2).. controls (-1,2.0)..(-2.8,3);
\draw[thick] (0.4,-2).. controls (-1,-2.0)..(-2.8,-3);
\draw[thick] (0.4,2) .. controls (0,0)  .. (0.4,-2);
\draw[thick,->] (0.2,0.9)--(0.2,1);
\draw[thick,->] (0.2,-1)--(0.2,-0.9);
\draw[thick,->] (-0.9,2.05)--(-1,2.1);
\draw[thick,->] (-0.9,-2.05)--(-1,-2.1);
\draw[ ](0.3,0)node[below] {\footnotesize$\mu$} (4,-3.2)node[below] {\footnotesize$\overline{E}_{2}$};
\draw[ ](4,3.2)node[above] {\footnotesize$E_{2}$} (-4,-3.2)node[below] {\footnotesize$\overline{E}_{1}$};
\draw[ ](-4,3.2)node[above] {\footnotesize$E_{1}$} (5,0)node[above] {\footnotesize$\mathbb{R}$};
\draw[ ](-2.5,3.0)node[right] {\footnotesize$\alpha$} (-2.5,-3.0)node[right] {\footnotesize$\overline{\alpha}$} (0.3,2.0)node[above] {\footnotesize$\beta$}  (0.3,-2.0)node[below] {\footnotesize$\overline{\beta}$};
\draw[ ](-5,5)node[above] {$-$} (2,-5)node[above] {$-$} (2,1)node[above] {$-$} (-2,-1)node[below] {$-$}
(-5,-5)node[above] {$+$} (2,5)node[above] {$+$} (2,-1)node[below] {$+$} (-2,1)node[above] {$+$};
\end{tikzpicture}
\end{center}
\caption{The branch cuts and the level set $\mathrm{Im}g=0$ (the dash line and $\mathbb{R}\cup\gamma_{1}\cup\overline{\gamma}_{1}\cup\gamma_{2}\cup\overline{\gamma}_{2}\cup\gamma_{(\overline{\alpha},\overline{\beta})}\cup\gamma_{(\overline{\beta},\beta)}\cup\gamma_{(\beta,\alpha)}$). The region of $\mathrm{Im}g>0$ is "+" and $\mathrm{Im}g<0$ is "-". The jump contour $\Sigma^{(1)}=\mathbb{R}\cup\gamma_{1}\cup\overline{\gamma}_{1}\cup\gamma_{2}\cup\overline{\gamma}_{2}\cup\gamma_{(\overline{\alpha},\overline{\beta})}\cup\gamma_{(\overline{\beta},\beta)}\cup\gamma_{(\beta,\alpha)}$.}
\label{sigma1}
\end{figure}
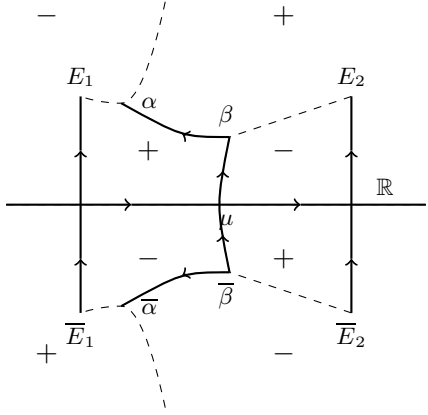

There is a bad factorization in jump matrix $\hat{J}^{(1)}$ for $\lambda\in(-\infty,\mu)$. To eliminate the intermediate matrix, we define function $\hat{m}^{(2)}(x,t,\lambda)$ by
\begin{equation}
\hat{m}^{(2)}(x,t,\lambda)=\hat{m}^{(1)}(x,t,\lambda)\delta^{-\sigma_{3}}(\lambda),
\end{equation}
where
\begin{equation}
\delta(\lambda)=\exp{\left[\frac{-i}{2\pi}\int_{-\infty}^{\mu}\frac{\ln{(1+s|\hat{\varrho}(s)|^{2})}}{s-\lambda}\mathrm{d}s\right]},\quad \lambda\in\mathbb{C}\backslash(-\infty,\mu].
\end{equation}
The function $\delta(\lambda)$ with the scalar RH problem:
\begin{lemma}\label{delta}
the function $\delta(\lambda)$ admits the following properties:\\
$\blacktriangleright$ $\delta(\lambda)$ and $\delta^{-1}(\lambda)$ are bounded and analytic for $\lambda\in\mathbb{C}\backslash(-\infty,\mu]$ with continuous boundary values on $(-\infty,\mu)$.\\
$\blacktriangleright$ $\delta$ admits the symmetry $\delta(\lambda)=\overline{\delta(\overline{\lambda})}^{-1}$.\\
$\blacktriangleright$ $\delta$ admits the jump condition
\begin{equation}
\begin{split}
&\delta_{+}=\delta_{-}(1+\lambda|\hat{\varrho}|^{2}),\quad (-\infty,\mu),\\
&\delta_{+}=\delta_{-},\quad\quad\quad\quad\quad~ (\mu,+\infty).
\end{split}\nonumber
\end{equation}
$\blacktriangleright$ $\delta(\lambda)$ admits the asymptotic behavior
\begin{equation}
\delta(\lambda)=1+\mathcal{O}(\lambda^{-1}),\quad \lambda\rightarrow\infty.
\end{equation}
\end{lemma}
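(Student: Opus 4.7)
The plan is to derive all four properties directly from the scalar Cauchy integral representation, treating $\ln\delta$ instead of $\delta$ itself so that each property becomes a standard fact about Cauchy transforms, after which exponentiation yields the claimed statements.

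First, I would set $\Delta(\lambda)=\log\delta(\lambda)=\frac{-i}{2\pi}\int_{-\infty}^{\mu}\frac{\ln(1+s|\hat{\varrho}(s)|^{2})}{s-\lambda}\mathrm{d}s$. Since the scattering coefficient $\hat{\varrho}$ inherits the usual Schwartz-type decay from the direct scattering map and satisfies $1+s|\hat{\varrho}(s)|^{2}>0$ on $\mathbb{R}$ (as $|\hat{\varrho}|^{2}=|\varrho|^{2}$ is the reflection coefficient norm and $s\geq 0$ on the relevant tails, while for $s<0$ the positivity is ensured by the fact that $1+\lambda\hat{\varrho}\hat{\overline{\varrho}}=\det\hat{J}$ of the original jump on $\mathbb{R}$ is nonzero), the integrand is a well-defined real function of $s$ with sufficient decay to make the integral absolutely convergent for $\lambda\in\mathbb{C}\setminus(-\infty,\mu]$. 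Hence $\Delta$ is analytic off $(-\infty,\mu]$, and therefore so are $\delta=e^{\Delta}$ and $\delta^{-1}=e^{-\Delta}$. Boundedness at the endpoint $\mu$ and at the left tail follows from $\ln(1+s|\hat{\varrho}|^{2})\in L^{1}\cap L^{2}$ together with the Hardy-type bounds on Cauchy integrals of such densities.

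Next, the jump condition I would obtain from the Plemelj--Sokhotski formula: for $\lambda_{0}\in(-\infty,\mu)$,
\begin{equation}
\Delta_{+}(\lambda_{0})-\Delta_{-}(\lambda_{0})=\frac{-i}{2\pi}\cdot 2\pi i\,\ln(1+\lambda_{0}|\hat{\varrho}(\lambda_{0})|^{2})=\ln(1+\lambda_{0}|\hat{\varrho}(\lambda_{0})|^{2}),
\end{equation}
so that $\delta_{+}/\delta_{-}=1+\lambda_{0}|\hat{\varrho}(\lambda_{0})|^{2}$, and for $\lambda_{0}\in(\mu,+\infty)$ the integrand is analytic across $\lambda_{0}$, giving $\delta_{+}=\delta_{-}$. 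The symmetry $\delta(\lambda)=\overline{\delta(\overline{\lambda})}^{-1}$ follows after conjugating the defining integral and using that the density $\ln(1+s|\hat{\varrho}(s)|^{2})$ is real-valued on the integration path: we get $\overline{\Delta(\overline{\lambda})}=-\Delta(\lambda)$, so exponentiating yields the desired relation.

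Finally, for the behavior at infinity, I would expand the kernel as $\frac{1}{s-\lambda}=-\frac{1}{\lambda}-\frac{s}{\lambda^{2}}-\cdots$, which gives
\begin{equation}
\Delta(\lambda)=\frac{i}{2\pi\lambda}\int_{-\infty}^{\mu}\ln(1+s|\hat{\varrho}(s)|^{2})\,\mathrm{d}s+\mathcal{O}(\lambda^{-2}),\quad \lambda\to\infty,
\end{equation}
and then $\delta(\lambda)=e^{\Delta(\lambda)}=1+\mathcal{O}(\lambda^{-1})$ by expanding the exponential. The main technical obstacle is the rigorous justification that $\ln(1+s|\hat{\varrho}(s)|^{2})$ lies in the appropriate function class near the endpoint $s=\mu$ (where the density may only be H\"older continuous rather than smooth) and at $s\to-\infty$; this is needed both for the Plemelj-Sokhotski formula to apply pointwise and for the boundedness of $\delta^{\pm 1}$ up to the jump contour. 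This is handled by standard $L^{2}$-RH theory (as cited in \cite{pa1999,lp2002,as2006,zx1989,lj2017,lj2018}), which guarantees continuous boundary values whenever the density is H\"older on the contour and decays sufficiently at infinity.
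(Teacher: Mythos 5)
The paper states Lemma \ref{delta} without proof, so your argument has to stand on its own; in outline it does, and it is the standard route: writing $\Delta=\log\delta$ as a Cauchy transform, Plemelj--Sokhotski for the jump across $(-\infty,\mu)$ and analyticity across $(\mu,\infty)$, realness of the density for $\overline{\Delta(\overline{\lambda})}=-\Delta(\lambda)$, and the geometric expansion of $(s-\lambda)^{-1}$ for $\delta=1+\mathcal{O}(\lambda^{-1})$. Those three computations are correct as written (including the orientation and the sign in $\delta_{+}=\delta_{-}(1+\lambda|\hat{\varrho}|^{2})$).

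Two of your supporting justifications, however, fail as stated. First, positivity of $1+s|\hat{\varrho}(s)|^{2}$ on $(-\infty,\mu)$: you justify it by identifying this quantity with $\det$ of the jump matrix on $\mathbb{R}$, but that determinant is identically $1$, since $(1+\lambda\varrho\overline{\varrho})\cdot 1-\overline{\varrho}\cdot\lambda\varrho=1$, so it gives no information about the sign; and the remark that ``$s\geq 0$ on the relevant tails'' is backwards, because the integration range $(-\infty,\mu)$ consists mostly of negative $s$, which is exactly where $1+s|\hat{\varrho}(s)|^{2}$ could in principle vanish or turn negative. The actual source of positivity is unimodularity of the scattering matrix combined with the $k\mapsto-k$ and conjugation symmetries, which yield an identity of the form $1+\lambda\varrho\overline{\varrho}=(a\overline{a})^{-1}>0$ on the real $\lambda$-axis (for $\lambda<0$, i.e. $k\in i\mathbb{R}$, this rests on $|a|^{2}-|b|^{2}=1$ there); without such an identity the logarithm in the definition of $\delta$ need not even be real, and your symmetry and boundedness arguments would collapse. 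Second, boundedness of $\delta^{\pm1}$ near the endpoint $\mu$ does not follow from $\ln(1+s|\hat{\varrho}|^{2})\in L^{1}\cap L^{2}$ plus Hardy-type bounds: the Cauchy transform $\Delta$ genuinely blows up logarithmically at $\mu$ when the density does not vanish there. What saves $\delta$ is that the coefficient of that logarithm is purely imaginary because the density is real, so $\delta\sim(\lambda-\mu)^{-i\nu_{2}}$ has bounded modulus $e^{\nu_{2}\arg(\lambda-\mu)}$; this is precisely the representation $\delta(\lambda)=\exp\left[-i\nu_{2}\ln(\lambda-\mu)+\chi(\lambda)\right]$ recorded in the paper's Remark before the local model at $\mu$. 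With these two repairs (the scattering identity for positivity, and the imaginary-exponent argument for boundedness at $\mu$), your proof is complete.
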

Lemma \ref{delta} implies that $\delta^{\sigma_{3}}$ satisfies a $L^{2}$-RH problem. Hence $\hat{m}$ satisfies the RH problem \ref{rhbsic} iff $\hat{m}^{(2)}$ satisfies the RH problem:
\begin{prob}
$\hat{m}^{(2)}(x,t,\lambda)$ satisfies the following properties:
\begin{itemize}
\item  $\hat{m}^{(2)}(x,t,\lambda)$ is analytical in $\mathbb{C}\backslash\Sigma^{(2)}$, where $\Sigma^{(2)}$ see Figure \ref{sigma1}.
\item  $\hat{m}^{(2)}(x,t,\lambda)$ satisfies the jump condition
\begin{equation}
\hat{m}^{(2)}_{+}=\hat{m}^{(2)}_{-}\hat{J}^{(2)},
\end{equation}
where the jump matrix $\hat{J}^{(2)}$ is given by
\begin{equation}
\hat{J}^{(2)}=\begin{cases}
\left(\begin{array}{cc}
1&0\\
\frac{\lambda\hat{\varrho}}{1+\lambda\hat{\varrho}\hat{\overline{\varrho}}}\delta_{-}^{-2}e^{2itg}&1
\end{array}\right)\left(\begin{array}{cc}
1&\frac{\hat{\overline{\varrho}}}{1+\lambda\hat{\varrho}\hat{\overline{\varrho}}}\delta_{+}^{2}e^{-2itg}\\
0&1
\end{array}\right),\quad \lambda\in(-\infty,\mu),\\
\left(\begin{array}{cc}
1&\hat{\overline{\varrho}}\delta^{2}e^{-2itg}\\
0&1
\end{array}
\right)\left(\begin{array}{cc}
1&0\\
\lambda\hat{\varrho}\delta^{-2}e^{2itg}&1
\end{array}
\right),\quad\quad\quad\quad~ \lambda\in(\mu,+\infty),\\
\left(
\begin{array}{cc}
-ie^{it(g_{+}-g_{-})}&0\\
\lambda\hat{f}\delta^{-2}e^{it(g_{+}+g_{-})}&ie^{-it(g_{+}-g_{-})}
\end{array}
\right),\quad\quad\quad\quad\quad~ \lambda\in\gamma_{1},\\
\left(
\begin{array}{cc}
\frac{\hat{a}_{-}}{\hat{a}_{+}}e^{it(g_{+}-g_{-})}&i\nu^{2}\delta^{2}e^{-it(g_{+}+g_{-})}\\
0&\frac{\hat{a}_{+}}{\hat{a}_{-}}e^{-it(g_{+}-g_{-})}
\end{array}
\right),\quad\quad\quad\quad\quad \lambda\in\gamma_{2},\\
\left(
\begin{array}{cc}
-ie^{it(g_{+}-g_{-})}&-\hat{\overline{f}}\delta^{2}e^{-it(g_{+}+g_{-})}\\
0&ie^{-it(g_{+}-g_{-})}
\end{array}
\right),\quad\quad\quad\quad\quad \lambda\in\overline{\gamma}_{1},\\
\left(
\begin{array}{cc}
\frac{\hat{\overline{a}}_{+}}{\hat{\overline{a}}_{-}}e^{it(g_{+}-g_{-})}&0\\
i\nu^{-2}\delta^{-2}e^{it(g_{+}+g_{-})}&\frac{\hat{\overline{a}}_{-}}{\hat{\overline{a}}_{+}}e^{-it(g_{+}-g_{-})}
\end{array}
\right),\quad\quad\quad~~ \lambda\in\overline{\gamma}_{2},\\
\left(
\begin{array}{cc}
e^{it(g_{+}-g_{-})}&0\\
0&e^{-it(g_{+}-g_{-})}
\end{array}
\right),\quad\quad \lambda\in\gamma_{(\beta,\alpha)}\cup\gamma_{(\overline{\alpha},\overline{\beta})}\cup\gamma_{(\overline{\beta},\beta)}.
\end{cases}\nonumber
\end{equation}
\item $\hat{m}^{(2)}(x,t,\lambda)$ satisfies the asymptotic behavior
\begin{equation}
\hat{m}^{(2)}(x,t,\lambda)\rightarrow I,\quad \lambda\rightarrow\infty.
\end{equation}
\end{itemize}
\end{prob}
The purpose of the third deformation of the RH problem is to extend the jump matrix off the real axis. Then, the complex plane $\mathbb{C}$ is separated into six sectors which are respectively denoted by $U_{j} (j = 1, 2,..., 6)$. The distributions of $U_{j}$ are shown in Figure \ref{uj}. With this deformation, we define a new function $\hat{m}^{(3)}$ that deforms the oscillation term along the real axis onto new contours. Along the new contours, the deformed
oscillation term is decaying. The function $\hat{m}^{(3)}$  defined by
\begin{equation}
\hat{m}^{(3)}=\hat{m}^{(2)}\begin{cases}
\left(
\begin{array}{cc}
1&0\\
-\lambda\hat{\varrho}\delta^{-2}e^{2itg}&1
\end{array}
\right),\quad\quad\quad \lambda\in U_{1},\\
\left(
\begin{array}{cc}
1&-\frac{\hat{\overline{\varrho}}}{1+\lambda\hat{\varrho}\hat{\overline{\varrho}}}\delta^{2}e^{-2itg}\\
0&1
\end{array}
\right),\quad~~  \lambda\in U_{3},\\
\left(
\begin{array}{cc}
1&0\\
\frac{\lambda\hat{\varrho}}{1+\lambda\hat{\varrho}\hat{\overline{\varrho}}}\delta^{-2}e^{2itg}&1
\end{array}
\right),\quad\quad~  \lambda\in U_{4},\\
\left(
\begin{array}{cc}
1&\hat{\overline{\varrho}}\delta^{2}e^{-2itg}\\
0&1
\end{array}
\right),\quad\quad\quad\quad~ \lambda\in U_{6},\\
I,\quad\quad\quad\quad\quad\quad\quad\quad\quad\quad\quad~~ \lambda\in U_{2}\cup U_{5}.
\end{cases}\nonumber
\end{equation}
By using the identity
\begin{equation}
\begin{split}
&i=\lambda\hat{f}\left(\frac{\hat{\overline{\varrho}}}{1+\lambda\hat{\varrho}\hat{\overline{\varrho}}}\right)_{-},\quad -i=\lambda\hat{f}\left(\frac{\hat{\overline{\varrho}}}{1+\lambda\hat{\varrho}\hat{\overline{\varrho}}}\right)_{+},\\
&\frac{\hat{a}_{-}}{\hat{a}_{+}}+i\lambda\hat{\rho}_{+}\nu^{2}=0,\quad \frac{\hat{a}_{+}}{\hat{a}_{-}}+i\lambda\hat{\rho}_{-}\nu^{2}=0,
\end{split}\nonumber
\end{equation}
the function $\hat{m}^{(3)}$ admits the  RH problem:
\begin{prob}
$\hat{m}^{(3)}(x,t,\lambda)$ satisfies the following properties:
\begin{itemize}
\item  $\hat{m}^{(3)}(x,t,\lambda)$ is analytical in $\mathbb{C}\backslash\Sigma^{(3)}$, where $\Sigma^{(3)}$ see Figure \ref{sigma3}.
\item  $\hat{m}^{(3)}(x,t,\lambda)$ satisfies the jump condition
\begin{equation}
\hat{m}^{(3)}_{+}=\hat{m}^{(3)}_{-}\hat{J}^{(3)},
\end{equation}
where the jump matrix $\hat{J}^{(3)}=\hat{J}_{j}^{(3)}$ in the upper half-plane rewritten as
\begin{subequations}
\begin{align}
&\hat{J}_{1}^{(3)}=\left(
\begin{array}{cc}
1&0\\
-\lambda\hat{\varrho}\delta^{-2}e^{2itg}&1
\end{array}
\right),\quad \hat{J}_{2}^{(3)}=\left(
\begin{array}{cc}
1&-\frac{\hat{\overline{\varrho}}}{1+\lambda\hat{\varrho}\hat{\overline{\varrho}}}\delta^{2}e^{-2itg}\\
0&1
\end{array}
\right),\nonumber\\
&\hat{J}_{3}^{(3)}\left(
\begin{array}{cc}
0&(\lambda\hat{f})^{-1}\delta^{2}e^{-it(g_{+}-g_{-})}\\
\lambda\hat{f}\delta^{-2}e^{it(g_{+}+g_{-})}&0
\end{array}
\right),\nonumber\\
&\hat{J}_{4}^{(3)}=\left(
\begin{array}{cc}
0&i\nu^{2}\delta^{2}e^{-it(g_{+}+g_{-})}\\
i\nu^{-2}\delta^{-2}e^{it(g_{+}+g_{-})}&0
\end{array}
\right),\nonumber\\
&\hat{J}_{5}^{(3)}=e^{-itg_{-}\sigma_{3}}\left(
\begin{array}{cc}
1&-i(\lambda\hat{f})^{-1}\delta^{2}\\
0&1
\end{array}
\right)e^{itg_{+}\sigma_{3}},\nonumber\\
&\hat{J}_{6}^{(3)}=e^{-itg_{-}\sigma_{3}}\left(
\begin{array}{cc}
1&-i(\lambda\hat{f})^{-1}\delta^{2}\\
\lambda\hat{\varrho}\delta^{-2}&\hat{a}\hat{\overline{a}}
\end{array}
\right)e^{itg_{+}\sigma_{3}}.\nonumber
\end{align}
\end{subequations}
where the subscript of $\hat{J}^{(3)}_{j}$ denote the jump contour in Figure \ref{sigma3}.
\item $\hat{m}^{(3)}(x,t,\lambda)$ satisfies the asymptotic behavior
\begin{equation}
\hat{m}^{(3)}(x,t,\lambda)\rightarrow I,\quad \lambda\rightarrow\infty.
\end{equation}
\end{itemize}
\end{prob}

\begin{figure}[H]
\begin{center}
\begin{tikzpicture}[scale=0.45]
\draw[dashed] (-6.2,0)--(6.2,0);
\draw[dashed] (-1.5,6) parabola bend (-2.8,3)  (-4,3.2);
\draw[dashed] (-1.5,-6) parabola bend (-2.8,-3)  (-4,-3.2);
\draw[dashed] (0.4,2) .. controls (0,0)  .. (0.4,-2);
\draw[dashed] (4,3.2)--(0.4,2);
\draw[dashed] (4,-3.2)--(0.4,-2);
\draw[dashed] (0.4,2).. controls (-1,2.0)..(-2.8,3);
\draw[dashed] (0.4,-2).. controls (-1,-2.0)..(-2.8,-3);
\draw[thick] (-2.8,3)--(-6.2,5);
\draw[thick] (-2.8,-3)--(-6.2,-5);
\draw[thick] (0.4,2)--(0.5,3);
\draw[thick] (0.5,3)--(6.2,5);
\draw[thick] (0.4,-2)--(0.5,-3);
\draw[thick] (0.5,-3)--(6.2,-5);
\draw[ ](0.4,1.8)node[right] {\footnotesize$\beta$} (0.4,-1.8)node[right] {\footnotesize$\overline{\beta}$};
\draw[ ](-2.6,3)node[right] {\footnotesize$\alpha$} (-2.6,-3)node[right] {\footnotesize$\overline{\alpha}$};
\draw[ ](0.3,0)node[below] {\footnotesize$\mu$} (4,-3.2)node[below] {\footnotesize$\overline{E}_{2}$};
\draw[ ](4,3.2)node[above] {\footnotesize$E_{2}$} (-4,-3.2)node[left] {\footnotesize$\overline{E}_{1}$};
\draw[ ](-4,3.2)node[left] {\footnotesize$E_{1}$} ;
\draw[ ] (3,1.5) node[right] {\footnotesize$U_{1}$} (3,-1.5) node[right] {\footnotesize$U_{6}$};
\draw[ ] (-3,1.5) node[right] {\footnotesize$U_{3}$} (-3,-1.5) node[right] {\footnotesize$U_{4}$};
\draw[ ] (0,4) node[right] {\footnotesize$U_{2}$} (0,-4) node[right] {\footnotesize$U_{5}$};
\end{tikzpicture}
\end{center}
\caption{The distribution of $U_{j},j=1,2,\cdots,6$.}\label{uj}
\end{figure}
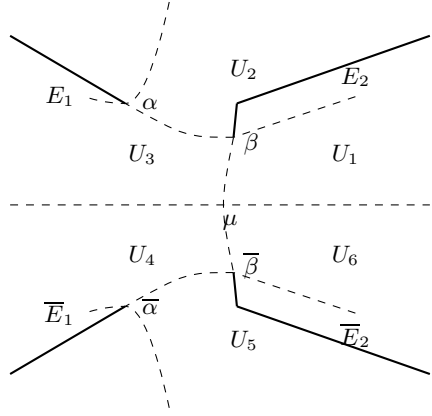

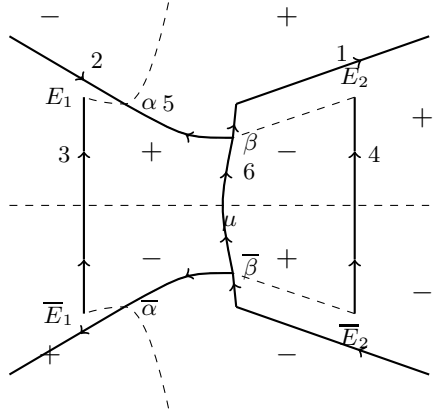
\begin{figure}[H]
\begin{center}
\begin{tikzpicture}[scale=0.45]
\draw[thick,->] (4,-3.2)--(4,-1.6);
\draw[thick,->] (4,-1.6)--(4,1.6);
\draw[thick,-] (4,1.6)--(4,3.2);
\draw[thick,->] (-4,-3.2)--(-4,-1.6);
\draw[thick,->] (-4,-1.6)--(-4,1.6);
\draw[thick,-] (-4,1.6)--(-4,3.2);
\draw[dashed] (-1.5,6) parabola bend (-2.8,3)  (-4,3.2);
\draw[dashed] (-1.5,-6) parabola bend (-2.8,-3)  (-4,-3.2);
\draw[dashed] (0.4,2) .. controls (0,0)  .. (0.4,-2);
\draw[dashed] (4,3.2)--(0.4,2);
\draw[dashed] (4,-3.2)--(0.4,-2);
\draw[dashed] (0.4,2).. controls (-1,2.0)..(-2.8,3);
\draw[dashed] (0.4,-2).. controls (-1,-2.0)..(-2.8,-3);
\draw[thick] (-2.8,3)--(-6.2,5);
\draw[thick,->] (-4.1,3.8)--(-4,3.7);
\draw[thick] (-2.8,-3)--(-6.2,-5);
\draw[thick,->] (-4.0,-3.7)--(-4.1,-3.8);
\draw[thick] (0.4,2)--(0.5,3);
\draw[thick,->] (0.42,2.3)--(0.45,2.4);
\draw[thick] (0.5,3)--(6.2,5);
\draw[thick,->] (4.1,4.25)--(4.2,4.3);
\draw[thick] (0.4,-2)--(0.5,-3);
\draw[thick,->] (0.45,-2.4)--(0.42,-2.3);
\draw[thick] (0.5,-3)--(6.2,-5);
\draw[thick,->] (4.2,-4.3)--(4.1,-4.25);
\draw[dashed] (-6.2,0)--(6.2,0);
\draw[thick] (0.4,2).. controls (-1,2.0)..(-2.8,3);
\draw[thick] (0.4,-2).. controls (-1,-2.0)..(-2.8,-3);
\draw[thick] (0.4,2) .. controls (0,0)  .. (0.4,-2);
\draw[thick,->] (0.2,0.9)--(0.2,1);
\draw[thick,->] (0.2,-1)--(0.2,-0.9);
\draw[thick,->] (-0.9,2.05)--(-1,2.1);
\draw[thick,->] (-0.9,-2.05)--(-1,-2.1);
\draw[ ](0.4,1.8)node[right] {\footnotesize$\beta$} (0.4,-1.8)node[right] {\footnotesize$\overline{\beta}$};
\draw[ ](-2.6,3)node[right] {\footnotesize$\alpha$} (-2.6,-3)node[right] {\footnotesize$\overline{\alpha}$};
\draw[ ](0.3,0)node[below] {\footnotesize$\mu$} (4,-3.2)node[below] {\footnotesize$\overline{E}_{2}$};
\draw[ ](4,3.2)node[above] {\footnotesize$E_{2}$} (-4,-3.2)node[left] {\footnotesize$\overline{E}_{1}$};
\draw[ ](-4,3.2)node[left] {\footnotesize$E_{1}$} ;
\draw[ ](4.1,4.5)node[left] {\footnotesize$1$} (-4.1,4.3)node[right] {\footnotesize$2$};
\draw[ ](-4.1,1.5)node[left] {\footnotesize$3$} (4.1,1.5)node[right] {\footnotesize\footnotesize$4$};
\draw[ ](-1.5,2.5)node[above] {\footnotesize$5$} (0.4,1)node[right] {\footnotesize$6$};
\draw[ ](-5,5)node[above] {$-$} (2,-5)node[above] {$-$} (2,1)node[above] {$-$} (-2,-1)node[below] {$-$} (6,-2)node[below] {$-$}
(-5,-5)node[above] {$+$} (2,5)node[above] {$+$} (2,-1)node[below] {$+$} (-2,1)node[above] {$+$} (6,2)node[above] {$+$};
\end{tikzpicture}
\end{center}
\caption{The jump contour $\Sigma^{(3)}$.}\label{sigma3}
\end{figure}

Our purpose for performing the fourth deformation of the jump contour is to transform the jump matrix across $\gamma_{(\beta,\alpha)}\cup\gamma_{(\overline{\alpha},\overline{\beta})}\cup\gamma_{(\overline{\beta},\beta)}$ to a diagonal or off-diagonal matrix.  Then, the branch cuts $\gamma_{(\beta,\alpha)}\cup\gamma_{(\overline{\alpha},\overline{\beta})}\cup\gamma_{(\overline{\beta},\beta)}$ is separated into eight jump contours in the upper half-plane which are respectively denoted by $(5,6,..., 12)$ and there exist eight jump contours in the lower half-plane, see Figure \ref{sigma45}. The analytic regions enclosed by these jump contours named by $V_{j} (j=1,2,\cdots, 8)$, see Figure \ref{vj}. First, we deformate $\hat{v}_{5}^{(3)}$ and $\hat{v}_{6}^{(3)}$ as
\begin{subequations}
\begin{align}
&\hat{v}_{5}^{(3)}=e^{-itg_{-}\sigma_{3}}\left(
\begin{array}{cc}
1&0\\
i\lambda\hat{f}\delta^{-2}&1
\end{array}
\right)\left(
\begin{array}{cc}
0&-i(\lambda\hat{f})^{-1}\delta^{2}\\
-i\lambda\hat{f}\delta^{-2}&0
\end{array}
\right)\left(
\begin{array}{cc}
1&0\\
i\lambda\hat{f}\delta^{-2}&1
\end{array}
\right)e^{-itg_{+}\sigma_{3}},\nonumber\\
&\hat{v}_{6}^{(3)}=e^{-itg_{-}\sigma_{3}}\left(
\begin{array}{cc}
1&\frac{-i(\lambda\hat{f})^{-1}\delta^{2}}{\hat{a}\hat{\overline{a}}}\\
0&1
\end{array}
\right)\left(
\begin{array}{cc}
\frac{1}{\hat{a}\hat{\overline{a}}}&0\\
0&\hat{a}\hat{\overline{a}}
\end{array}
\right)\left(
\begin{array}{cc}
1&0\\
\frac{\lambda\hat{\varrho}\delta^{-2}}{\hat{a}\hat{\overline{a}}}&1
\end{array}
\right)e^{-itg_{+}\sigma_{3}}.\nonumber
\end{align}
\end{subequations}
We define the function $\hat{m}^{(4)}$ in the upper half-plane by
\begin{equation}
\hat{m}^{(4)}=\hat{m}^{(3)}\begin{cases}
\left(
\begin{array}{cc}
1&0\\
-\frac{e^{2itg}}{\hat{a}\hat{b}\delta^{2}}&1
\end{array}
\right),\quad\quad\quad\quad~  \lambda\in V_{1},\\
\left(\begin{array}{cc}
1&0\\
\frac{e^{2itg}}{\hat{a}\hat{b}\delta^{2}}&1
\end{array}
\right),\quad\quad\quad\quad\quad~ \lambda\in V_{2},\\
\left(
\begin{array}{cc}
1&0\\
-\frac{\lambda\hat{\varrho}}{\hat{a}\hat{\overline{a}}}\delta^{-2}e^{-2itg}&1
\end{array}
\right),\quad~ \lambda\in V_{3},\\
\left(
\begin{array}{cc}
1&-\hat{\overline{\varrho}}\delta^{2}e^{-2itg}\\
0&1
\end{array}
\right),\quad\quad~~ \lambda\in V_{4},\\
I,\quad\quad\quad\quad\quad\quad\quad\quad\quad\quad\quad elsewhere,
\end{cases}\nonumber
\end{equation}
and $\hat{m}^{(4)}(x,t,\lambda)$ admits the RH problem:
\begin{prob}
$\hat{m}^{(4)}(x,t,\lambda)$ satisfies the following properties:
\begin{itemize}
\item  $\hat{m}^{(4)}(x,t,\lambda)$ is analytical in $\mathbb{C}\backslash\Sigma^{(4)}$, where $\Sigma^{(4)}$ see Figure \ref{sigma45}.
\item  $\hat{m}^{(4)}(x,t,\lambda)$ satisfies the jump condition
\begin{equation}
\hat{m}^{(4)}_{+}=\hat{m}^{(4)}_{-}\hat{J}^{(4)},
\end{equation}
where the jump matrix $\hat{J}^{(4)}=\hat{J}_{j}^{(4)}$ in the upper half-plane rewritten as
\begin{subequations}
\begin{align}
&\hat{J}_{1}^{(4)}=\hat{J}_{1}^{(3)},\quad \hat{J}_{2}^{(4)}=\hat{J}_{2}^{(3)},\quad \hat{J}_{3}^{(4)}=\hat{J}_{3}^{(3)},\quad \hat{J}_{4}^{(4)}=\hat{J}_{4}^{(3)},\nonumber\\
&\hat{J}_{5}^{(4)}=\hat{J}_{7}^{(4)}=\left(
\begin{array}{cc}
1&0\\
-\frac{e^{2itg}}{\hat{a}\hat{b}\delta^{2}}&1
\end{array}
\right),\quad \hat{J}_{6}^{(4)}=\left(
\begin{array}{cc}
0&-\hat{a}\hat{b}\delta^{2}e^{-it(g_{+}+g_{-})}\\
\frac{1}{\hat{a}\hat{b}\delta^{2}e^{-it(g_{+}+g_{-})}}&0
\end{array}
\right),\nonumber\\
&\hat{J}_{8}^{(4)}=\left(
\begin{array}{cc}
1&0\\
\frac{1}{\hat{a}^{2}\hat{\overline{a}}\hat{b}}\delta^{-2}e^{2itg}&1
\end{array}
\right),\quad \hat{J}_{9}^{(4)}=\left(
\begin{array}{cc}
1&0\\
-\frac{\hat{\overline{a}}}{\hat{b}}\delta^{-2}e^{2itg}&1
\end{array}
\right),\nonumber\\
&\hat{J}_{10}^{(4)}=\left(
\begin{array}{cc}
1&-\frac{\hat{b}}{\hat{\overline{a}}}\delta^{2}e^{-2itg}\\
0&1
\end{array}
\right),\quad \hat{J}_{11}^{(4)}=\left(
\begin{array}{cc}
\frac{e^{it(g_{+}-g_{-})}}{\hat{a}\hat{\overline{a}}}&0\\
0&\hat{a}\hat{\overline{a}}e^{-it(g_{+}-g_{-})}
\end{array}
\right),\nonumber\\
&\hat{J}_{12}^{(4)}=\left(
\begin{array}{cc}
1&0\\
\frac{\hat{\overline{b}}}{\hat{a}^{2}\hat{\overline{a}}}\delta^{-2}e^{2itg}&1
\end{array}
\right),\nonumber
\end{align}
\end{subequations}
where the subscript of $\hat{J}^{(4)}_{j}$ denote the $j^{th}$ jump contour in Figure \ref{sigma45}.
\item $\hat{m}^{(4)}(x,t,\lambda)$ satisfies the asymptotic behavior
\begin{equation}
\hat{m}^{(4)}(x,t,\lambda)\rightarrow I,\quad \lambda\rightarrow\infty.
\end{equation}
\end{itemize}
\end{prob}

\begin{figure}[H]
\begin{center}
\begin{tikzpicture}[scale=0.45]
\draw[dashed] (-6.2,0)--(6.2,0);
\draw[dashed] (-1.5,6) parabola bend (-2.8,3)  (-4,3.2);
\draw[dashed] (-1.5,-6) parabola bend (-2.8,-3)  (-4,-3.2);
\draw[dashed] (0.4,2) .. controls (0,0)  .. (0.4,-2);
\draw[dashed] (4,3.2)--(0.4,2);
\draw[dashed] (4,-3.2)--(0.4,-2);
\draw[dashed] (0.4,2).. controls (-1,2.0)..(-2.8,3);
\draw[dashed] (0.4,-2).. controls (-1,-2.0)..(-2.8,-3);
\draw[thick] (0,0).. controls (-1,-0.8)..(-2.8,-3);
\draw[thick] (0,0).. controls (-1,0.8)..(-2.8,3);
\draw[thick] (-1,0.8)--(0.4,2);
\draw[thick] (-1,-0.8)--(0.4,-2);
\draw[thick] (0.4,-2)--(0.5,-3);
\draw[thick] (0.4,2)--(0.5,3);
\draw[thick] (0.5,3)..controls(-1.8,3.2)..(-2.8,3);
\draw[thick] (0.5,-3)..controls(-1.8,-3.2)..(-2.8,-3);
\draw[thick] (0,0)..controls(1.8,1)..(0.4,2);
\draw[thick] (0,0)..controls(1.8,-1)..(0.4,-2);
\draw[ ](0.4,2.3)node[right] {\footnotesize$\beta$} (0.4,-2.3)node[right] {\footnotesize$\overline{\beta}$};
\draw[ ](-2.8,3)node[above] {\footnotesize$\alpha$} (-2.8,-3)node[below] {\footnotesize$\overline{\alpha}$};
\draw[ ](0.3,0)node[below] {\footnotesize$\mu$} (4,-3.2)node[below] {\footnotesize$\overline{E}_{2}$};
\draw[ ](4,3.2)node[above] {\footnotesize$E_{2}$} (-4,-3.2)node[left] {\footnotesize$\overline{E}_{1}$};
\draw[ ](-4,3.2)node[left] {\footnotesize$E_{1}$} ;
\draw[ ](-1.3,2.6)node[]{\footnotesize$V_{1}$} (-1.1,1.8)node[]{\footnotesize$V_{2}$};
\draw[ ](-0.2,1)node[]{\footnotesize$V_{3}$} (0.6,1)node[]{\footnotesize$V_{4}$};
\draw[ ](-1.3,-2.6)node[]{\footnotesize$V_{5}$} (-1.1,-1.8)node[]{\footnotesize$V_{6}$};
\draw[ ](-0.2,-1)node[]{\footnotesize$V_{7}$} (0.6,-1)node[]{\footnotesize$V_{8}$};
\end{tikzpicture}
\end{center}
\caption{The distribution of $V_{j}$, $j=1,2,\cdots,8$.}\label{vj}
\end{figure}
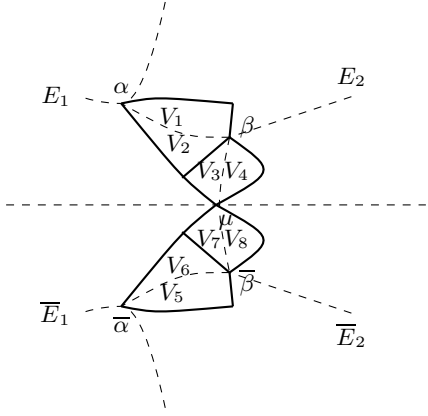

\begin{figure}[H]
\begin{center}
\begin{tikzpicture}[scale=0.45]
\draw[thick,->] (4,-3.2)--(4,-1.6);
\draw[thick,->] (4,-1.6)--(4,1.6);
\draw[thick,-] (4,1.6)--(4,3.2);
\draw[thick,->] (-4,-3.2)--(-4,-1.6);
\draw[thick,->] (-4,-1.6)--(-4,1.6);
\draw[thick,-] (-4,1.6)--(-4,3.2);
\draw[dashed] (-1.5,6) parabola bend (-2.8,3)  (-4,3.2);
\draw[dashed] (-1.5,-6) parabola bend (-2.8,-3)  (-4,-3.2);
\draw[dashed] (0.4,2) .. controls (0,0)  .. (0.4,-2);
\draw[dashed] (4,3.2)--(0.4,2);
\draw[dashed] (4,-3.2)--(0.4,-2);
\draw[dashed] (0.4,2).. controls (-1,2.0)..(-2.8,3);
\draw[dashed] (0.4,-2).. controls (-1,-2.0)..(-2.8,-3);
\draw[thick] (-2.8,3)--(-6.2,5);
\draw[thick,->] (-4.1,3.8)--(-4,3.7);
\draw[thick] (-2.8,-3)--(-6.2,-5);
\draw[thick,->] (-4.0,-3.7)--(-4.1,-3.8);
\draw[thick] (0.4,2)--(0.5,3);
\draw[thick,->] (0.42,2.3)--(0.45,2.4);
\draw[thick] (0.5,3)--(6.2,5);
\draw[thick,->] (4.1,4.25)--(4.2,4.3);
\draw[thick] (0.4,-2)--(0.5,-3);
\draw[thick,->] (0.45,-2.4)--(0.42,-2.3);
\draw[thick] (0.5,-3)--(6.2,-5);
\draw[thick,->] (4.2,-4.3)--(4.1,-4.25);
\draw[dashed] (-6.2,0)--(6.2,0);
\draw[thick] (0.4,2).. controls (-1,2.0)..(-2.8,3);
\draw[thick] (0.4,-2).. controls (-1,-2.0)..(-2.8,-3);
\draw[thick] (0.4,2) .. controls (0,0)  .. (0.4,-2);
\draw[thick,->] (0.2,0.9)--(0.2,1);
\draw[thick,->] (0.2,-1)--(0.2,-0.9);
\draw[thick,->] (-0.9,2.05)--(-1,2.1);
\draw[thick,->] (-0.9,-2.05)--(-1,-2.1);
\draw[thick] (0,0).. controls (-1,-0.8)..(-2.8,-3);
\draw[thick] (0,0).. controls (-1,0.8)..(-2.8,3);
\draw[thick] (-1,0.8)--(0.4,2);
\draw[thick] (-1,-0.8)--(0.4,-2);
\draw[thick] (0.5,3)..controls(-1.8,3.2)..(-2.8,3);
\draw[thick] (0.5,-3)..controls(-1.8,-3.2)..(-2.8,-3);
\draw[thick] (0,0)..controls(1.8,1)..(0.4,2);
\draw[thick] (0,0)..controls(1.8,-1)..(0.4,-2);
\draw[thick,->] (-0.9,3.15)--(-1,3.14);
\draw[thick,->] (-1.5,1.4)--(-1.55,1.45);
\draw[thick,->] (-0.3,1.4)--(-0.25,1.45);
\draw[thick,->] (1.0,1.55)--(0.9,1.6);
\draw[thick,->] (1.0,-0.6)--(0.9,-0.55);
\draw[thick,->] (-0.5,-0.42)--(-0.45,-0.4);
\draw[thick,->] (-2.0,-2)--(-2.05,-2.1);
\draw[thick,->] (-1.3,-3.15)--(-1.35,-3.16);
\draw[ ](0.4,2)node[right] {\footnotesize$\beta$} (0.4,-2)node[right] {\footnotesize$\overline{\beta}$};
\draw[ ](-2.8,3)node[below] {\footnotesize$\alpha$} (-2.8,-3)node[above] {\footnotesize$\overline{\alpha}$};
\draw[ ](0.3,0)node[below] {\footnotesize$\mu$} (4,-3.2)node[below] {\footnotesize$\overline{E}_{2}$};
\draw[ ](4,3.2)node[above] {\footnotesize$E_{2}$} (-4,-3.2)node[left] {\footnotesize$\overline{E}_{1}$};
\draw[ ](-4,3.2)node[left] {\footnotesize$E_{1}$} ;
\draw[ ](4.1,4.5)node[left] {\footnotesize$1$} (-4.1,4.3)node[right] {\footnotesize$2$};
\draw[ ](-4.1,1.5)node[left] {\footnotesize$3$} (4.1,1.5)node[right] {\footnotesize$4$};
\draw[ ](-1.5,3)node[above] {\footnotesize$5$} (-1.5,2.2)node[above] {\footnotesize$6$}  (-1.5,1.5)node[below] {\footnotesize$7$};
\draw[ ](-0.6,1)node[above] {\footnotesize$8$} (0.4,2.6)node[right] {\footnotesize$9$};
\draw[ ](1.2,1)node[right] {\footnotesize$10$} (0.2,0.8)node[left] {\footnotesize$11$}  (-0.5,0.4)node[left] {\footnotesize$12$};
\draw[ ](-5,5)node[above] {$-$} (2,-5)node[above] {$-$} (2,1)node[above] {$-$} (-2,-1)node[below] {$-$} (6,-2)node[below] {$-$}
(-5,-5)node[above] {$+$} (2,5)node[above] {$+$} (2,-1)node[below] {$+$} (-2,1)node[above] {$+$} (6,2)node[above] {$+$};
\end{tikzpicture}
\end{center}
\caption{The jump contours $\Sigma^{(4)}$ and $\Sigma^{(5)}$.}\label{sigma45}
\end{figure}
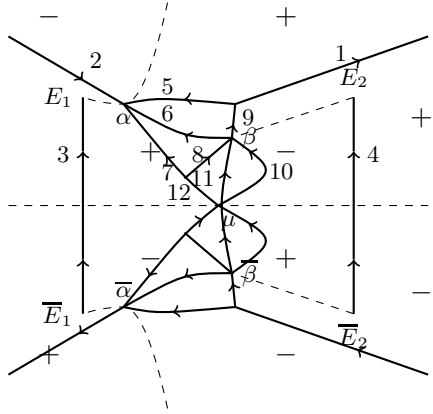

For making the jump matrix across the branch cuts $\gamma_{1}\cup\overline{\gamma}_{1}\cup\gamma_{2}\cup\overline{\gamma}_{2}\cup\gamma_{(\overline{\alpha},\overline{\beta})}\cup\gamma_{(\beta,\alpha)}$ and across $\gamma_{(\overline{\beta},\beta)}$ constant in $\lambda$. We introduce the matrix function $\hat{m}^{(5)}$ by
\begin{equation}
\hat{m}^{(5)}(x,t,\lambda)=e^{-ih(\infty)\sigma_{3}}\hat{m}^{(4)}(x,t,\lambda)e^{ih(\lambda)\sigma_{3}}.
\end{equation}
The function $h(\lambda)$ in $\hat{m}^{(5)}$ defined as
\begin{equation}
h(\lambda)=\frac{\hbar(\lambda)}{2i\pi}\int_{\Sigma^{mod}}\frac{H(s)}{s-\lambda}\mathrm{d}{s},
\end{equation}
where
\begin{equation}
\begin{split}
&H(k)=\begin{cases}
\frac{2\tau_{1}+h_{1}}{\hbar_{+}},\quad\quad\quad~~ \lambda\in\gamma_{1}\cup\overline{\gamma}_{1},\\
\frac{2\tau_{2}+h_{2}}{\hbar_{+}},\quad\quad\quad~~ \lambda\in\gamma_{(\beta,\alpha)}\cup\gamma_{(\overline{\alpha},\overline{\beta})},\\
\frac{2\tau_{3}+h_{3}}{\hbar},\quad\quad\quad~~ \lambda\in\gamma_{(\overline{\beta},\beta)},\\
\frac{h_{4}}{\hbar_{+}},\quad\quad\quad\quad\quad~ \lambda\in\gamma_{2}\cup\overline{\gamma}_{2},
\end{cases}
\end{split}
\end{equation}
with
\begin{equation}
\begin{split}
&h_{1}=-i\ln{(\hat{a}_{+}\hat{a}_{-}\delta^{2}e^{i\phi})},\quad \lambda\in\gamma_{1},\\
&h_{2}=-i\ln{(i\hat{a}_{+}\hat{a}_{-}\delta^{2})},\quad\quad \lambda\in\gamma_{(\beta,\alpha)},\\
&h_{3}=-i\ln{(\hat{a}\hat{\overline{a}})},\quad\quad\quad\quad~~ \lambda\in\gamma_{(\mu,\beta)},\\
&h_{4}=-i\ln{(\mu^{2}\delta^{2})},\quad\quad\quad~~ \lambda\in\gamma_{2}.
\end{split}\nonumber
\end{equation}
\begin{lemma}\label{h}
$h(\lambda)$ admits the following properties:\\
$\blacktriangleright$ $h(\lambda)$ admits the symmetry
\begin{equation}
h(\lambda)=\overline{h(\overline{\lambda})},\quad \lambda\in\hat{\mathbb{C}}\backslash\Sigma^{mod}.
\end{equation}
$\blacktriangleright$ $h(\lambda)$ admits the asymptotic
\begin{equation}
\begin{split}
&h(\lambda)=h(\infty)+\mathcal{O}(\lambda^{-1}),\quad \lambda\rightarrow\infty,\\
&h(\infty)=-\frac{1}{2i\pi}\int_{\Sigma^{mod}}s^{3}H(s)\mathrm{d}s.
\end{split}\label{hinfty}
\end{equation}
$\blacktriangleright$ $e^{ih\sigma_{3}}$ is bounded and analytic for $k\in\hat{\mathbb{C}}\backslash\Sigma^{mod}$.\\
$\blacktriangleright$ $h(\lambda)$ satisfies the jump conditions
\begin{equation}
\begin{split}
&h_{+}+h_{-}=\begin{cases}
2\tau_{1}-i\ln{(\hat{a}_{+}\hat{a}_{-}\delta^{2}e^{i\phi})},\quad\quad\quad~~ \lambda\in\gamma_{1},\\
2\tau_{1}+i\ln{(\hat{\overline{a}}_{+}\hat{\overline{a}}_{-}\delta^{-2}e^{-i\phi})},\quad~~~~ \lambda\in\overline{\gamma}_{1},\\
2\tau_{2}-i\ln{(i\hat{a}\hat{b}\delta^{2})},\quad\quad\quad\quad\quad~~ \lambda\in\gamma_{(\beta,\alpha)},\\
2\tau_{2}+i\ln{(-i\hat{\overline{a}}\hat{\overline{b}}\delta^{-2})},\quad\quad\quad\quad~ \lambda\in\gamma_{(\overline{\alpha},\overline{\beta})},\\
-i\ln{(\mu^{2}\delta^{2})},\quad\quad\quad\quad\quad\quad\quad~~ \lambda\in\gamma_{2}\cup\overline{\gamma}_{2},
\end{cases}\\
&h_{+}-h_{-}=\begin{cases}
2\tau_{3}-i\ln{(\hat{a}\hat{\overline{a}})},\quad\quad\quad\quad\quad\quad\quad \lambda\in\gamma_{(\mu,\beta)},\\
2\tau_{3}+i\ln{(\hat{a}\hat{\overline{a}})},\quad\quad\quad\quad\quad\quad\quad \lambda\in\gamma_{(\overline{\beta},\mu)}.
\end{cases}
\end{split}\nonumber
\end{equation}
\end{lemma}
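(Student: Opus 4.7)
The plan is to verify the four properties of $h(\lambda)$ by combining the Plemelj formulas for the Cauchy integral $\frac{1}{2i\pi}\int_{\Sigma^{mod}} \frac{H(s)}{s-\lambda}\mathrm{d}s$ with the jump and symmetry properties of the radical $\hbar(\lambda)$ already established in the definition of the $g$-function. Recall $\hbar$ is analytic and single-valued on $\hat{\mathbb{C}}\backslash\Sigma^{mod}$, satisfies $\hbar_{+}=-\hbar_{-}$ on $\gamma_{1}\cup\overline{\gamma}_{1}\cup\gamma_{2}\cup\overline{\gamma}_{2}\cup\gamma_{(\beta,\alpha)}\cup\gamma_{(\overline{\alpha},\overline{\beta})}$, is analytic across the cut $\gamma_{(\overline{\beta},\beta)}$ (since it has no branch points there), and behaves like $\hbar(\lambda)=\lambda^{4}+O(\lambda^{3})$ as $\lambda\to\infty$, with the Schwarz symmetry $\overline{\hbar(\overline{\lambda})}=\hbar(\lambda)$ because all eight branch points come in conjugate pairs.

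First, for the symmetry, I would note that each piece of $H(s)$ is constructed so that its reflection across the real axis matches its conjugate: $\hat{a}, \hat{\overline{a}}$ are Schwarz-reflective pairs, $\delta$ satisfies $\delta(\lambda)=\overline{\delta(\overline{\lambda})}^{-1}$ by Lemma \ref{delta}, and the constants $\tau_{j}$ are real. Combined with $\overline{\hbar_{+}(\overline{s})}=\hbar_{-}(s)=-\hbar_{+}(s)$ on the symmetric cuts (and similarly on $\gamma_{(\overline{\beta},\beta)}$), a direct change of variables $s\mapsto \overline{s}$ in the Cauchy integral yields $\overline{h(\overline{\lambda})}=h(\lambda)$. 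Next, for the asymptotics, I expand the kernel as $\frac{1}{s-\lambda}=-\sum_{n\geq 0}\lambda^{-n-1}s^{n}$ and multiply by $\hbar(\lambda)=\lambda^{4}+c_{3}\lambda^{3}+c_{2}\lambda^{2}+c_{1}\lambda+c_{0}+O(\lambda^{-1})$. The claim $h(\lambda)=h(\infty)+O(\lambda^{-1})$ is equivalent to the vanishing of the four leading moments $\int_{\Sigma^{mod}} s^{j}H(s)\mathrm{d}s = 0$ for $j=0,1,2$ and a cancellation of the remaining polynomial growth of $\hbar$ — the three free real parameters $\tau_{1},\tau_{2},\tau_{3}$ are precisely imposed for this purpose, together with the third vanishing relation built into the choice of $H$ on the $a$-cycles. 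Once these three moment conditions are in hand, only the $s^{3}H(s)/\lambda$ term survives to leading order and collecting it gives the stated formula for $h(\infty)$.

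For the boundedness of $e^{ih\sigma_{3}}$ on $\hat{\mathbb{C}}\backslash\Sigma^{mod}$, I would check the behavior of $h$ near each endpoint in $\{E_{j},\overline{E}_{j},\alpha,\overline{\alpha},\beta,\overline{\beta}\}$. The Cauchy integral has at most a $\log$-singularity at an endpoint, while $\hbar(\lambda)$ vanishes like $(\lambda-p)^{1/2}$ there, so their product is $O((\lambda-p)^{1/2}\log(\lambda-p))\to 0$; hence $h$ is continuous and bounded at the endpoints, and $e^{ih\sigma_{3}}$ is bounded there. Elsewhere $h$ is manifestly analytic by its integral representation, so analyticity of $e^{ih\sigma_{3}}$ on $\hat{\mathbb{C}}\backslash\Sigma^{mod}$ follows. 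The main obstacle I anticipate here is the careful accounting at the endpoint $\mu\in\gamma_{(\overline{\beta},\beta)}$, where $H$ switches definition from $h_{3}$ to its reflection; I expect that the matching $h_{3}|_{(\mu,\beta)}+\overline{h_{3}}|_{(\overline{\beta},\mu)}=0$ built into the definition kills any logarithmic blow-up so that $h$ remains bounded across $\mu$.

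Finally, for the jump conditions, I would apply Plemelj: on a smooth cut $\Gamma\subset\Sigma^{mod}$ where $\hbar_{+}=-\hbar_{-}$, we have
\begin{equation}
h_{\pm}(\lambda) = \pm\tfrac{1}{2}\hbar_{+}(\lambda)H(\lambda) + \frac{\hbar_{+}(\lambda)}{2i\pi}\,\mathrm{P.V.}\!\int_{\Sigma^{mod}}\frac{H(s)}{s-\lambda}\mathrm{d}s, \nonumber
\end{equation}
so $h_{+}+h_{-} = \hbar_{+}(\lambda)H(\lambda)\cdot 0$ plus twice the principal value times $\hbar_{+}$. More precisely, using $\hbar_{+}=-\hbar_{-}$ one finds $h_{+}+h_{-}=\hbar_{+}H$ directly from the definition (the two boundary values of $\hbar$ outside $\Gamma$ agree), which reduces to the stated right-hand sides by substituting the explicit formula for $H$ on each arc and using the choice of $\tau_{j}$. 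On the cut $\gamma_{(\overline{\beta},\beta)}$, $\hbar$ is analytic (no sign flip), so the same computation gives $h_{+}-h_{-}=\hbar(\lambda)H(\lambda)$, matching the second jump relation. The main obstacle is bookkeeping: keeping track of the orientation of each arc, the sign of $\hbar_{+}$ on each component, and verifying the logarithm branches in $h_{1},\dots,h_{4}$ are consistent with the symmetry $\overline{h(\overline{\lambda})}=h(\lambda)$; with these checks carried out, all four claimed properties follow.
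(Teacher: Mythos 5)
The paper states Lemma \ref{h} without any proof, so your proposal can only be judged on its own terms. Its architecture is the standard and correct one: Sokhotski--Plemelj for the Cauchy transform of $H$, combined with $\hbar_{+}=-\hbar_{-}$ on the four branch cuts and $\hbar_{+}=\hbar_{-}$ across $\gamma_{(\overline{\beta},\beta)}$, the moment conditions $\int_{\Sigma^{mod}}s^{j}H(s)\,\mathrm{d}s=0$ for $j=0,1,2$ (which is what the three constants $\tau_{1},\tau_{2},\tau_{3}$ are there to enforce), and Schwarz reflection for the symmetry; your jump computation does land on the stated formulas. One internal inconsistency: your displayed Plemelj formula gives $h_{+}+h_{-}=2\hbar_{+}\,\mathrm{P.V.}$, because the principal-value term on the minus side must carry $\hbar_{-}=-\hbar_{+}$, i.e.\ $h_{-}=\hbar_{-}\bigl(\mathrm{P.V.}-\tfrac12 H\bigr)=-\hbar_{+}\,\mathrm{P.V.}+\tfrac12\hbar_{+}H$; you then state the correct conclusion $h_{+}+h_{-}=\hbar_{+}H$ in the next sentence, but the display contradicts it and should be fixed.

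The genuine gap is in the third bullet. First, at a branch point $p$ the density $H\propto 1/\hbar_{+}$ blows up like $(s-p)^{-1/2}$, so the Cauchy transform is $O((\lambda-p)^{-1/2})$, not merely logarithmic; boundedness of $h$ there comes from this inverse square root being cancelled by the $(\lambda-p)^{1/2}$ vanishing of $\hbar$, not from an estimate of the form $(\lambda-p)^{1/2}\log(\lambda-p)$. Second, and more importantly, your resolution at $\mu$ is wrong: the two determinations of $H$ on $\gamma_{(\mu,\beta)}$ and $\gamma_{(\overline{\beta},\mu)}$ do \emph{not} match at $\mu$ --- they differ by $-2i\ln(\hat a\hat{\overline a})(\mu)/\hbar(\mu)\neq 0$ --- so $h$ genuinely acquires a logarithmic singularity at $\mu$ and is not bounded there. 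The lemma only claims that $e^{ih\sigma_{3}}$ is bounded, and this holds for a different reason: the coefficient of $\ln(\lambda-\mu)$ in $h$ is real (proportional to $\ln(\hat a\hat{\overline a})(\mu)=\ln(1+\mu|\hat\varrho(\mu)|^{2})$), so $e^{ih}\sim(\lambda-\mu)^{i\nu}$ with $\nu$ real, whose modulus $e^{-\nu\arg(\lambda-\mu)}$ is bounded. This is precisely the power-law behavior that the paper later compensates with $\tilde{\delta}$ and the exponent $\nu_{2}$ in the local model near $\mu$, so asserting that ``the matching kills any logarithmic blow-up'' would both leave this bullet unproved and mislead you in the subsequent local analysis.
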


\begin{prob}
Function $\hat{m}^{(5)}(x,t,\lambda)$ satisfies the following properties:
\begin{itemize}
\item  $\hat{m}^{(5)}(x,t,\lambda)$ is analytical in $\mathbb{C}\backslash\Sigma^{(5)}$, where $\Sigma^{(5)}$ see Figure \ref{sigma45}.
\item  $\hat{m}^{(5)}(x,t,\lambda)$ satisfies the jump condition
\begin{equation}
\hat{m}^{(5)}_{+}=\hat{m}^{(5)}_{-}\hat{J}^{(5)},
\end{equation}
where the jump matrix $\hat{J}^{(5)}=\hat{J}_{j}^{(5)}$ are given by
\begin{subequations}
\begin{align}
&\hat{J}_{1}^{(5)}=\left(
\begin{array}{cc}
1&0\\
\frac{\hat{\overline{b}}}{\hat{a}}\delta^{-2}e^{2itg}e^{2ih}&1
\end{array}
\right),\quad \hat{J}_{2}^{(5)}=\left(
\begin{array}{cc}
1&\hat{a}\hat{b}\delta^{2}e^{-2itg}e^{-2ih}\\
0&1
\end{array}
\right),\nonumber\\
&\hat{J}_{3}^{(5)}=\left(
\begin{array}{cc}
0&ie^{-2i(t\Delta_{1}+\tau_{1})}\\
ie^{2i(t\Delta_{1}+\tau_{1})}&0
\end{array}
\right),\quad \hat{J}_{4}^{(5)}=\left(
\begin{array}{cc}
0&i\\
i&0
\end{array}
\right),\nonumber\\
&\hat{J}_{5}^{(5)}=\hat{J}_{7}^{(5)}=\left(
\begin{array}{cc}
1&0\\
-\frac{e^{2itg}}{\hat{a}\hat{b}\delta^{2}}e^{2ih}&1
\end{array}
\right)\quad \hat{J}_{6}^{(5)}=\left(
\begin{array}{cc}
0&ie^{-2i(t\Delta_{2}+\tau_{2})}\\
ie^{2i(t\Delta_{2}+\tau_{2})}&0
\end{array}
\right),\nonumber\\
&\hat{J}_{8}^{(5)}=\left(
\begin{array}{cc}
1&0\\
\frac{1}{\hat{a}^{2}\hat{\overline{a}}\hat{b}}\delta^{-2}e^{2itg}e^{2ih}&1
\end{array}
\right),\quad \hat{J}_{9}^{(5)}=\left(
\begin{array}{cc}
1&0\\
-\frac{\hat{\overline{a}}}{\hat{b}}\delta^{-2}e^{2itg}e^{2ih}&1
\end{array}
\right),\nonumber\\
&\hat{J}_{10}^{(5)}=\left(
\begin{array}{cc}
1&-\frac{\hat{b}}{\hat{\overline{a}}}\delta^{2}e^{-2itg}e^{-2ih}\\
0&1
\end{array}
\right),\quad \hat{J}_{11}^{(5)}=\left(
\begin{array}{cc}
e^{2i(t\Delta_{3}+\tau_{3})}&0\\
0&e^{-2i(t\Delta_{3}+\tau_{3})}
\end{array}
\right),\nonumber\\
&\hat{J}_{12}^{(5)}=\left(
\begin{array}{cc}
1&0\\
\frac{\hat{\overline{b}}}{\hat{a}^{2}\hat{\overline{a}}}\delta^{-2}e^{2itg}e^{2ih}&1
\end{array}
\right),\nonumber
\end{align}
\end{subequations}
where the subscript of $\hat{J}^{(5)}_{j}$ denote the $j^{th}$ jump contour in Figure \ref{sigma45}.
\item $\hat{m}^{(5)}(x,t,\lambda)$ satisfies the asymptotic behavior
\begin{equation}
\hat{m}^{(5)}(x,t,\lambda)\rightarrow I,\quad \lambda\rightarrow\infty.
\end{equation}
\end{itemize}
\end{prob}
The lower half-plane can be derived by the symmetries.

\section{The Long-Time  Asymptotic}\label{section4}
Let $D_{\varepsilon}(\alpha)$, $D_{\varepsilon}(\overline{\alpha})$, $D_{\varepsilon}(\beta)$, $D_{\varepsilon}(\overline{\beta})$ and $D_{\varepsilon}(\mu)$ as the small disks of $\alpha$, $\overline{\alpha}$, $\beta$, $\overline{\beta}$ and $\mu$. Define $\mathcal{D}=D_{\varepsilon}(\alpha)\cup D_{\varepsilon}(\overline{\alpha})\cup D_{\varepsilon}(\beta)\cup D_{\varepsilon}(\overline{\beta})\cup D_{\varepsilon}(\mu)$. The approximate solution is
\begin{equation}
m^{app}=\begin{cases}
m^{\alpha},\quad\quad \lambda\in D_{\varepsilon}(\alpha),\\
m^{\beta},\quad\quad \lambda\in D_{\varepsilon}(\beta),\\
m^{\overline{\alpha}},\quad\quad \lambda\in D_{\varepsilon}(\overline{\alpha}),\\
m^{\overline{\beta}},\quad\quad \lambda\in D_{\varepsilon}(\overline{\beta}),\\
m^{\mu},\quad\quad \lambda\in D_{\varepsilon}(\mu),\\
m^{mod},\quad elsewhere,
\end{cases}
\end{equation}
and the jump contour $\Sigma^{app}=\Sigma^{mod}\cup\partial\mathcal{D}\cup\mathcal{A}\cup\overline{\mathcal{A}}\cup\mathcal{Z}\cup\overline{\mathcal{Z}}\cup\mathcal{X}$. $\mathcal{Y}$, $\overline{\mathcal{Y}}$, $\mathcal{Z}$, $\overline{\mathcal{Z}}$ and $\mathcal{X}$ are defined in the next subsections.

\subsection{the model problem}
From the fifth deformation of the jump contour, the jump matrix approaches the identity matrix as $t\rightarrow\infty$ on $\Sigma^{(5)}\setminus(\Sigma_{1}\cup\Sigma_{2}\cup\gamma_{(\overline{\alpha},\overline{\beta})}\cup\gamma_{(\beta,\alpha)}\cup\gamma_{(\overline{\beta},\beta)})$. Inspired by this idea, for $t\rightarrow\infty$, the solution of function $\hat{m}^{(5)}$ approaches the solution of function $m^{mod}$. Function $m^{mod}$ admits the RH problem
\begin{equation}
m^{mod}_{+}(x,t,\lambda)=m^{mod}_{-}(x,t,\lambda)v^{mod}(x,t,\lambda),\quad \lambda\in\Sigma^{mod},\label{modRH}
\end{equation}
where define $\Sigma^{mod}=\gamma_{1}\cup\overline{\gamma}_{1}\cup\gamma_{2}\cup\overline{\gamma}_{2}\cup\gamma_{(\beta,\alpha)}\cup\gamma_{(\overline{\alpha},\overline{\beta})}\cup\gamma_{(\overline{\beta},\beta)}$
and the jump matrix
\begin{equation}
v^{mod}=\begin{cases}
\left(
\begin{array}{cc}
0&ie^{-2i(t\Delta_{1}+\tau_{1})}\\
ie^{2i(t\Delta_{1}+\tau_{1})}&0
\end{array}
\right),\quad \lambda\in\gamma_{1}\cup\overline{\gamma}_{1},\\
\left(
\begin{array}{cc}
0&i\\
i&0
\end{array}
\right),\quad\quad\quad\quad\quad\quad\quad\quad\quad\quad~ \lambda\in\gamma_{2}\cup\overline{\gamma}_{2},\\
\left(
\begin{array}{cc}
0&ie^{-2i(t\Delta_{2}+\tau_{2})}\\
ie^{2i(t\Delta_{2}+\tau_{2})}&0
\end{array}
\right),\quad \lambda\in\gamma_{(\beta,\alpha)}\cup\gamma_{(\overline{\alpha},\overline{\beta})},\\
\left(
\begin{array}{cc}
e^{2i(t\Delta_{3}+\tau_{3})}&0\\
0&e^{-2i(t\Delta_{3}+\tau_{3})}
\end{array}
\right),\quad~~ \lambda\in\gamma_{(\overline{\beta},\beta)},\\
\end{cases}\nonumber
\end{equation}
Define a vector valued function $\mathbf{M}(\lambda,u,e)$ by
\begin{equation}
\mathbf{M}(\lambda,u,e)=\left(
\frac{\Theta(\varphi(\lambda^{+})+u+e)}{\Theta(\varphi(\lambda^{+})+e)},\frac{\Theta(\varphi(\lambda^{+})-u-e)}{\Theta(\varphi(\lambda^{+})-e)}
\right),\quad \lambda\in\mathbb{C}\backslash\Sigma^{mod},\quad u,e\in\mathbb{C}^{3},\label{Theta}
\end{equation}
where $\Theta$ is the Riemann theta function
\begin{equation}
\Theta(z)=\sum_{N\in\mathbb{Z}^{3}}e^{2i\pi(\frac{1}{2}N^{T}\tau N+N^{T}z)},\quad z\in\mathbb{C}^{3}.\label{theta}
\end{equation}
Riemann theta function $\Theta$ have the properties for all $z\in\mathbb{C}^{3}$,
\begin{equation}
\Theta(z+e^{(j)})=\Theta(z),\quad \Theta(z+\tau^{(j)})=e^{2i\pi(-z_{j}-\frac{\tau_{ij}}{2})}\Theta(z),\quad \Theta(z)=\Theta(-z),\quad j=1,2,3.
\end{equation}
where $\tau=(\tau_{jl})_{3\times3}$ is a $3\times3$ period matrix. $\tau$ defined by $\tau_{jl}=\int_{b_{j}}\zeta_{l}$ (see \cite{hmf1992} for its detailed informations), where $\int_{a_{i}}\zeta_{j}=\delta_{ij}$. $\zeta_{j}$ is given by $\zeta_{j}=\sum\limits_{l=1}^{3}A_{jl}\hat{\zeta}_{l}$, where $\hat{\zeta}_{l}=\hbar^{-1}\lambda^{l-1}$ and $(A^{-1})_{jl}=\int_{a_{j}}\hat{\zeta}_{l}$. Since $\hbar$ admits $\hbar(\lambda^{+})=-\hbar(\lambda^{-})$, we have the symmetry for $\zeta(\lambda^{+})=-\zeta(\lambda^{-})$. And
\begin{equation}
\int_{a_{j}^{+}}\zeta=\frac{1}{2}\int_{a_{j}}\zeta=\frac{1}{2}e^{(j)},\quad \int_{b_{j}^{+}}\zeta=\frac{1}{2}\int_{b_{j}}\zeta=\frac{1}{2}\tau^{(j)},
\end{equation}
$a_{j}^{+}$ and $b_{j}^{+}$are the restrictions of $a_{j}$ and $b_{j}$ to the upper sheet, $e^{(j)}$ is the $j$th column of the identity matrix $I_{3\times3}$, $\tau^{(j)}$ is also. $\varphi$ in (\ref{Theta})  is Abel map $\varphi:M\rightarrow\mathbb{C}^{3}$ with base point $\overline{E}_{2}$,
\begin{equation}
\varphi(P)=\int_{\overline{E}_{2}}^{P}\zeta, \quad P\in M.\label{varphi}
\end{equation}
And $\varphi$ satisfy the jump conditions
\begin{equation}
\begin{split}
&\varphi_{+}(\lambda^{+})+\varphi_{-}(\lambda^{+})=-\varphi_{+}(\lambda^{-})-\varphi_{-}(\lambda^{-})=\begin{cases}
\tau^{(1)},\quad\quad\quad\quad\quad\quad~~ \lambda\in\gamma_{1}\cup\overline{\gamma}_{1},\\
\tau^{(2)},\quad\quad\quad\quad\quad\quad~~\lambda\in\gamma_{(\beta,\alpha)},\\
\tau^{(2)}+e^{(1)}+e^{(2)},\quad~ \lambda\in\gamma_{(\overline{\alpha},\overline{\beta})},\\
0,\quad\quad\quad\quad\quad\quad\quad\quad \lambda\in\gamma_{2}\cup\overline{\gamma}_{2},
\end{cases}\\
&\varphi_{+}(\lambda^{+})-\varphi_{-}(\lambda^{+})=\tau^{(2)}-\tau^{(3)}+e^{(1)}+e^{(2)},\quad \lambda\in\gamma_{(\overline{\beta},\beta)},
\end{split}\nonumber
\end{equation}
where $\varphi_{+}(\lambda^{\pm})$ and $\varphi_{-}(\lambda^{\pm})$ are the boundary value of $\varphi(l^{\pm})$ for $l\in\mathbb{C}$ approaches $\lambda$ from the right and left of the contour, respectively.

The vector valued function $\mathbf{M}(\lambda,u,e)$ satisfy the jump condition
\begin{equation}
\mathbf{M}_{+}(\lambda,u,e)=\mathbf{M}_{-}(\lambda,u,e)\begin{cases}
\left(
\begin{array}{cc}
0&e^{2i\pi u_{1}}\\
e^{-2i\pi u_{1}}&0
\end{array}
\right),\quad\quad\quad\quad\quad \lambda\in\gamma_{1}\cup\overline{\gamma}_{1},\\
\left(\begin{array}{cc}
0&1\\
1&0
\end{array}
\right),\quad\quad\quad\quad\quad\quad\quad\quad\quad~~ \lambda\in\gamma_{2}\cup\overline{\gamma}_{2},\\
\left(
\begin{array}{cc}
0&e^{2i\pi u_{2}}\\
e^{-2i\pi u_{2}}&0
\end{array}
\right),\quad\quad\quad\quad\quad \lambda\in\gamma_{(\beta,\alpha)}\cup\gamma_{(\overline{\alpha},\overline{\beta})},\\
\left(
\begin{array}{cc}
e^{-2i\pi(u_{2}-u_{3})}&0\\
0&e^{2i\pi(u_{2}-u_{3})}
\end{array}
\right),\quad \lambda\in\gamma_{(\overline{\beta},\beta)}.
\end{cases}\nonumber
\end{equation}
Define a vector valued function $\mathbf{N}(\lambda,u,e)$ by
\begin{equation}
\mathbf{N}(\lambda,u,e)=\frac{1}{2}\left(
\begin{array}{cc}
(\nu_{1}+\nu_{1}^{-1})\mathbf{M}_{1}(\lambda,u,e)&(\nu_{1}-\nu_{1}^{-1})\mathbf{M}_{2}(\lambda,u,e)\\
(\nu_{1}-\nu_{1}^{-1})\mathbf{M}_{1}(\lambda,u,-e)&(\nu_{1}+\nu_{1}^{-1})\mathbf{M}_{2}(\lambda,u,-e)
\end{array}
\right),\quad \lambda\in\mathbb{C}\backslash\Sigma^{mod},\quad u,e\in\mathbb{C}^{3},\nonumber
\end{equation}
where
\begin{equation}
\nu_{1}(\lambda)=\left(
\frac{(\lambda-E_{1})(\lambda-E_{2})(\lambda-\alpha)(\lambda-\beta)}{(\lambda-\overline{E}_{1})(\lambda-\overline{E}_{2})(\lambda-\overline{\alpha})(\lambda-\overline{\beta})}
\right)^{1/4},\quad \lambda\in\mathbb{C}\backslash\Sigma^{mod}.
\end{equation}
For $k\rightarrow\infty$, $\nu_{1}(\lambda)=1+O(\lambda^{-1})$. Let $\hat{\nu}_{1}$ denote the function $M\rightarrow\hat{\mathbb{C}}$ which is given by $\nu_{1}^{2}$ on the upper sheet and by $-\nu_{1}^{2}$ on the lower sheet of $M$. $\hat{\nu}_{1}(\lambda^{\pm})=\pm\nu_{1}^{2}(\lambda)$ for $\lambda\in\Sigma^{mod}$. Then $\hat{\nu}_{1}$ is a meromorphic function on $M$. Noting that $\hat{\nu}_{1}$ has four simple zeros at $E_{1}$, $E_{2}$, $\alpha$ and $\beta$, we see that $\hat{\nu}_{1}$ has degree four. Hence, function $\hat{\nu}_{1}-1$ has four zeros on $M$ counting multiplicity. These zeros are $\infty^{+}, P_{1}, P_{2}, P_{3}\in M$, where $P_{1}P_{2}P_{3}=D$.
$\mathbf{N}(\lambda,u,e)$ satisfies the jump condition
\begin{equation}
\mathbf{N}_{+}(\lambda,u,e)=\mathbf{N}_{-}(\lambda,u,e)\begin{cases}
\left(
\begin{array}{cc}
0&ie^{2i\pi u_{1}}\\
ie^{-2i\pi u_{1}}&0
\end{array}
\right),\quad\quad\quad\quad\quad\quad \lambda\in\gamma_{1}\cup\overline{\gamma}_{1},\\
\left(\begin{array}{cc}
0&i\\
i&0
\end{array}
\right),\quad\quad\quad\quad\quad\quad\quad\quad\quad\quad\quad~ \lambda\in\gamma_{2}\cup\overline{\gamma}_{2},\\
\left(
\begin{array}{cc}
0&ie^{2i\pi u_{2}}\\
ie^{-2i\pi u_{2}}&0
\end{array}
\right),\quad\quad\quad\quad\quad\quad \lambda\in\gamma_{(\beta,\alpha)}\cup\gamma_{(\overline{\alpha},\overline{\beta})},\\
\left(
\begin{array}{cc}
-e^{-2i\pi(u_{2}-u_{3})}&0\\
0&-e^{2i\pi(u_{2}-u_{3})}
\end{array}
\right),\quad \lambda\in\gamma_{(\overline{\beta},\beta)}.
\end{cases}\nonumber
\end{equation}

Define the complex vector $d(\xi)\in\mathbb{C}^{3}$ by $d=\varphi(D)+\mathcal{K}$, where $\varphi{(D)}=\sum\limits_{1}^{3}\varphi(P_{j})$ and $\mathcal{K}=\frac{1}{2}(e^{(1)}+e^{(3)}+\tau^{(1)}+\tau^{(2)}+\tau^{(3)})$. Define the vector function $v(\xi,t)=v(t)=-\frac{1}{\pi}(t\bigtriangleup_{1}+\tau_{1},t\bigtriangleup_{2}+\tau_{2},t(\bigtriangleup_{2}-\bigtriangleup_{3})+\tau_{2}-\tau_{3}+\frac{\pi}{2})$.

 The solution of the model RH problem shown as follows:
\begin{lemma}\label{mmod}
For each choice of the constants $\{\bigtriangleup_{j},\tau_{j}\}_{j=1}^{3}$ and for $t\geq0$, the function $m^{mod}(x,t,\lambda)$ defined by
\begin{equation}
m^{mod}(x,t,\lambda)=\mathbf{N}(\infty,v(t),d)^{-1}\mathbf{N}(\lambda,v(t),d),\quad \lambda\in\hat{\mathbb{C}}\backslash\Sigma^{mod},\label{modN}
\end{equation}
is the unique solution of the RH problem (\ref{modRH}). And this solution satisfies
\begin{equation}
\lim_{\lambda\rightarrow\infty}\lambda m^{mod}_{12}(x,t,\lambda)
=-\frac{i}{2}\mathrm{Im}(E_{1}+E_{2}+\alpha+\beta)\frac{\Theta(\varphi(\infty^{+})+d)\Theta(\varphi(\infty^{+})-v(t)-d)}{\Theta(\varphi(\infty^{+})+v(t)+d)\Theta(\varphi(\infty^{+})-d)}.\nonumber
\end{equation}

\end{lemma}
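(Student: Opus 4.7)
The plan is to verify directly that the candidate \eqref{modN} satisfies every item of the model RH problem \eqref{modRH} and then to extract the $(1,2)$ asymptotic. First I will check that the normalizing matrix $\mathbf{N}(\infty,v(t),d)$ is invertible. Since $\nu_{1}(\infty)=1$, the off-diagonal entries of $\mathbf{N}(\infty,v,d)$ vanish, so invertibility reduces to non-vanishing of the two theta ratios $\mathbf{M}_{1}(\infty,v,d)$ and $\mathbf{M}_{2}(\infty,v,-d)$. The choice $d=\varphi(D)+\mathcal{K}$ is exactly the Riemann-constant-plus-divisor shift that places $\varphi(\infty^{+})\pm d$ off the theta divisor, by the standard non-special-divisor criterion; this is where the prescription of $D=P_{1}P_{2}P_{3}$ via the zeros of $\hat{\nu}_{1}-1$ is used.

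Next I would verify the jump relations. Because $\mathbf{N}(\infty,v,d)^{-1}$ is constant in $\lambda$, the jump $m^{mod}_{+}(m^{mod}_{-})^{-1}$ coincides with the already-tabulated jump of $\mathbf{N}(\lambda,u,e)$ at $u=v(t)$, $e=d$. Substituting the coordinates $v_{1}=-(t\Delta_{1}+\tau_{1})/\pi$, $v_{2}=-(t\Delta_{2}+\tau_{2})/\pi$, and $v_{2}-v_{3}=-(t\Delta_{3}+\tau_{3})/\pi+1/2$ turns $2\pi u_{1}$ and $2\pi u_{2}$ into the desired exponents on $\gamma_{1}\cup\overline{\gamma}_{1}$ and on $\gamma_{(\beta,\alpha)}\cup\gamma_{(\overline{\alpha},\overline{\beta})}$, while the extra $+1/2$ built into the third component produces a factor $e^{-i\pi}=-1$ that exactly cancels the minus sign carried by $\mathbf{N}$'s jump on $\gamma_{(\overline{\beta},\beta)}$, yielding the required $\mathrm{diag}(e^{2i(t\Delta_{3}+\tau_{3})},e^{-2i(t\Delta_{3}+\tau_{3})})$. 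The jumps on $\gamma_{2}\cup\overline{\gamma}_{2}$ agree by inspection, and $m^{mod}(x,t,\lambda)\to I$ as $\lambda\to\infty$ holds by construction.

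For uniqueness I would apply a standard Liouville argument: if $\tilde m^{mod}$ is a second solution, $R(\lambda):=\tilde m^{mod}(m^{mod})^{-1}$ has trivial jumps across $\Sigma^{mod}$; at the eight branch points $E_{j},\overline{E}_{j},\alpha,\overline{\alpha},\beta,\overline{\beta}$ the fourth-root singularities of $\nu_{1}$ match between $m^{mod}$ and $\tilde m^{mod}$ (both are modeled locally on $\mathbf{N}$) and cancel upon inversion, so $R$ has at most integrable singularities there. Removability of the isolated singularities together with $R(\infty)=I$ forces $R\equiv I$.

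Finally I would read off the asymptotic. Expanding the rational function inside the fourth root of $\nu_{1}$ in powers of $1/\lambda$ gives
\begin{equation}
\nu_{1}(\lambda)=1-\frac{i\,\mathrm{Im}(E_{1}+E_{2}+\alpha+\beta)}{2\lambda}+\mathcal{O}(\lambda^{-2}),\quad\lambda\to\infty,\nonumber
\end{equation}
hence
\begin{equation}
\nu_{1}(\lambda)-\nu_{1}^{-1}(\lambda)=-\frac{i\,\mathrm{Im}(E_{1}+E_{2}+\alpha+\beta)}{\lambda}+\mathcal{O}(\lambda^{-2}).\nonumber
\end{equation}
Inserting this together with $\mathbf{N}(\infty,v,d)^{-1}=\mathrm{diag}(\mathbf{M}_{1}(\infty,v,d)^{-1},\mathbf{M}_{2}(\infty,v,-d)^{-1})$ into the $(1,2)$ entry of \eqref{modN} and writing out $\mathbf{M}_{1}(\infty,v,d)$ and $\mathbf{M}_{2}(\infty,v,d)$ as theta quotients yields the claimed limit. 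The main obstacle I anticipate is the sign-and-phase bookkeeping on $\gamma_{(\overline{\beta},\beta)}$: the $-1$ in $\mathbf{N}$'s jump, the $+\pi/2$ shift inside $v(t)$, the fourth-root conventions for $\nu_{1}$, and the Riemann constant $\mathcal{K}$ must all be consistent, but once this is pinned down the rest is essentially verification against data already compiled in the lemmas above.
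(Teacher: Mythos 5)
Your overall route coincides with the paper's: invertibility of $\mathbf{N}(\infty,v(t),d)$ reduced to non-vanishing of the two theta ratios, the jump of $m^{mod}$ read off from the tabulated jump of $\mathbf{N}(\lambda,u,e)$ at $u=v(t)$, $e=d$ (with the $\pi/2$ shift in the third component of $v$ absorbing the sign on $\gamma_{(\overline{\beta},\beta)}$), and the asymptotics obtained from the expansion $\nu_{1}(\lambda)=1-\tfrac{i}{2\lambda}\mathrm{Im}(E_{1}+E_{2}+\alpha+\beta)+\mathcal{O}(\lambda^{-2})$; your explicit Liouville uniqueness argument is a reasonable supplement that the paper leaves implicit.

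However, there is a genuine gap: you never verify that $m^{mod}$ is actually analytic in $\hat{\mathbb{C}}\backslash\Sigma^{mod}$. The entries of $\mathbf{M}(\lambda,v(t),d)$ have the denominators $\Theta(\varphi(\lambda^{+})\pm d)$, and on a genus-3 surface $\Theta(\varphi(P)-d)$ with $d=\varphi(D)+\mathcal{K}$ vanishes precisely on the divisor $D=P_{1}P_{2}P_{3}$ (Riemann vanishing theorem), so the theta quotients do have poles at interior points of $\hat{\mathbb{C}}\backslash\Sigma^{mod}$. You invoke the divisor $D$ and the zeros of $\hat{\nu}_{1}-1$ only to place $\varphi(\infty^{+})\pm d$ off the theta divisor, i.e.\ for invertibility at infinity; what is actually needed is the cancellation of those interior poles by the zeros of the prefactors $\tfrac12(\nu_{1}\mp\nu_{1}^{-1})=\tfrac{1}{2\nu_{1}}(\hat{\nu}_{1}(\lambda^{\pm})\mp1)$. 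This is exactly what the paper's proof supplies by rewriting $\mathbf{N}$ through the functions $\mathbf{P}_{1},\mathbf{P}_{2}$, where the factor $\hat{\nu}_{1}-1$ multiplies the theta quotient and its zero set coincides with the zero divisor of $\Theta(\varphi(\cdot)-d)$; this same rewriting also yields the bound $|\mathbf{N}(\lambda,v(t),d)|\leq C|\lambda-\lambda_{0}|^{-1/4}$ near the eight branch points, which you use in the uniqueness step but never establish, and which is also needed to place the candidate in the $L^{2}$ setting in which the RH problem is posed. Without the pole-cancellation and this local bound, the claim that \eqref{modN} solves \eqref{modRH} is not yet proved.
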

\begin{proof}
Define a multivalued meromorphic function $\mathbf{P}_{j}(P),~ j=1,2$ by
\begin{equation}
\mathbf{P}_{1}(P)=(\hat{\nu}_{1}(P)-1)\frac{\Theta(\varphi(P)-v(t)-d)}{\Theta(\varphi(P)-d)},\quad \mathbf{P}_{2}(P)=(\hat{\nu}_{1}(P)-1)\frac{\Theta(\varphi(P)+v(t)-d)}{\Theta(\varphi(P)-d)}.\nonumber
\end{equation}
Using the symmetry $\varphi(\lambda^{+})=-\varphi(\lambda^{-})$, then
\begin{equation}
\mathbf{N}(\lambda,v(t),d)=\frac{1}{2\nu(\lambda)}\left(
\begin{array}{cc}
-\mathbf{P}_{1}(\lambda^{-})&\mathbf{P}_{1}(\lambda^{+})\\
\mathbf{P}_{2}(\lambda^{+})&-\mathbf{P}_{2}(\lambda^{-})
\end{array}
\right),\quad \lambda\in\hat{\mathbb{C}}\backslash\Sigma^{mod}.
\end{equation}
The function $\Theta(\varphi(\lambda^{\pm})\pm v(t)-d)$ are bounded on $\hat{\mathbb{C}}\backslash\Sigma^{mod}$. And $\Theta(\varphi(P)-d)$ has zero divisor which is a factor of $\hat{\nu}_{1}-1$. $\mathbf{N}(\lambda,v(t),d)$ is an analytic function which is bounded away from the eight branch points. Thus one can derive
\begin{equation}
|\mathbf{N}(\lambda,v(t),d)|\leq C|\lambda-\lambda_{0}|^{-1/4},\quad \lambda\in\hat{\mathbb{C}}\backslash\Sigma^{mod},
\end{equation}
where $\lambda_{0}$ is one of the eight branch points.
For $\lambda\rightarrow\infty$,
\begin{equation}
\lim_{\lambda\rightarrow\infty}\mathbf{N}(\lambda,v(t),d)=\mathbf{N}(\infty,v(t),d)=\left(\begin{array}{cc}
\mathbf{M}_{1}(\infty,v(t),d)&0\\
0&\mathbf{M}_{2}(\infty,v(t),-d)
\end{array}
\right),\nonumber
\end{equation}
where
\begin{equation}
\mathbf{M}_{1}(\infty,v(t),d)=\frac{\Theta(\varphi(\infty^{+})+v(t)+d)}{\Theta(\varphi(\infty^{+})+d)},\quad \mathbf{M}_{2}(\infty,v(t),d)=\frac{\Theta(\varphi(\infty^{+})-v(t)+d)}{\Theta(\varphi(\infty^{+})+d)}.\nonumber
\end{equation}
The values $\Theta(\varphi(\infty^{+})+v(t)+d)$ and $\Theta(\varphi(\infty^{+})+d)$ are finite and nonzero\cite{hmf1992,jdf1973}. These imply that $\mathbf{N}(\infty,v(t),d)$ is invertible. Equation (\ref{modN}) is successfully proved:
\begin{equation}
\begin{split}
&\lim_{\lambda\rightarrow\infty}\lambda m^{mod}_{12}(x,t,\lambda)\\
&=\frac{1}{\mathbf{M}_{1}(\infty,v(t),d)}\lim_{\lambda\rightarrow\infty}\lambda\mathbf{N}_{12}(\lambda,v(t),d)\\
&=\frac{\Theta(\varphi(\infty^{+})+d)}{\Theta(\varphi(\infty^{+})+v(t)+d)}\lim_{\lambda\rightarrow\infty}\frac{\lambda(\nu_{1}(\lambda)-\nu_{1}^{-1}(\lambda))}{2}\frac{\Theta(\varphi(\infty^{+})-v(t)-d)}{\Theta(\varphi(\infty^{+})-d)}\\
&=-\frac{i}{2}\mathrm{Im}(E_{1}+E_{2}+\alpha+\beta)\frac{\Theta(\varphi(\infty^{+})+d)\Theta(\varphi(\infty^{+})-v(t)-d)}{\Theta(\varphi(\infty^{+})+v(t)+d)\Theta(\varphi(\infty^{+})-d)}.
\end{split}\nonumber
\end{equation}\end{proof}

\subsection{the local model near $\alpha$, $\beta$ and $\mu$}
The jump matrix $\hat{m}^{(5)}$ of the fifth deformation with the property that $\hat{v}^{(5)}-I$ decays to zero for $\lambda\in\Sigma^{(5)}\setminus\Sigma^{mod}$ as $t\rightarrow\infty$. But this decay is not uniform decay as $\lambda$ approaches $\Sigma^{mod}$. So, for the parts of $\Sigma^{(5)}$ that lie near $\Sigma^{mod}$, it is necessary to introduce the local solutions which are better approximations of $\hat{m}^{(5)}$ than $m^{mod}$. These local approximations help us derive the approximate error estimates.

$\mathbf{\underline{local~model~near~\alpha}}$:
We define a function $m^{(\alpha0)}(x,t,k)$ for $k$ near $\alpha$ by
\begin{equation}
m^{(\alpha0)}(x,t,\lambda)=\hat{m}^{(5)}(x,t,\lambda)e^{-i(\frac{i}{2}\ln{(-\delta(\lambda))^{2}\hat{a}(\lambda)\hat{b}(\lambda)}+tg(\alpha)+h(\lambda))\sigma_{3}},\quad \lambda\in D_{\varepsilon}(\alpha)\backslash\Sigma^{(5)}.\nonumber
\end{equation}
Lemma \ref{delta} and Lemma \ref{h} imply that the exponential of $m^{(\alpha0)}$ is bounded and analytic for $\lambda\in D_{\varepsilon}(\alpha)\backslash\Sigma^{(5)}$.
The function $m^{(\alpha0)}$ satisfies the following jump condition
\begin{equation}
m_{+}^{(\alpha0)}(x,t,\lambda)=m_{-}^{(\alpha0)}(x,t,\lambda)v^{(\alpha0)}(x,t,\lambda),\quad \lambda\in\Sigma^{(5)}\cap D_{\varepsilon}(\alpha),\label{alpha0}
\end{equation}
where
\begin{equation}
v^{(\alpha0)}=\begin{cases}
\left(
\begin{array}{cc}
1&-e^{-2itg_{\alpha}}\\
0&1
\end{array}
\right),\quad \lambda\in\mathcal{A}_{1},\\
\left(
\begin{array}{cc}
1&0\\
e^{2itg_{\alpha}}&1
\end{array}
\right),\quad\quad~ \lambda\in\mathcal{A}_{2}\cup\mathcal{A}_{4},\\
\left(
\begin{array}{cc}
0&1\\
-1&0
\end{array}
\right),\quad\quad\quad~ \lambda\in\mathcal{A}_{3},
\end{cases}\nonumber
\end{equation}
with
$g_{\alpha}(\lambda)=g(\lambda)-g(\alpha)$ for $\lambda\in D_{\varepsilon}(\alpha)\backslash\gamma_{(\beta,\alpha)}$.
And $\mathcal{A}_{j}=\overline{\mathcal{S}}_{j-1}\cap\overline{\mathcal{S}}_{j}$, $\overline{\mathcal{S}}_{0}=\overline{\mathcal{S}}_{4}$. Let $\mathcal{A}=\cup\mathcal{A}_{j}$, $\overline{\mathcal{A}}$ is the conjugate of $\mathcal{A}$. The jump contour of RH problem (\ref{alpha0}) and $\mathcal{S}_{j}$ please see Figure \ref{alpha}.

For relating this RH problem to the Airy function of Appendix \ref{airy}, we introduce $\zeta(\lambda)=\left(\frac{3itg_{\alpha}(\lambda)}{2}\right)^{2/3}$. We define a parametrix $m^{\alpha}(x,t,\lambda)$ for $\hat{m}^{(5)}$ near $\alpha$ by
\begin{equation}
m^{\alpha}(x,t,\lambda)=Y_{\alpha}(x,t,\lambda)m^{Ai}(\zeta(\lambda))e^{-i(\frac{i}{2}\ln{(-\delta(\lambda))^{2}\hat{a}(\lambda)\hat{b}(\lambda)}+tg(\alpha)+h(\lambda))\sigma_{3}},\quad \lambda\in D_{\varepsilon}(\alpha)\backslash\Sigma^{(5)},\nonumber
\end{equation}
where
\begin{equation}
Y_{\alpha}(x,t,\lambda)=m^{mod}e^{-i(\frac{i}{2}\ln{(-\delta(\lambda))^{2}\hat{a}(\lambda)\hat{b}(\lambda)}+tg(\alpha)+h(\lambda))\sigma_{3}}(m_{as,N}^{Ai}\zeta(\lambda))^{-1},\quad N\geq0.\label{mode}
\end{equation}
Function $m_{as,N}^{Ai}\zeta(\lambda)$ is analytic near $\alpha$ for the jump contour $\mathcal{Y}_{3}$ satisfies (\ref{Ai}). And the first second terms in (\ref{mode}), i.e. $m^{mod}e^{-i(\frac{i}{2}\ln{(-\delta(\lambda))^{2}\hat{a}(\lambda)\hat{b}(\lambda)}+tg(\alpha)+h(\lambda))\sigma_{3}}$ also satisfies (\ref{Ai}) on $\mathcal{Y}_{3}$. According the Lemma \ref{Airy} in the Appendix \ref{airy}, we have
\begin{equation}
m^{\alpha}(\lambda)(m^{mod}(\lambda))^{-1}=I+O(t^{-N-1}),\quad t\rightarrow\infty,\quad \lambda\in\partial D_{\varepsilon}(\alpha),\quad N\geq0.\label{malpha}
\end{equation}

\begin{figure}[H]
\begin{center}
\begin{tikzpicture}[scale=0.45]
\draw[thick] (0,0) circle (4cm);
\draw[thick] (-4,0) .. controls (-2,0.3)  .. (0,0);
\draw[thick] (4,0) .. controls (2,-0.3)  .. (0,0);
\draw[thick] (2,3.5) .. controls (0.3,1)  .. (0,0);
\draw[thick] (-2,-3.5) .. controls (0.3,-1.4)  .. (0,0);
\draw[thick,->] (-2.1,0.25)-- (-2,0.25);
\draw[thick,->] (2.1,-0.25)-- (2,-0.25);
\draw[thick,->] (0.7,1.5)--(0.69,1.48);
\draw[thick,->] (-0.3,-1.84)--(-0.28,-1.79);
\draw[ ](-1.5,-1.5)node[] {$\mathcal{S}_{1}$} (0.6,-1.5)node[right] {$\mathcal{S}_{2}$};
\draw[ ](1,1.5)node[right] {$\mathcal{S}_{3}$} (-1.5,1.5)node[right] {$\mathcal{S}_{4}$};
\draw[ ](-4,0)node[left] {$\mathcal{A}_{1}$} (-2,-3.5)node[below] {$\mathcal{A}_{2}$};
\draw[ ](4,0)node[right] {$\mathcal{A}_{3}$} (2,3.5)node[above] {$\mathcal{A}_{4}$};
\draw[ ](0,0.3)node[right] {$\alpha$};
\end{tikzpicture}
\end{center}
\caption{The jump contour $\mathcal{A}_{j}$ and $\mathcal{S}_{j}$.}\label{alpha}
\end{figure}

$\mathbf{\underline{local~model~near~\beta}}$: Define $m^{(\beta0)}(x,t,\lambda)$ for $\lambda$ near $\beta$ as
\begin{equation}
m^{(\beta0)}(x,t,\lambda)=\hat{m}^{(5)}(x,t,\lambda)e^{-i(\frac{i}{2}\ln(\delta(\lambda))^{2}\hat{a}(\lambda)\hat{b}(\lambda)+h(\lambda))\sigma_{3}},\quad \lambda\in D_{\varepsilon}(\beta)\backslash\Sigma^{(5)}.\nonumber
\end{equation}
Function $m^{(\beta0)}(x,t,\lambda)$ satisfies the following jump condition
\begin{equation}
m_{+}^{(\beta0)}(x,t,\lambda)=m_{-}^{(\beta0)}(x,t,\lambda)v^{(\beta0)}(x,t,\lambda),\quad \lambda\in\Sigma^{(5)}\cap D_{\varepsilon}(\beta),
\end{equation}
where
\begin{equation}
v^{(\beta0)}=\begin{cases}
\left(
\begin{array}{cc}
1&-\frac{1}{\hat{a}\hat{\overline{a}}}e^{-2itg}\\
0&1
\end{array}
\right),\quad\quad\quad\quad\quad\quad~~ \lambda\in\mathcal{Z}_{1},\\
\left(
\begin{array}{cc}
1&0\\
\hat{a}\hat{\overline{a}}e^{2itg}&1
\end{array}
\right),\quad\quad\quad\quad\quad\quad\quad\quad \lambda\in\mathcal{Z}_{2},\\
\left(
\begin{array}{cc}
0&e^{-it(g_{+}-g_{-})}\\
-e^{it(g_{+}-g_{-})}&0
\end{array}
\right),\quad~~ \lambda\in\mathcal{Z}_{3},\\
\left(
\begin{array}{cc}
1&0\\
\frac{1}{\hat{a}\hat{\overline{a}}}e^{2itg}&1
\end{array}
\right),\quad\quad\quad\quad\quad\quad\quad\quad \lambda\in\mathcal{Z}_{4},\\
\left(
\begin{array}{cc}
\frac{e^{it(g_{+}-g_{-})}}{a\overline{a}}&0\\
0&\hat{a}\hat{\overline{a}}e^{-it(g_{+}-g_{-})}
\end{array}
\right),\quad~ \lambda\in\mathcal{Z}_{5}.\\
\end{cases}\nonumber
\end{equation}
Let $\mathcal{Z}=\cup\mathcal{Z}_{j}$, $\overline{\mathcal{Z}}$ is the conjugate of $\mathcal{Z}$. The jump contour $\mathcal{Z}_{j}$ and $\mathcal{Z}_{j}=\overline{\mathcal{T}}_{j-1}\cap\overline{\mathcal{T}}_{j}$ please see Figure \ref{beta}.
Define $g_{\beta}(\lambda)$ as
\begin{equation}
g_{\beta}(\lambda)=\int_{\beta}^{\lambda}\mathrm{d}g=\begin{cases}
g(\lambda)-g_{-}(\beta),\quad \lambda\in\mathcal{T}_{1}\cup\mathcal{T}_{2}\cup\mathcal{T}_{5},\\
g(\lambda)-g_{+}(\beta),\quad \lambda\in\mathcal{T}_{3}\cup\mathcal{T}_{4}.
\end{cases}
\end{equation}
Introduce a transformation
\begin{equation}
m^{(\beta1)}(x,t,\lambda)=m^{(\beta0)}(x,t,\lambda)A(\lambda),\quad \lambda\in D_{\varepsilon}(\beta)\backslash\Sigma^{(5)},
\end{equation}
where
\begin{equation}
A(\lambda)=\begin{cases}
(\hat{a}\hat{\overline{a}})^{-\frac{\sigma_{3}}{2}}e^{-itg_{-}(\beta)\sigma_{3}},\quad \lambda\in\mathcal{T}_{1}\cup\mathcal{T}_{2}\cup\mathcal{T}_{5},\\
(\hat{a}\hat{\overline{a}})^{\frac{\sigma_{3}}{2}}e^{-itg_{+}(\beta)\sigma_{3}},\quad~~ \lambda\in\mathcal{T}_{3}\cup\mathcal{T}_{4}.
\end{cases}
\end{equation}
We can derive that $m^{(\beta1)}(x,t,\lambda)$ satisfies a jump condition
\begin{equation}
m^{(\beta1)}_{+}(x,t,\lambda)=m^{(\beta1)}_{-}(x,t,\lambda)v^{(\beta1)}(x,t,\lambda),
\end{equation}
where the jump matrix
\begin{equation}
v^{(\beta1)}=\begin{cases}
\left(
\begin{array}{cc}
1&-e^{-2itg_{\beta}}\\
0&1
\end{array}
\right),\quad \lambda\in\mathcal{Z}_{1},\\
\left(
\begin{array}{cc}
1&0\\
e^{2itg_{\beta}}&1
\end{array}
\right),\quad\quad~ \lambda\in\mathcal{Z}_{2}\cup\mathcal{Z}_{4},\\
\left(
\begin{array}{cc}
0&1\\
-1&0
\end{array}
\right),\quad\quad\quad~ \lambda\in\mathcal{Z}_{3},\\
I,\quad\quad\quad\quad\quad\quad\quad\quad~ \lambda\in\mathcal{Z}_{5}.\\
\end{cases}\nonumber
\end{equation}
Similar with local near $\alpha$, we need to transform this model to Airy function of Appendix \ref{airy}. Let $\zeta(\lambda)=(\frac{3it}{2}g_{\beta}(\lambda))^{2/3}$. Define $m^{\beta}(x,t,\lambda)$ by
\begin{equation}
m^{\beta}(x,t,\lambda)=Y_{\beta}(x,t,\lambda)m^{Ai}(\zeta(\lambda))A^{-1}(\lambda)e^{i(\frac{i}{2}\ln(\hat{a}\hat{b}\delta^{2})+tg(\beta)+h(\lambda))\sigma_{3}},\quad \lambda\in D_{\varepsilon}(\beta)\backslash\Sigma^{(5)},\nonumber
\end{equation}
where
\begin{equation}
Y_{\beta}(x,t,\lambda)=m^{mod}e^{-i(\frac{i}{2}\ln(\hat{a}\hat{b}\delta^{2})+tg(\beta)+h(\lambda))\sigma_{3}}A(\lambda)(m_{as,N}^{Ai}(\zeta(\lambda)))^{-1}.\nonumber
\end{equation}
According the asymptotic function (\ref{Ai}) of Lemma \ref{Airy} in the Appendix \ref{airy}, one can derive
\begin{equation}
m^{\beta}(\lambda)(m^{mod}(\lambda))^{-1}=I+O(t^{-N-1}),\quad t\rightarrow\infty,\quad \lambda\in\partial D_{\varepsilon}(\beta),\quad N\geq0.\label{mbeta}
\end{equation}

\begin{figure}[H]
\begin{center}
\begin{tikzpicture}[scale=0.45]
\draw[thick] (0,0) circle (4cm);
\draw[thick] (-4,0) .. controls (-2,0.3)  .. (0,0);
\draw[thick] (4,0) .. controls (2,-0.3)  .. (0,0);
\draw[thick] (2,3.5) .. controls (0.3,1)  .. (0,0);
\draw[thick] (-2,-3.5) .. controls (0.3,-1.4)  .. (0,0);
\draw[thick] (-2,3.5) .. controls (0,1.4)  .. (0,0);
\draw[thick,->] (-2.1,0.25)-- (-2,0.25);
\draw[thick,->] (2.1,-0.25)-- (2,-0.25);
\draw[thick,->] (0.7,1.5)--(0.69,1.48);
\draw[thick,->] (-0.3,-1.84)--(-0.28,-1.79);
\draw[thick,->] (-0.6,2)--(-0.56,1.92);
\draw[ ](-1.5,-1.5)node[] {$\mathcal{T}_{4}$} (0,2)node[] {$\mathcal{T}_{2}$} (0.6,-1.5)node[right] {$\mathcal{T}_{5}$};
\draw[ ](1,1.5)node[right] {$\mathcal{T}_{1}$} (-1.5,1.5)node[right] {$\mathcal{T}_{3}$};
\draw[ ](-4,0)node[left] {$\mathcal{Z}_{4}$} (-2,-3.5)node[below] {$\mathcal{Z}_{5}$};
\draw[ ](4,0)node[right] {$\mathcal{Z}_{1}$} (2,3.5)node[above] {$\mathcal{Z}_{2}$}(-2,3.5)node[above] {$\mathcal{Z}_{3}$};
\draw[ ](0,0.3)node[right] {$\beta$};
\end{tikzpicture}
\end{center}
\caption{The jump contour $\mathcal{Z}_{j}$ and $\mathcal{T}_{j}$}\label{beta}
\end{figure}
$\mathbf{\underline{local~model~near~\mu}}$: Define $m^{(\mu0)}(x,t,\lambda)$ for $\lambda$ near $\mu$ by
\begin{equation}
m^{(\mu0)}(x,t,\lambda)=\hat{m}^{(5)}(x,t,\lambda)e^{-ih\sigma_{3}}B(\lambda),\quad \lambda\in D_{\varepsilon}(\mu)\backslash\Sigma^{(5)},
\end{equation}
where
\begin{equation}
B(\lambda)=\begin{cases}
e^{-itg_{-}(\mu)\sigma_{3}},\quad \lambda\in\mathcal{R}_{1}\cup\mathcal{R}_{2}\cup\mathcal{R}_{6},\\
e^{-itg_{+}(\mu)\sigma_{3}},\quad \lambda\in\mathcal{R}_{3}\cup\mathcal{R}_{4}\cup\mathcal{R}_{5},
\end{cases}
\end{equation}
and function $g_{\mu}(\lambda)$ for $\lambda$ near $\mu$, i.e. $\lambda\in D_{\varepsilon}(\mu)$, defined by
\begin{equation}
g_{\mu}(\lambda)=\int_{\mu}^{\lambda}\mathrm{d}g=\begin{cases}
g(\lambda)-g_{-}(\mu),\quad \lambda\in\mathcal{R}_{1}\cup\mathcal{R}_{2}\cup\mathcal{R}_{6},\\
g(\lambda)-g_{+}(\mu),\quad \lambda\in\mathcal{R}_{3}\cup\mathcal{R}_{4}\cup\mathcal{R}_{5}.
\end{cases}
\end{equation}
Function $m^{(\mu0)}(x,t,\lambda)$ satisfies jump condition
\begin{equation}
m^{(\mu0)}_{+}(x,t,\lambda)=m^{(\mu0)}_{-}(x,t,\lambda)v^{(\mu0)}(x,t,\lambda),
\end{equation}
where the jump matrix
\begin{equation}
v^{(\mu0)}=\begin{cases}
\left(
\begin{array}{cc}
1&0\\
\frac{\hat{\overline{b}}}{\hat{a}}\delta^{-2}e^{2itg_{\mu}}&1
\end{array}
\right),\quad\quad~ \lambda\in\mathcal{X}_{1},\\
\left(
\begin{array}{cc}
1&-\frac{\hat{b}}{\hat{\overline{a}}}\delta^{2}e^{-2itg_{\mu}}\\
0&1
\end{array}
\right),\quad\quad \lambda\in\mathcal{X}_{2},\\
\left(
\begin{array}{cc}
\frac{1}{a\overline{a}}&0\\
0&a\overline{a}
\end{array}
\right),\quad\quad\quad\quad\quad~ \lambda\in\mathcal{X}_{3},\\
\left(
\begin{array}{cc}
1&0\\
\frac{\hat{\overline{b}}}{\hat{a}^{2}\hat{\overline{a}}}\delta^{-2}e^{2itg_{\mu}}&1
\end{array}
\right),\quad~~ \lambda\in\mathcal{X}_{4},\\
\left(
\begin{array}{cc}
1&-\frac{\hat{b}}{\hat{a}\hat{\overline{a}}^{2}}\delta^{2}e^{-2itg_{\mu}}\\
0&1
\end{array}
\right),\quad~ \lambda\in\mathcal{X}_{5},\\
\left(
\begin{array}{cc}
\hat{a}\hat{\overline{a}}&0\\
0&\frac{1}{\hat{a}\hat{\overline{a}}}
\end{array}
\right),\quad\quad\quad\quad\quad~ \lambda\in\mathcal{X}_{6}.\\
\end{cases}\nonumber
\end{equation}
The jump contour please see Figure \ref{mu}.
For eliminating the jump across $\gamma_{(\overline{\beta},\beta)}$, we introduce a function
\begin{equation}
\tilde{\delta}(\lambda)=\exp\left[\frac{i}{2\pi}\int_{\gamma_{(\mu,\beta)}}\frac{\ln{\hat{a}(s)\hat{\overline{a}}(s)}}{s-\lambda}\mathrm{d}s\right]
\exp\left[\frac{-i}{2\pi}\int_{\gamma_{(\overline{\beta},\mu})}\frac{\ln{\hat{a}(s)\hat{\overline{a}}(s)}}{s-\lambda}\mathrm{d}s\right],\quad \lambda\in\mathbb{C}\backslash\gamma_{(\overline{\beta},\beta)}.\nonumber
\end{equation}

\begin{lemma}
The function $\tilde{\delta}(\lambda)$ admits the following properties:\\
$\blacktriangleright$ $\tilde{\delta}(\lambda)$ and $\tilde{\delta}^{-1}(\lambda)$ are bounded and analytic functions for $\lambda\in D_{\varepsilon}(\mu)\backslash\gamma_{(\overline{\beta},\beta)}$.\\
$\blacktriangleright$ $\tilde{\delta}(\lambda)$ admits the symmetry
\begin{equation}
\tilde{\delta}=(\overline{\tilde{\delta}})^{-1},\quad \lambda\in\mathbb{C}\backslash\gamma_{(\overline{\beta},\beta)}.
\end{equation}
$\blacktriangleright$ $\tilde{\delta}(\lambda)$ admits the following jump condition
\begin{equation}
\tilde{\delta}_{+}=\tilde{\delta}_{-}\begin{cases}
\frac{1}{\hat{a}\hat{\overline{a}}},\quad  \lambda\in\gamma_{(\mu,\beta)},\\
\hat{a}\hat{\overline{a}},\quad  \lambda\in\gamma_{(\overline{\beta},\mu)}.\\
\end{cases}
\end{equation}
$\blacktriangleright$ $\tilde{\delta}(\lambda)$ can be rewritten as
\begin{equation}
\tilde{\delta}(\lambda)=exp\left[i\nu_{2}[\ln_{\beta}{(\lambda-\mu)}+\ln_{\overline{\beta}}{(\lambda-\mu)}]+\tilde{\chi}(\lambda)\right],
\end{equation}
where $\nu_{2}=\frac{\ln{(1+|q|^{2}})}{2\pi}$ and
\begin{equation}
\begin{split}
\tilde{\chi}=&\frac{1}{2i\pi}L_{\beta}(\beta,\lambda)\ln{(1+\varrho(\beta)\overline{\varrho}(\beta))}+\frac{1}{2i\pi}L_{\overline{\beta}}(\overline{\beta},\lambda)\ln{(1+\varrho(\overline{\beta})\overline{\varrho}(\overline{\beta}))}\\
&+\frac{1}{2i\pi}\int_{\gamma_{\mu,\beta}}L_{\beta}(s,\lambda)\ln{(1+\varrho(s)\overline{\varrho}(s))}-\frac{1}{2i\pi}\int_{\gamma_{\overline{\beta},\mu}}L_{\beta}(s,\lambda)\ln{(1+\varrho(s)\overline{\varrho}(s))}.
\end{split}\nonumber
\end{equation}
Some symbols denote respectively
\begin{equation}
\begin{split}
&\ln_{\beta}(\lambda-\mu)=\ln(\lambda-\mu),\quad \lambda\in D_{\varepsilon}\backslash\gamma_{(\mu,\beta)},\\
&\ln_{\overline{\beta}}(\lambda-\mu)=\ln(\lambda-\mu),\quad \lambda\in D_{\varepsilon}\backslash\gamma_{(\overline{\beta},\mu)},\\
&\mathrm{L}_{\beta}(s,\lambda)=\ln(\lambda-s),\quad s\in\gamma_{(\mu,\beta)},\quad \lambda\in D_{\varepsilon}\backslash\gamma_{(\mu,\beta)},\\
&\mathrm{L}_{\overline{\beta}}(s,\lambda)=\ln(\lambda-s),\quad s\in\gamma_{(\overline{\beta},\mu)},\quad \lambda\in D_{\varepsilon}\backslash\gamma_{(\overline{\beta},\mu)}.
\end{split}\nonumber
\end{equation}
\end{lemma}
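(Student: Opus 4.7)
The plan is to verify the four bullets in turn by recognising $\tilde{\delta}$ as the exponential of two Cauchy-type integrals supported on the disjoint arcs $\gamma_{(\mu,\beta)}$ and $\gamma_{(\overline{\beta},\mu)}$, and then exploiting the standard toolkit for such objects: Plemelj--Sokhotski on each arc, the Schwarz-type symmetry of $\hat{a}\hat{\overline{a}}$, and a single integration by parts to extract the endpoint contributions.

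First I would establish analyticity and boundedness of $\tilde{\delta}^{\pm 1}$ on $D_{\varepsilon}(\mu)\setminus\gamma_{(\overline{\beta},\beta)}$. Off the union of the two arcs of integration (which is exactly $\gamma_{(\overline{\beta},\beta)}$) the kernel $1/(s-\lambda)$ is analytic in $\lambda$, and the density $\ln(\hat{a}(s)\hat{\overline{a}}(s))$ is bounded on each compact arc because $\hat{a}\hat{\overline{a}}$ is bounded away from $0$ and $\infty$ near $\mu,\beta,\overline{\beta}$ (as recorded after Problem \ref{rhbsic}). Standard Cauchy-integral bounds then yield a continuous bounded exponent on $D_{\varepsilon}(\mu)\setminus\gamma_{(\overline{\beta},\beta)}$, hence bounded $\tilde{\delta}^{\pm 1}$.

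Next, for the symmetry $\tilde{\delta}=\overline{\tilde{\delta}}^{-1}$ I would conjugate the defining formula, change variables $s\mapsto\overline{s}$, and use that this substitution interchanges $\gamma_{(\mu,\beta)}$ with $-\gamma_{(\overline{\beta},\mu)}$ while leaving $\ln(\hat{a}\hat{\overline{a}})$ invariant (because $\hat{\overline{a}}(s)=\overline{\hat{a}(\overline{s})}$). The reversed orientation flips the sign in the exponent and the interchange of the two integrals reverses their relative sign, so the product returns $\tilde{\delta}(\lambda)^{-1}$. For the jump condition I would apply Plemelj--Sokhotski to each integral separately: for $\lambda\in\gamma_{(\mu,\beta)}$ only the first integral contributes, giving $\tilde{\delta}_{+}/\tilde{\delta}_{-}=\exp[-\ln(\hat{a}\hat{\overline{a}})]=(\hat{a}\hat{\overline{a}})^{-1}$; symmetrically for $\lambda\in\gamma_{(\overline{\beta},\mu)}$ one gets $\hat{a}\hat{\overline{a}}$, matching the stated jumps.

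Finally, for the explicit representation I would integrate by parts once in each of the two Cauchy integrals, using $\int(s-\lambda)^{-1}ds=\ln(s-\lambda)$ with branches $\ln_{\beta}$, $\ln_{\overline{\beta}}$, $L_{\beta}$, $L_{\overline{\beta}}$ chosen to be single-valued on the corresponding arc. The endpoint contributions at $\beta$ and $\overline{\beta}$ give the boundary terms $\frac{1}{2i\pi}L_{\beta}(\beta,\lambda)\ln(1+\varrho(\beta)\overline{\varrho}(\beta))$ and $\frac{1}{2i\pi}L_{\overline{\beta}}(\overline{\beta},\lambda)\ln(1+\varrho(\overline{\beta})\overline{\varrho}(\overline{\beta}))$, while the two endpoint contributions at the common endpoint $\mu$ combine (instead of cancelling, because the two integrals enter with opposite signs and $\mu$ is the initial point of one contour and the terminal point of the other) to yield the symmetric combination $i\nu_{2}[\ln_{\beta}(\lambda-\mu)+\ln_{\overline{\beta}}(\lambda-\mu)]$, where $2\pi\nu_{2}=\ln(\hat{a}(\mu)\hat{\overline{a}}(\mu))=\ln(1+|q(\mu)|^{2})$. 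The remaining interior pieces of the integration by parts are precisely the two integrals appearing in $\tilde{\chi}(\lambda)$. The main obstacle I expect is the careful bookkeeping of logarithmic branches: the cuts of $\ln_{\beta}$, $\ln_{\overline{\beta}}$, $L_{\beta}$, $L_{\overline{\beta}}$ must be placed consistently with the orientations in Figure \ref{mu} so that the integration by parts is valid on each arc without crossing a cut and so that the two endpoint terms at $\mu$ add to the stated compact form without introducing spurious $2\pi i$ shifts.
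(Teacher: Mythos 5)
The paper states this lemma without proof (as a standard fact about Cauchy-type integrals), so there is no argument to compare against line by line; your proposal is exactly the standard justification — Cauchy-integral bounds off the two arcs, Schwarz conjugation $s\mapsto\overline{s}$ for the symmetry, Plemelj--Sokhotski on each arc for the jumps (correctly accounting for the prefactor $\tfrac{i}{2\pi}=-\tfrac{1}{2\pi i}$), and one integration by parts per arc for the explicit representation, including the correct observation that the two endpoint terms at $\mu$ add rather than cancel because the opposite sign prefactors compensate the opposite endpoint roles. The one item to correct in the bookkeeping you already flag is the sign of the endpoint value: since $\det S=1$ forces $\hat{a}\hat{\overline{a}}=(1+\lambda\hat{\varrho}\hat{\overline{\varrho}})^{-1}$, one has $\ln\bigl(\hat{a}(\mu)\hat{\overline{a}}(\mu)\bigr)=-\ln(1+|q|^{2})$, and it is this minus sign that makes the combined $\mu$-endpoint contribution come out as $+i\nu_{2}\bigl[\ln_{\beta}(\lambda-\mu)+\ln_{\overline{\beta}}(\lambda-\mu)\bigr]$; your identification $2\pi\nu_{2}=\ln\bigl(\hat{a}(\mu)\hat{\overline{a}}(\mu)\bigr)$ has this sign reversed, though this does not affect the structure of the argument.
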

$\mathbf{Remark}$:
Function $\delta$ in Lemma \ref{delta} with a new version
\begin{equation}
\delta(\lambda)=exp\left[-i\nu_{2}\ln{(\lambda-\mu)}+\chi(\lambda)\right],\quad \lambda\in\mathbb{C}\backslash(-\infty,\mu].
\end{equation}
where $\nu_{2}=\frac{\ln{(1+|q|^{2})}}{2}$ and
\begin{equation}
\begin{split}
\chi(\lambda)=&\frac{1}{2i\pi}\ln(\lambda+1)\ln(\frac{1+|\varrho_{+}(-1)|^{2}}{1+|\varrho_{-}(-1)|^{2}})-\\
&\frac{1}{2i\pi}\left(\int_{-\infty}^{-1}+\int_{-1}^{\mu}\right)\ln(\lambda-s)\ln(1+|\varrho(s)|^{2}),\quad \lambda\in\mathbb{C}\backslash(-\infty,\mu].
\end{split}\nonumber
\end{equation}
$\varrho_{+}(-1)$ and $\varrho_{-}(-1)$ denote the values of $\varrho(\lambda)$ on the left and right sides of $\gamma_{1}\cup\overline{\gamma}_{1}$.
Moreover, functions $\tilde{\delta}$ and $\delta$ satisfy the following relations:
\begin{lemma}Let $\delta_{2}=\delta\tilde{\delta}$, thus
\begin{equation}
\delta_{2}(\lambda)=p(z(\lambda))\delta_{0}(\lambda)\delta_{1}(\lambda), \quad \lambda\in D_{\varepsilon}(\mu)\backslash((-\infty,\mu]\cup\gamma_{(\overline{\beta},\beta)}),
\end{equation}
where
\begin{equation}
\begin{split}
&p(z)=exp\left[-i\nu_{2}[\ln_{-\frac{\pi}{2}}z-\ln{z}-\ln_{0}z]\right],\quad z\in\mathbb{C}\backslash(\mathbb{R}\cup i\mathbb{R}_{-}),\\
&\delta_{0}(t)=e^{\frac{\pi\nu_{2}}{2}}t^{-\frac{i\nu_{2}}{2}}exp^{-i\nu_{2}\ln{\psi_{\mu}(\mu)}}e^{\chi(\mu)+\tilde{\chi}(\mu)},\quad t>0,\\
&\delta_{1}(\lambda)=e^{-i\nu_{2}\ln{\frac{\psi_{\mu}(\lambda)}{\psi_{\mu}(\mu)}}}e^{\chi(\lambda)-\chi(\mu)+\tilde{\chi}(\lambda)-\tilde{\chi}(\mu)},\quad \lambda\in D_{\varepsilon}(\mu).
\end{split}\nonumber
\end{equation}
\end{lemma}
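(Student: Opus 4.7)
The plan is to insert the explicit exponential representations from the Remark (for $\delta$) and from the preceding lemma (for $\tilde{\delta}$) into the product $\delta_{2}=\delta\tilde{\delta}$, and then to reorganize the resulting exponent into three pieces: a universal part in the rescaled local variable $z=z(\lambda)$ near $\mu$, a pure $t$-dependence, and a bounded analytic $\lambda$-dependent remainder.

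First I would multiply the two exponentials to obtain
\begin{equation*}
\delta_{2}(\lambda)=\exp\Bigl[i\nu_{2}\bigl(-\ln(\lambda-\mu)+\ln_{\beta}(\lambda-\mu)+\ln_{\overline{\beta}}(\lambda-\mu)\bigr)+\chi(\lambda)+\tilde{\chi}(\lambda)\Bigr],
\end{equation*}
and then introduce the rescaled local coordinate $z=z(\lambda)$ that linearizes the phase used later in the $\mu$-parametrix, built so that $z(\lambda)=\sqrt{t}\,\psi_{\mu}(\lambda)$ up to a constant. For each of the three logarithms, whose cuts in the $\lambda$-plane are $(-\infty,\mu]$, $\gamma_{(\mu,\beta)}$ and $\gamma_{(\overline{\beta},\mu)}$ respectively, I would write
\begin{equation*}
\ln_{*}(\lambda-\mu)=\ln_{*}\!\bigl(z(\lambda)\bigr)+\ln\!\frac{\psi_{\mu}(\lambda)}{\psi_{\mu}(\mu)}+\ln\psi_{\mu}(\mu)-\tfrac{1}{2}\ln t+\text{(branch shift)},
\end{equation*}
where the branch of $\ln_{*}(z)$ is chosen so that its cut in the $z$-plane is the image under $z(\lambda)$ of the corresponding cut in the $\lambda$-plane. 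This identifies the three branches with $\ln_{0}z$, $\ln z$ and $\ln_{-\pi/2}z$ respectively.

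Collecting terms, the three $\ln_{*}(z)$ contributions combine into $-i\nu_{2}(\ln_{-\pi/2}z-\ln z-\ln_{0}z)$, which is precisely $p(z(\lambda))$. The three $-\tfrac{1}{2}\ln t$ contributions combine (via $-1+1+1=1$) into $-\tfrac{i\nu_{2}}{2}\ln t$, exponentiating to $t^{-i\nu_{2}/2}$, while the branch-shift constants collapse to the single factor $e^{\pi\nu_{2}/2}$. Splitting $\chi(\lambda)+\tilde{\chi}(\lambda)=[\chi(\mu)+\tilde{\chi}(\mu)]+[\chi(\lambda)-\chi(\mu)+\tilde{\chi}(\lambda)-\tilde{\chi}(\mu)]$, and similarly splitting the $\psi_{\mu}$-logarithm into its value at $\mu$ versus the smooth variation about $\mu$, I would assign all $\lambda$-independent constants to $\delta_{0}(t)$ and all smooth $\lambda$-dependent pieces to $\delta_{1}(\lambda)$, which matches the stated formulas verbatim.

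The main obstacle will be the branch-cut bookkeeping in the intermediate step: each of $\ln$, $\ln_{\beta}$, $\ln_{\overline{\beta}}$ carries its own cut near $\mu$, and these must be matched consistently with the cuts of $\ln_{-\pi/2}z$, $\ln z$, $\ln_{0}z$ after the conformal change $\lambda\mapsto z(\lambda)$. Pinning down the exact constant $e^{\pi\nu_{2}/2}$ amounts to tracking the $\pm i\pi/2$ phase jumps incurred when crossing from one branch convention to the other on each subregion of $D_{\varepsilon}(\mu)\setminus((-\infty,\mu]\cup\gamma_{(\overline{\beta},\beta)})$. Once the jump and the $\lambda\to\mu$ asymptotic of both sides are verified to agree in each subregion, the claimed identity follows because both sides are bounded and analytic there and a bounded analytic function with no jumps is uniquely determined by its value at one point.
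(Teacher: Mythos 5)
The paper states this lemma without giving any proof, so there is nothing to compare against line by line; judged on its own merits, your overall strategy — multiply the two explicit exponentials, pass to the local variable $z$, and regroup the exponent into a universal $z$-part, a pure $t$-part and a smooth $\lambda$-remainder — is the natural (essentially the only) route, and your identification of where the $t^{-i\nu_{2}/2}$ and the splitting $\chi+\tilde{\chi}=[\chi(\mu)+\tilde{\chi}(\mu)]+[\cdots]$ come from is right. However, the execution as written would not reproduce the stated formulas. First, the change of variable is $z=it^{1/2}(\lambda-\mu)\psi_{\mu}(\lambda)$ (equation (\ref{psimu})), not ``$z=\sqrt{t}\,\psi_{\mu}(\lambda)$ up to a constant''; solving for the logarithm gives
\begin{equation*}
\ln_{*}(\lambda-\mu)=\ln_{*'}z-\ln\psi_{\mu}(\lambda)-\tfrac{1}{2}\ln t-\tfrac{i\pi}{2}\pmod{2\pi i},
\end{equation*}
so the $\psi_{\mu}$-terms enter with a \emph{minus} sign, opposite to the plus signs in your displayed conversion. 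Since the net coefficient of the three logarithms is $i\nu_{2}(-1+1+1)=i\nu_{2}$, your signs would produce $e^{+i\nu_{2}\ln\psi_{\mu}(\mu)}$ and $e^{+i\nu_{2}\ln(\psi_{\mu}(\lambda)/\psi_{\mu}(\mu))}$, contradicting the stated $\delta_{0},\delta_{1}$; the claimed ``verbatim'' match therefore does not follow from what you wrote (with the correct signs it does, and the constant $e^{\pi\nu_{2}/2}$ is exactly $\exp[i\nu_{2}\cdot(-i\pi/2)]$).

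Second, your branch identification appears swapped. Near $\mu$ the map is approximately multiplication by $c=i\sqrt{t}\,\psi_{\mu}(\mu)$, a rotation by roughly $\pi/2$, so the cut $(-\infty,\mu]$ goes to the negative imaginary $z$-axis (hence $\ln_{-\pi/2}z$) and $\gamma_{(\overline{\beta},\mu)}$ to the positive real axis (hence $\ln_{0}z$), while $\gamma_{(\mu,\beta)}$ goes to the negative real axis (the standard $\ln z$). With your assignment the combination would be $\ln_{0}z-\ln z-\ln_{-\pi/2}z$, which differs from the stated exponent of $p$ by nonzero locally constant multiples of $2\pi i\nu_{2}$ on some components of $\mathbb{C}\setminus(\mathbb{R}\cup i\mathbb{R}_{-})$, and would also change the constant in $\delta_{0}$. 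Finally, the closing principle ``a bounded analytic function with no jumps is uniquely determined by its value at one point'' is false as stated; the correct way to finish is to observe that after the cancellations the difference of the two exponents is a combination of differences of branches of the same logarithms, hence locally constant on $D_{\varepsilon}(\mu)\backslash((-\infty,\mu]\cup\gamma_{(\overline{\beta},\beta)})$, so it suffices to evaluate it at one point of each connected component (equivalently, just carry out the phase bookkeeping region by region, which is the part you deferred). With the sign corrected, the branches matched as above, and that last step made precise, the argument closes.
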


Define $m^{(\mu1)}(x,t,\lambda)$ by
\begin{equation}
m^{(\mu1)}(x,t,\lambda)=m^{(\mu0)}(x,t,\lambda)\tilde{\delta}(\lambda)^{-\sigma_{3}},\quad \lambda\in D_{\varepsilon}(\mu).
\end{equation}
$m^{(\mu1)}(x,t,\lambda)$ satisfies a jump condition
\begin{equation}
m^{(\mu1)}_{+}(x,t,\lambda)=m^{(\mu1)}_{-}(x,t,\lambda)v^{(\mu1)}(x,t,\lambda),
\end{equation}
the jump matrix is given by
\begin{equation}
v^{(\mu1)}=\begin{cases}
\left(
\begin{array}{cc}
1&0\\
\frac{\hat{\overline{b}}}{\hat{a}}\delta_{2}^{-2}e^{2itg_{\mu}}&1
\end{array}
\right),\quad\quad~ \lambda\in\mathcal{X}_{1},\\
\left(
\begin{array}{cc}
1&-\frac{\hat{b}}{\hat{\overline{a}}}\delta_{2}^{2}e^{-2itg_{\mu}}\\
0&1
\end{array}
\right),\quad~~ \lambda\in\mathcal{X}_{2},\\
\left(
\begin{array}{cc}
1&0\\
\frac{\hat{\overline{b}}}{\hat{a}^{2}\hat{\overline{a}}}\delta_{2}^{-2}e^{2itg_{\mu}}&1
\end{array}
\right),\quad~~ \lambda\in\mathcal{X}_{4},\\
\left(
\begin{array}{cc}
1&-\frac{\hat{b}}{\hat{a}\hat{\overline{a}}^{2}}\delta_{2}^{2}e^{-2itg_{\mu}}\\
0&1
\end{array}
\right),\quad \lambda\in\mathcal{X}_{5},\\
I,\quad\quad\quad\quad\quad\quad\quad\quad\quad\quad~~ \lambda\in\mathcal{X}_{3}\cup\mathcal{X}_{6}.
\end{cases}\nonumber
\end{equation}
Let $\mathcal{X}=\mathcal{X}_{1}\cup\mathcal{X}_{2}\cup\mathcal{X}_{4}\cup\mathcal{X}_{5}$.
For relating $m^{(\mu1)}$ to the solution of the parabolic cylinder functions in Appendix \ref{pcf}, we introduce $z$ for $2iz^{2}=2itg_{\mu}$. For the reason that $g_{\mu}(\lambda)$ has a double zero at $\lambda=\mu$, we let
\begin{equation}
z=it^{1/2}(\lambda-\mu)\psi_{\mu}(\lambda),\label{psimu}
\end{equation}
where $\psi_{\mu}(\lambda)$ is analytic function for $\lambda\in D_{\varepsilon}(\mu)$.

Define $m^{(\mu2)}(x,t,z(\lambda))$ by
\begin{equation}
m^{(\mu2)}(x,t,z(\lambda))=m^{(\mu1)}(x,t,\lambda)\delta_{0}(t)^{\sigma_{3}},\quad \lambda\in D_{\varepsilon}(\mu)\backslash\Sigma^{(5)}.
\end{equation}
Function $m^{(\mu2)}(x,t,z(\lambda))$ satisfies the RH problem $m^{(\mu2)}_{+}(x,t,\lambda)=m^{(\mu2)}_{-}(x,t,\lambda)v^{(\mu2)}(x,t,\lambda)$, where
\begin{equation}
v^{(\mu2)}=\begin{cases}
\left(
\begin{array}{cc}
1&0\\
\hat{r}\delta_{1}^{-2}\rho^{-2}(z)e^{2iz^{2}}&1
\end{array}
\right),\quad\quad\quad\quad~~~ \arg{z}=\frac{\pi}{4},\\
\left(
\begin{array}{cc}
1&\hat{\overline{r}}\delta_{1}^{2}\rho^{2}(z)e^{-2iz^{2}},\\
0&1,
\end{array}
\right),\quad\quad\quad\quad\quad~ \arg{z}=\frac{3\pi}{4},\\
\left(
\begin{array}{cc}
1&0\\
-\frac{\hat{r}(1+\hat{r}\hat{\overline{r}})}{(1+|q|^{2})^{2}}\delta_{1}^{-2}\rho^{-2}(z)e^{2iz^{2}}&1
\end{array}
\right),\quad~ \arg{z}=\frac{5\pi}{4},\\
\left(
\begin{array}{cc}
1&-\frac{\hat{\overline{r}}(1+\hat{r}\hat{\overline{r}})}{(1+|q|^{2})^{2}}\delta_{1}^{2}\rho^{2}(z)e^{-2iz^{2}}\\
0&1
\end{array}
\right),\quad~~~ \arg{z}=\frac{7\pi}{4}.
\end{cases}\nonumber
\end{equation}

We define a matrix function $m^{\mu}(x,t,\lambda)$ for $\hat{m}^{(5)}$ near $\mu$ by
\begin{equation}
m^{\mu}(x,t,\lambda)=Y_{\mu}(x,t,\lambda)m^{pc}(q,z(\lambda))\delta_{0}(t)^{-\sigma_{3}}\tilde{\delta}^{\sigma_{3}}(\lambda)B^{-1}(\lambda)e^{ih(\lambda)\sigma_{3}},
\end{equation}
where $m^{pc}(q,z(\lambda))$ is the solution of the RH problem (\ref{pc}) in Appendix \ref{pcf}. Function $Y_{\mu}(x,t,\lambda)$ is analytic for $\lambda\in D_{\varepsilon}(\mu)$ and defined by
\begin{equation}
Y_{\mu}(x,t,\lambda)=m^{mod}(x,t,\lambda)e^{-ih(\lambda)\sigma_{3}}B(\lambda)\tilde{\delta}^{-\sigma_{3}}\delta_{0}^{\sigma_{3}}(t).\label{ymu}
\end{equation}
Function $m^{\mu}(x,t,\lambda)$ admits the following Lemma:
\begin{lemma}\label{mmu}
Function $m^{\mu}$ satisfies  the jump condition
\begin{equation}
m^{\mu}_{+}(x,t,\lambda)=m^{\mu}_{-}(x,t,\lambda)v^{\mu}(x,t,\lambda),
\end{equation}
where the jump matrix
\begin{equation}
v^{\mu}=\hat{J}^{(5)}, \quad \lambda\in\gamma_{(\overline{\beta},\beta)}\cap D_{\varepsilon}(\mu).
\end{equation}
For $t\rightarrow\infty$,
\begin{equation}
\begin{split}
&\|\hat{v}^{(5)}-v^{\mu}\|_{L^{1}(\chi)}=\mathcal{O}(t^{-1}\ln(t)),\\
&\|\hat{v}^{(5)}-v^{\mu}\|_{L^{2}(\chi)}=\mathcal{O}(t^{-3/4}\ln(t)),\\
&\|\hat{v}^{(5)}-v^{\mu}\|_{L^{\infty}(\chi)}=\mathcal{O}(t^{-1/2}\ln(t)),\\
&\|m^{mod}(m^{\mu})^{-1}-I\|_{L^{\infty}(\partial D_{\varepsilon}(\mu))}=\mathcal{O}(t^{-1/2}),
\end{split}\nonumber
\end{equation}
\begin{equation}
\frac{1}{2i\pi}\int_{\partial D_{\varepsilon}(\mu)}(m^{mod}(m^{\mu})^{-1}-I)\mathrm{d}{\lambda}=\frac{Y_{\mu}(x,t,\mu)m_{1}^{pc}Y_{\mu}^{-1}(x,t,\mu)}{\sqrt{t}\psi_{\mu}(\mu)}+\mathcal{O}(t^{-1}),
\end{equation}
where
\begin{equation}
m^{pc}_{1}=\left(\begin{array}{cc}
0&-e^{-\pi\nu_{2}}\beta^{pc}(q)\\
e^{-\pi\nu_{2}}\overline{\beta^{pc}(q)}&0
\end{array}\right).\label{m1pc}
\end{equation}
\end{lemma}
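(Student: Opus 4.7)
The plan is to verify the lemma in four stages, each corresponding to one of the asserted estimates.

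First, I would check the jump identity $v^{\mu}=\hat{J}^{(5)}$ on $\gamma_{(\overline{\beta},\beta)}\cap D_{\varepsilon}(\mu)$ by direct computation. The prefactor $Y_{\mu}$ is analytic across this arc (by its definition \eqref{ymu} and the fact that $m^{mod}e^{-ih\sigma_{3}}$ carries a jump that is exactly cancelled by the jumps of $B$, $\tilde\delta^{-\sigma_3}$ and $\delta_0^{\sigma_3}$), while $m^{pc}$ has no jump across the ray $\arg z\in\{\pi/2,3\pi/2\}$ which corresponds to $\gamma_{(\overline\beta,\beta)}$ in the $z$-coordinate. The remaining scalar factors $\tilde\delta^{\sigma_3}$, $B$, and $e^{ih\sigma_3}$ contribute precisely the jump $\operatorname{diag}(e^{it(g_+-g_-)}/(\hat a\hat{\overline a}),\hat a\hat{\overline a}e^{-it(g_+-g_-)})$, which is the prescribed $\hat J^{(5)}_{11}$ across $\gamma_{(\overline\beta,\beta)}$.

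Second, I would bound $\hat v^{(5)}-v^{\mu}$ on $\mathcal{X}$. On each $\mathcal{X}_{j}$, the difference reduces after unwinding the $Y_{\mu}$-conjugation to the difference between the exact jump $v^{(\mu 2)}$ and the model jump $v^{pc}$ for $m^{pc}$. The latter uses frozen values $\hat r(\mu)$, $\delta_{1}(\mu)$, and the frozen leading-order of $\rho(z)$, while the former retains the full $\lambda$-dependence. Hölder continuity of $\hat r$, $\delta_{1}$, and of the phase $\psi_{\mu}$, together with the substitution $\lambda-\mu= z/(it^{1/2}\psi_{\mu})$ along each ray, yield
\begin{equation}
|\hat v^{(5)}-v^{\mu}|\le C\,|\lambda-\mu|\,|e^{2itg_{\mu}}|+C\,t^{-1/2}\ln t,\nonumber
\end{equation}
uniformly on $\mathcal{X}$. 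Integrating $|e^{2itg_{\mu}(\lambda)}|$ against $d\lambda$ along a ray from $\mu$ gives the standard Gaussian-type scalings $t^{-1/2}$, $t^{-3/4}$, $t^{-1}$ in $L^{\infty}$, $L^{2}$, $L^{1}$ respectively; the $\ln t$ factor enters through $\delta_{0}(t)^{\sigma_{3}}$, which contains $t^{-i\nu_{2}/2}$, and through the logarithmic branch structure of $\delta_{2}$ near $\mu$.

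Third, for $\lambda\in\partial D_{\varepsilon}(\mu)$ one has $|z|=|(\lambda-\mu)\psi_{\mu}(\lambda)|\,t^{1/2}\sim \varepsilon t^{1/2}\to\infty$, so the parabolic-cylinder asymptotics from Appendix \ref{pcf} give $m^{pc}(q,z)=I+m^{pc}_{1}/z+\mathcal O(z^{-2})$. Conjugating by the bounded analytic factor $Y_{\mu}$ and the scalar $\delta_{0}^{-\sigma_{3}}\tilde\delta^{\sigma_{3}}B^{-1}e^{ih\sigma_{3}}$ (which modulo the outer model exactly reconstitutes $m^{mod}$), we obtain
\begin{equation}
m^{\mu}(\lambda)(m^{mod}(\lambda))^{-1}-I=\frac{Y_{\mu}(x,t,\lambda)\,m^{pc}_{1}\,Y_{\mu}^{-1}(x,t,\lambda)}{it^{1/2}(\lambda-\mu)\psi_{\mu}(\lambda)}+\mathcal O(t^{-1}),\nonumber
\end{equation}
which gives the stated $L^{\infty}(\partial D_{\varepsilon}(\mu))$-bound of order $t^{-1/2}$. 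Integrating this expansion over the clockwise boundary $\partial D_{\varepsilon}(\mu)$ and applying the residue theorem at the simple pole $\lambda=\mu$ (noting that $Y_{\mu}$ is analytic and $\psi_{\mu}(\mu)\neq 0$) yields the claimed residue formula, the $\mathcal{O}(t^{-1})$ remainder being controlled by the next order in the parabolic-cylinder expansion.

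The main obstacle is the second step. Tracking the $\ln t$ factor demands careful bookkeeping of the multi-valued logarithms in the explicit representations of $\delta$ and $\tilde\delta$: the difference $\delta_{1}(\lambda)-\delta_{1}(\mu)$ behaves like $(\lambda-\mu)\ln(\lambda-\mu)$ near $\mu$, and it is precisely the integral of this logarithmic modulus against $|e^{2itg_\mu}|$ that produces the $\ln t$ correction rather than a pure power of $t$. Once this Hölder-with-log estimate is established uniformly for $\lambda\in\mathcal{X}$, the remaining bounds are a routine change of variables $\lambda-\mu\mapsto z t^{-1/2}$.
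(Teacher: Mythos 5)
The paper states Lemma \ref{mmu} without giving a proof, so your proposal has to be judged against the standard local-parametrix argument it implicitly relies on; your four-stage plan (jump verification via analyticity of $Y_{\mu}$ and of $m^{pc}$ across the arc, comparison of $\hat{v}^{(5)}$ with the frozen-coefficient parabolic-cylinder jump on $\mathcal{X}$, large-$z$ expansion of $m^{pc}$ on $\partial D_{\varepsilon}(\mu)$, residue computation) is exactly that argument in outline.

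There is, however, a concrete gap in your second stage: the displayed pointwise bound $|\hat{v}^{(5)}-v^{\mu}|\le C\,|\lambda-\mu|\,|e^{2itg_{\mu}}|+C\,t^{-1/2}\ln t$ is too weak, because the second term carries no decay in $\lambda$. A bound that is merely uniform of size $t^{-1/2}\ln t$ on a contour of fixed length yields only $\mathcal{O}(t^{-1/2}\ln t)$ for both the $L^{2}$ and $L^{1}$ norms, not the claimed $\mathcal{O}(t^{-3/4}\ln t)$ and $\mathcal{O}(t^{-1}\ln t)$. The point you need is that every off-diagonal entry of both $\hat{v}^{(5)}$ and $v^{\mu}$ on $\mathcal{X}$ contains the same factor $e^{\pm2itg_{\mu}}$, which decays like $e^{-ct|\lambda-\mu|^{2}}$ along the rays, so the H\"older-with-log error must be estimated in the form $C\bigl(|\lambda-\mu|+t^{-1/2}\ln t\bigr)e^{-ct|\lambda-\mu|^{2}}$; only then do the $L^{\infty}$, $L^{2}$, $L^{1}$ norms scale as $t^{-1/2}\ln t$, $t^{-3/4}\ln t$, $t^{-1}\ln t$. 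Relatedly, the $\ln t$ does not come from $\delta_{0}(t)^{\sigma_{3}}$ (the factor $t^{-i\nu_{2}/2}$ is unimodular since $\nu_{2}$ is real); the correct source is the one you name in your closing paragraph, namely the $(\lambda-\mu)\ln(\lambda-\mu)$ modulus of continuity of $\chi$, $\tilde{\chi}$ (equivalently of $\delta_{1}$) evaluated at $|\lambda-\mu|\sim t^{-1/2}$. Two smaller slips: the diagonal jump you quote across $\gamma_{(\overline{\beta},\beta)}$ is $\hat{J}^{(4)}_{11}$, and it is only after the conjugation by $e^{ih\sigma_{3}}$ that it becomes the constant matrix $\mathrm{diag}(e^{2i(t\Delta_{3}+\tau_{3})},e^{-2i(t\Delta_{3}+\tau_{3})})=\hat{J}^{(5)}_{11}$ which must be matched; and in the last stage you expand $m^{\mu}(m^{mod})^{-1}$ while the lemma concerns $m^{mod}(m^{\mu})^{-1}$, so you should track the resulting sign together with the orientation of $\partial D_{\varepsilon}(\mu)$ and the factor $i$ in $z=it^{1/2}(\lambda-\mu)\psi_{\mu}(\lambda)$ to land on $+Y_{\mu}(x,t,\mu)m_{1}^{pc}Y_{\mu}^{-1}(x,t,\mu)/(\sqrt{t}\,\psi_{\mu}(\mu))$ rather than a spurious multiple of $i$.
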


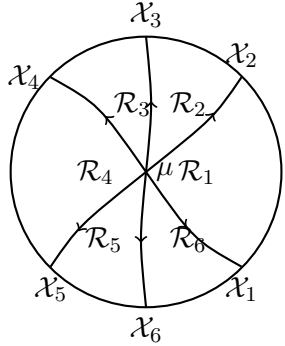
\begin{figure}[H]
\begin{center}
\begin{tikzpicture}[scale=0.45]
\draw[thick] (0,0) circle (4cm);
\draw[thick] (0,-4) .. controls (-0.2,-2)  .. (0,0);
\draw[thick] (0,4) .. controls (0.2,2)  .. (0,0);
\draw[thick] (2.82,-2.8) .. controls (1.4,-2)  .. (0,0);
\draw[thick] (-2.82,2.8) .. controls (-1.4,2)  .. (0,0);
\draw[thick] (-2.82,-2.8) .. controls (-2,-1.6)  .. (0,0);
\draw[thick] (2.82,2.8) .. controls (2,1.6)  .. (0,0);
\draw[thick,->] (0.15,2)--(0.15,2.1);
\draw[thick,->] (-0.15,-2)--(-0.15,-2.1);
\draw[thick,->] (1.15,-1.6)--(1.2,-1.65);
\draw[thick,->] (-1.15,1.6)--(-1.2,1.65);
\draw[thick,->] (1.98,1.72)--(2,1.75);
\draw[thick,->] (-1.98,-1.72)--(-2,-1.75);
\draw[ ](1.5,0)node[] {$\mathcal{R}_{1}$} (0.4,2)node[right] {$\mathcal{R}_{2}$} (-0.4,2)node[] {$\mathcal{R}_{3}$};
\draw[ ](-1.5,0)node[] {$\mathcal{R}_{4}$} (-0.4,-2)node[left] {$\mathcal{R}_{5}$}(0.4,-2)node[right] {$\mathcal{R}_{6}$};
\draw[ ](0,-4)node[below] {$\mathcal{X}_{6}$} (0,4)node[above] {$\mathcal{X}_{3}$}(2.82,-2.8)node[below] {$\mathcal{X}_{1}$};
\draw[ ](-2.82,2.8)node[left] {$\mathcal{X}_{4}$} (-2.82,-2.8)node[below] {$\mathcal{X}_{5}$}(2.82,2.8)node[above] {$\mathcal{X}_{2}$};
\draw[ ](0,0)node[right] {$\mu$};
\end{tikzpicture}
\end{center}
\caption{The jump contour $\mathcal{R}_{j}$ and $\mathcal{X}_{j}$, $j=1,2,3,4,5,6$.}\label{mu}
\end{figure}

\subsection{the asymptotic solution }
Function $m^{err}$ defined  by
\begin{equation}
m^{err}=\hat{m}^{(5)}(m^{app})^{-1},
\end{equation}
satisfies RH problem
\begin{equation}
m^{err}_{+}(x,t,\lambda)=m^{err}_{-}(x,t,\lambda)v^{err}(x,t,\lambda),\quad k\in\Sigma^{err},\label{merr}
\end{equation}
where $\Sigma^{err}=(\Sigma^{(5)}\backslash(\Sigma^{mod}\cup\overline{\mathcal{D}}))\cup\partial D\cup\mathcal{X}$, $\mathcal{X}=\mathcal{X}_{1}\cup\mathcal{X}_{2}\cup\mathcal{X}_{4}\cup\mathcal{X}_{5}$ and
\begin{equation}
v^{err}=\begin{cases}
m^{mod}\hat{J}^{(5)}(m^{mod})^{-1},\quad \lambda\in\Sigma^{err}\backslash\overline{\mathcal{D}},\\
m^{mod}(m^{\alpha})^{-1},\quad\quad\quad~~  \lambda\in\partial D_{\varepsilon}(D_{\alpha}),\\
m^{mod}(m^{\beta})^{-1},\quad\quad\quad~~ \lambda\in\partial D_{\varepsilon}(D_{\beta}),\\
m^{mod}(m^{\overline{\alpha}})^{-1},\quad\quad\quad~~ \lambda\in\partial D_{\varepsilon}(D_{\overline{\alpha}}),\\
m^{mod}(m^{\overline{\beta}})^{-1},\quad\quad\quad~~ \lambda\in\partial D_{\varepsilon}(D_{\overline{\beta}}),\\
m^{mod}(m^{\mu})^{-1},\quad\quad\quad~~ \lambda\in\partial D_{\varepsilon}(D_{\mu}),\\
m^{\mu}_{-}\hat{J}^{(5)}(m^{\mu}_{+})^{-1},\quad\quad\quad \lambda\in\mathcal{X}.\\
\end{cases}\nonumber
\end{equation}
Let $\hat{\omega}=v^{err}-I$, and $\hat{\omega}$ admits the following Lemma:
\begin{lemma}\cite{lj2017}
For $t\rightarrow\infty$, $\hat{\omega}$ satisfies
\begin{equation}
\parallel\hat{\omega}\parallel_{(L^{1}\cap L^{2}\cap L^{\infty})(\Sigma^{err}\setminus\overline{\mathcal{D}})}=\mathcal{O}(e^{-ct}),\quad c>0.
\end{equation}
\end{lemma}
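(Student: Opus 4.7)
The plan is to exploit the fact that on $\Sigma^{err}\setminus\overline{\mathcal{D}}$ every entry of $\hat{J}^{(5)}-I$ carries an exponential factor of the form $e^{\pm 2itg(\lambda)}$ (or $e^{\pm 2it(g_{+}+g_{-})}$, etc.), and the contour $\Sigma^{(5)}$ was deformed precisely so that, on each piece, $\mp 2t\,\mathrm{Im}\,g$ is negative. Since we have cut out the small disks $\overline{\mathcal{D}}$ around all the branch points $\alpha,\overline{\alpha},\beta,\overline{\beta},E_{1},\overline{E}_{1},E_{2},\overline{E}_{2}$ and the stationary point $\mu$ where $\mathrm{Im}\,g$ vanishes, there is a positive constant $c>0$ (depending on $\varepsilon$ but not on $t$) such that $|\mathrm{Im}\,g(\lambda)|\ge c$ uniformly for $\lambda\in\Sigma^{err}\setminus\overline{\mathcal{D}}$. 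This is the main structural input and comes directly from the sign chart in Figures \ref{sigma1}, \ref{sigma3}, \ref{sigma45}.

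First I would read off $v^{err}-I = m^{mod}(\hat{J}^{(5)}-I)(m^{mod})^{-1}$ on $\Sigma^{err}\setminus\overline{\mathcal{D}}$ and observe that $m^{mod}$ and $(m^{mod})^{-1}$ are bounded on this set, because Lemma \ref{mmod} gives $|\mathbf{N}(\lambda,v(t),d)|\le C|\lambda-\lambda_{0}|^{-1/4}$ with singularities only at the eight branch points, all of which lie inside $\overline{\mathcal{D}}$. The scalar factors $\delta^{\pm 2}$, $\hat{a}^{\pm 1}$, $\hat{b}^{\pm 1}$, $\hat{\varrho}$, $\nu^{\pm 2}$, $e^{\pm 2ih}$ appearing in $\hat{J}^{(5)}-I$ are all uniformly bounded on $\Sigma^{err}\setminus\overline{\mathcal{D}}$ by Lemma \ref{delta} and Lemma \ref{h}, so the entire prefactor multiplying the oscillatory exponential is bounded.

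Next, piece by piece I would verify the sign of $\mathrm{Im}\,g$ along each ray. For instance, on $\hat{J}^{(3)}_{1}$ (and its image in $\hat{J}^{(5)}$) the factor $e^{2itg}$ sits in the ``$-$'' region, so $|e^{2itg}|=e^{-2t\,\mathrm{Im}\,g}\le e^{-2ct}$; on $\hat{J}^{(3)}_{2}$ the factor $e^{-2itg}$ sits in the ``$+$'' region, giving the same bound; the eight contours around $\beta$ ($\hat{J}^{(4)}_{5},\dots,\hat{J}^{(4)}_{12}$) are treated identically, relying on the analyticity of $g-\theta$ across $\Sigma^{mod}$ away from the jumps. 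Once every contour has $|\hat{J}^{(5)}-I|\le C e^{-ct}$ pointwise, multiplying by bounded $m^{mod}$, $(m^{mod})^{-1}$ preserves this bound. Thus $\|\hat{\omega}\|_{L^{\infty}(\Sigma^{err}\setminus\overline{\mathcal{D}})}=\mathcal{O}(e^{-ct})$, and because $\Sigma^{err}\setminus\overline{\mathcal{D}}$ is a finite union of pieces with total length either finite or growing at most polynomially (for the unbounded rays we use the stronger exponential decay in $|\lambda|$ coming from $g(\lambda)\sim 2\lambda^{2}$), we obtain the same $\mathcal{O}(e^{-ct})$ bound in $L^{1}$ and $L^{2}$ as well.

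The main obstacle I anticipate is the behavior on the unbounded parts of $\Sigma^{err}$ (the rays heading to $\infty$ in Figures \ref{sigma3}, \ref{sigma45}): there one must also control the growth of $g$. I would handle this by choosing the rays so that along them $\mathrm{Im}\,g(\lambda)\ge c\,|\lambda|^{2}$ for $|\lambda|$ large, producing a Gaussian-type decay $e^{-ct|\lambda|^{2}}$ in the exponential factor. This is more than enough to override any polynomial growth from the prefactors and, after integration in $\lambda$, yields $L^{1}$ and $L^{2}$ bounds that are in fact $\mathcal{O}(e^{-ct}/\sqrt{t})$, hence $\mathcal{O}(e^{-c't})$ for any $c'<c$. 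Combining the bounded-contour and unbounded-contour estimates gives the stated $\mathcal{O}(e^{-ct})$ bound in all three norms.
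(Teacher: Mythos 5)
Your argument is essentially the right one, and it is the standard estimate behind this lemma; the paper itself offers no proof, delegating it to the citation \cite{lj2017}, and what you write is the same mechanism that reference (and the sign charts in Figures \ref{sigma1}--\ref{sigma45}) relies on: uniform positivity of the relevant signed quantity $\pm\mathrm{Im}\,g$ on $\Sigma^{err}\setminus\overline{\mathcal{D}}$ once the $\varepsilon$-disks are removed, boundedness of the scalar prefactors from Lemmas \ref{delta} and \ref{h}, conjugation by the bounded $m^{mod}$, and quadratic growth of $\mathrm{Im}\,g\sim\mathrm{Im}(2\lambda^{2}+\xi\lambda)$ along the unbounded rays to get the $L^{1}$ and $L^{2}$ bounds. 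Two points should be corrected, though neither is fatal. First, $\mathcal{D}$ is defined in Section \ref{section4} as the disks around $\alpha,\overline{\alpha},\beta,\overline{\beta},\mu$ only; there are no disks around $E_{1},\overline{E}_{1},E_{2},\overline{E}_{2}$, so the boundedness of $m^{mod}$ and $(m^{mod})^{-1}$ on $\Sigma^{err}\setminus\overline{\mathcal{D}}$ must be argued from the fact that the lens boundaries and rays of $\Sigma^{(5)}\setminus\Sigma^{mod}$ stay at a fixed positive distance from those four branch points (together with the nonvanishing of the theta denominators, uniformly in $t$, from Lemma \ref{mmod}), not from their removal. Second, your sign bookkeeping is stated backwards: a factor $e^{2itg}$ satisfies $|e^{2itg}|=e^{-2t\,\mathrm{Im}\,g}\leq e^{-2ct}$ precisely on contours where $\mathrm{Im}\,g\geq c>0$ (the ``$+$'' region), while $e^{-2itg}$ decays in the ``$-$'' region; with that correction the piece-by-piece verification matches the factorizations $\hat{J}^{(5)}_{1},\hat{J}^{(5)}_{2},\hat{J}^{(5)}_{5},\hat{J}^{(5)}_{7}$--$\hat{J}^{(5)}_{10},\hat{J}^{(5)}_{12}$ and the proof goes through as you outline.
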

Equations (\ref{malpha}) and (\ref{mbeta}) imply that
\begin{equation}
\parallel\hat{\omega}\parallel_{L^{\infty}(\partial D_{\varepsilon}(\alpha)\cup\partial D_{\varepsilon}(\beta)\cup\partial D_{\varepsilon}(\overline{\alpha})\cup\partial D_{\varepsilon}(\overline{\beta}))}=\mathcal{O}(t^{-N}),\quad t\rightarrow\infty,\quad N\geq1.
\end{equation}
For $t\rightarrow\infty$, Lemma \ref{mmu} yields that
\begin{equation}
\begin{split}
&\parallel\hat{\omega}\parallel_{L^{1}(\mathcal{X})}=\mathcal{O}(t^{-1}\ln{t}),\quad\parallel\hat{\omega}\parallel_{L^{2}(\mathcal{X})}=\mathcal{O}(t^{-3/4}\ln{t}),\\
&\parallel\hat{\omega}\parallel_{L^{\infty}(\mathcal{X})}=\mathcal{O}(t^{-1/2}\ln{t}),\quad\parallel\hat{\omega}\parallel_{L^{\infty}(\partial D_{\varepsilon}(\mu))}=\mathcal{O}(t^{-1/2}),
\end{split}\label{hatomega}
\end{equation}
Thus for $t\rightarrow\infty$, we have
\begin{equation}
\parallel\hat{\omega}\parallel_{(L^{1}\cap L^{2})(\Sigma^{err})}=\mathcal{O}(t^{-1/2}),\quad \parallel\hat{\omega}\parallel_{L^{\infty}(\Sigma^{err})}=\mathcal{O}(t^{-1/2}\ln{t}).\label{omegal2}
\end{equation}
Let $\hat{\mathcal{C}}$ denotes the Cauchy operator associated with $\Sigma^{err}$, for function $f(\lambda)$
\begin{equation}
(\hat{\mathcal{C}}f)(\lambda)=\frac{1}{2i\pi}\int_{\Sigma^{err}}\frac{f(s)}{s-\lambda}\mathrm{d}{s},\quad \lambda\in\mathbb{C}\backslash\Sigma^{err}.\nonumber
\end{equation}
Define $\hat{\mathcal{C}}_{\hat{\omega}}$: $L^{2}(\Sigma^{err})\rightarrow L^{2}(\Sigma^{err})$ by $\hat{\mathcal{C}}_{\hat{\omega}}f=\hat{\mathcal{C}}_{-}(f\hat{\omega})$ \cite{lj2017,lj2018}. We have the Lemma:
\begin{lemma}\label{cb}
 In the Banach space $\mathcal{B}(L^{2}(\Sigma^{err}))$, we have
\begin{equation}
\parallel\hat{\mathcal{C}}_{\hat{\omega}}\parallel_{\mathcal{B}(L^{2}(\Sigma^{err}))}\leq C\parallel\hat{\omega}\parallel_{L^{\infty}(\Sigma^{err})}=\mathcal{O}(t^{-1/2}\ln{t}),\quad t\rightarrow\infty,
\end{equation}
and $I-\hat{\mathcal{C}}_{\hat{\omega}}\in\mathcal{B}(L^{2}(\Sigma^{err}))$ is invertible for large enough $t$.
\end{lemma}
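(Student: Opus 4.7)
The proof has two parts, the operator norm estimate and the invertibility statement, both of which reduce to standard $L^{2}$-Riemann--Hilbert machinery. My plan is to factor $\hat{\mathcal{C}}_{\hat{\omega}}$ as the composition of pointwise multiplication by $\hat{\omega}$ followed by the minus boundary Cauchy operator $\hat{\mathcal{C}}_{-}$ on $\Sigma^{err}$. Multiplication by $\hat{\omega}$ is a bounded linear map on $L^{2}(\Sigma^{err})$ with operator norm at most $\|\hat{\omega}\|_{L^{\infty}(\Sigma^{err})}$, and by the Coifman--McIntosh--Meyer theorem \cite{pa1999,lp2002} the operator $\hat{\mathcal{C}}_{-}$ is bounded on $L^{2}(\Sigma^{err})$ with a norm constant $C_{0}$ depending only on the Lipschitz character of the contour. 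Since $\Sigma^{err}$ is a finite union of piecewise smooth arcs with only transversal self-intersections whose aperture angles are fixed by the construction in Sections \ref{deformation}--\ref{section4}, the constant $C_{0}$ may be chosen independent of $t$. Composing these two estimates gives
\begin{equation}
\|\hat{\mathcal{C}}_{\hat{\omega}} f\|_{L^{2}(\Sigma^{err})} = \|\hat{\mathcal{C}}_{-}(f\hat{\omega})\|_{L^{2}(\Sigma^{err})} \leq C_{0}\,\|\hat{\omega}\|_{L^{\infty}(\Sigma^{err})}\,\|f\|_{L^{2}(\Sigma^{err})},
\end{equation}
and combining this with the $L^{\infty}$ bound from (\ref{omegal2}) delivers the claimed $\mathcal{O}(t^{-1/2}\ln t)$ decay of the operator norm.

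For the invertibility part, the estimate just obtained implies that $\|\hat{\mathcal{C}}_{\hat{\omega}}\|_{\mathcal{B}(L^{2}(\Sigma^{err}))} < 1/2$ for all sufficiently large $t$. A standard Neumann series argument then produces
\begin{equation}
(I-\hat{\mathcal{C}}_{\hat{\omega}})^{-1} = \sum_{n=0}^{\infty} \hat{\mathcal{C}}_{\hat{\omega}}^{\,n},
\end{equation}
which converges absolutely in $\mathcal{B}(L^{2}(\Sigma^{err}))$, establishing that $I-\hat{\mathcal{C}}_{\hat{\omega}}$ is invertible with an inverse whose norm is bounded by $2$ for $t$ large.

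The only delicate point is the uniform $L^{2}$-boundedness of $\hat{\mathcal{C}}_{-}$ on $\Sigma^{err}$ as $t\to\infty$, because the contour depends on $t$ through the locations of the small disks $\partial D_{\varepsilon}(\cdot)$ and the stationary point $\mu$. This is by now a standard ingredient of the Deift--Zhou framework and is handled by observing that the global topological structure of $\Sigma^{err}$ and the local transversal crossings are stable under the $t$-deformation, so the Coifman--McIntosh--Meyer constant can be taken uniform in $t$; see \cite{lj2017,lj2018} for the technical details in closely analogous settings. All remaining steps are a routine composition of standard estimates.
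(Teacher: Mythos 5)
Your proposal is correct and takes essentially the same route the paper relies on: the paper states Lemma \ref{cb} without a written proof, appealing to the $L^{2}$-boundedness of $\hat{\mathcal{C}}_{-}$ on Carleson/piecewise-smooth contours from \cite{lj2017,lj2018}, and your factorization of $\hat{\mathcal{C}}_{\hat{\omega}}f=\hat{\mathcal{C}}_{-}(f\hat{\omega})$ into multiplication by $\hat{\omega}$ (norm $\leq\|\hat{\omega}\|_{L^{\infty}}$) composed with the uniformly bounded Cauchy operator, combined with (\ref{omegal2}), is exactly that standard argument. The Neumann-series invertibility step likewise matches the paper, which expands $(I-\hat{\mathcal{C}}_{\hat{\omega}})^{-1}=\sum_{j\geq0}\hat{\mathcal{C}}_{\hat{\omega}}^{j}$ immediately after the lemma, so there is no gap.
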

Define a $2\times2$-matrix function $\hat{\mu}(x,t,\lambda)$ for a large $t$ by
\begin{equation}
\hat{\mu}=I+(I-\hat{\mathcal{C}}_{\hat{\omega}})^{-1}\hat{\mathcal{C}}_{\hat{\omega}}I\in I+L^{2}(\Sigma^{err}),\label{hatmu}
\end{equation}
we consider the Neumann series representation of $(I-\hat{\mathcal{C}}_{\hat{\omega}})^{-1}$ as
$(I-\hat{\mathcal{C}}_{\hat{\omega}})^{-1}=\sum\limits_{j=0}^{\infty}\hat{\mathcal{C}}_{\hat{\omega}}^{j}$.
we have
\begin{equation}
\parallel(I-\hat{\mathcal{C}}_{\hat{\omega}})^{-1}\parallel_{\mathcal{B}(L^{2}(\Sigma^{err}))}\leq\sum_{j=0}^{\infty}\parallel\hat{\mathcal{C}}_{\hat{\omega}}\parallel_{\mathcal{B}(L^{2}(\Sigma^{err}))}^{j}
=\left(1-\parallel\hat{\mathcal{C}}_{\hat{\omega}}\parallel_{\mathcal{B}(L^{2}(\Sigma^{err}))}\right)^{-1},
\end{equation}
as $\parallel\hat{\mathcal{C}}_{\hat{\omega}}\parallel_{\mathcal{B}(L^{2}(\Sigma^{err}))}<1$. Using Lemma \ref{cb}, we have
\begin{equation}
\begin{split}
\parallel\hat{\mu}-I\parallel_{L^{2}(\Sigma^{err})}&=\parallel (I-\hat{\mathcal{C}}_{\hat{\omega}})^{-1}\hat{\mathcal{C}}_{\hat{\omega}}I\parallel_{L^{2}(\Sigma^{err})}\\
&\leq\parallel (I-\hat{\mathcal{C}}_{\hat{\omega}})^{-1}\parallel_{\mathcal{B}(L^{2}(\Sigma^{err}))}\parallel\hat{\mathcal{C}}_{\hat{\omega}}I\parallel_{L^{2}(\Sigma^{err})}\\
&\leq\frac{C\parallel\hat{\omega}\parallel_{(L^{2}(\Sigma^{err}))}}{1-1-\parallel\hat{\mathcal{C}}_{\hat{\omega}}\parallel_{\mathcal{B}(L^{2}(\Sigma^{err}))}}\\
&\leq C\parallel\hat{\omega}\parallel_{(L^{2}(\Sigma^{err}))}.
\end{split}
\end{equation}
Using equation (\ref{omegal2}), we have the estimate shown as follows:
\begin{lemma}\label{hatmu}
 Function $\hat{\mu}$ satisfies
\begin{equation}
\parallel\hat{\mu}-I\parallel_{L^{2}(\Sigma^{err})}=\mathcal{O}(t^{-1/2}),\quad t\rightarrow\infty.
\end{equation}
\end{lemma}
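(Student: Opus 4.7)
The plan is to chain together the ingredients that are already in place: the invertibility bound on $I-\hat{\mathcal{C}}_{\hat{\omega}}$ from Lemma \ref{cb}, the $L^{2}$-boundedness of the Cauchy projection, and the $L^{2}$-decay of $\hat{\omega}$ established in (\ref{omegal2}). Starting from the resolvent representation (\ref{hatmu}), namely $\hat{\mu}-I=(I-\hat{\mathcal{C}}_{\hat{\omega}})^{-1}\hat{\mathcal{C}}_{\hat{\omega}}I$, I would take $L^{2}(\Sigma^{err})$-norms on both sides and apply the operator-norm submultiplicativity:
\begin{equation}
\|\hat{\mu}-I\|_{L^{2}(\Sigma^{err})}\le\|(I-\hat{\mathcal{C}}_{\hat{\omega}})^{-1}\|_{\mathcal{B}(L^{2}(\Sigma^{err}))}\,\|\hat{\mathcal{C}}_{\hat{\omega}}I\|_{L^{2}(\Sigma^{err})}.
\end{equation}

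First I would control the operator factor. By Lemma \ref{cb}, for $t$ large enough $\|\hat{\mathcal{C}}_{\hat{\omega}}\|_{\mathcal{B}(L^{2})}=\mathcal{O}(t^{-1/2}\ln t)<1$, so the Neumann series for $(I-\hat{\mathcal{C}}_{\hat{\omega}})^{-1}$ converges and $\|(I-\hat{\mathcal{C}}_{\hat{\omega}})^{-1}\|_{\mathcal{B}(L^{2})}\le(1-\|\hat{\mathcal{C}}_{\hat{\omega}}\|)^{-1}\le 2$ for all sufficiently large $t$. Thus the resolvent factor contributes a universal constant and is no longer an obstacle.

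Next I would estimate $\|\hat{\mathcal{C}}_{\hat{\omega}}I\|_{L^{2}(\Sigma^{err})}=\|\hat{\mathcal{C}}_{-}(\hat{\omega})\|_{L^{2}(\Sigma^{err})}$. Since $\hat{\mathcal{C}}_{-}$ is a bounded operator on $L^{2}(\Sigma^{err})$ (the contour $\Sigma^{err}$ is a finite union of smooth arcs and circles meeting at finitely many points), one has $\|\hat{\mathcal{C}}_{-}(\hat{\omega})\|_{L^{2}}\le C\|\hat{\omega}\|_{L^{2}(\Sigma^{err})}$. Inserting the decay estimate $\|\hat{\omega}\|_{L^{2}(\Sigma^{err})}=\mathcal{O}(t^{-1/2})$ from (\ref{omegal2}) then yields $\|\hat{\mathcal{C}}_{\hat{\omega}}I\|_{L^{2}(\Sigma^{err})}=\mathcal{O}(t^{-1/2})$.

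Combining the two factors gives $\|\hat{\mu}-I\|_{L^{2}(\Sigma^{err})}\le 2C\|\hat{\omega}\|_{L^{2}(\Sigma^{err})}=\mathcal{O}(t^{-1/2})$, which is exactly the claim. The main conceptual point, rather than an obstacle, is the consistent use of the $L^{2}$ rate $\mathcal{O}(t^{-1/2})$ for $\hat{\omega}$: the worse $L^{\infty}$ rate $\mathcal{O}(t^{-1/2}\ln t)$ is absorbed only into the operator-norm control of $\hat{\mathcal{C}}_{\hat{\omega}}$, where the logarithmic loss is harmless because it is still $o(1)$; the decisive contribution to $\|\hat{\mu}-I\|_{L^{2}}$ comes from the sharper $L^{2}$ estimate of $\hat{\omega}$, which is why no $\ln t$ factor appears in the final bound.
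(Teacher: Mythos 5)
Your argument is correct and is essentially the same as the paper's: both start from $\hat{\mu}-I=(I-\hat{\mathcal{C}}_{\hat{\omega}})^{-1}\hat{\mathcal{C}}_{\hat{\omega}}I$, bound the resolvent by the Neumann series using Lemma \ref{cb}, control $\|\hat{\mathcal{C}}_{\hat{\omega}}I\|_{L^{2}(\Sigma^{err})}\leq C\|\hat{\omega}\|_{L^{2}(\Sigma^{err})}$ via the $L^{2}$-boundedness of $\hat{\mathcal{C}}_{-}$, and conclude with the $L^{2}$ rate $\mathcal{O}(t^{-1/2})$ from (\ref{omegal2}). Your closing remark about why the $\ln t$ loss stays confined to the operator norm is a correct reading of the same mechanism the paper uses implicitly.
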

The defination (\ref{hatmu}) implies that $\hat{\mu}-I=\hat{\mathcal{C}}_{\hat{\omega}}\hat{\mu}$, thus $m^{err}=I+\hat{\mathcal{C}}(\hat{\mu}\hat{\omega})$ satisfies a $L^{2}$-RH problem. And this $L^{2}$-RH problem with a unique solution shown as follows:
\begin{lemma}\label{uniquesol}
 There exist the unique solution of the RH problem (\ref{merr}) given by
\begin{equation}
m^{err}(x,t,\lambda)=I+\frac{1}{2i\pi}\int_{\Sigma^{err}}\frac{\hat{\mu}(s)\hat{\omega}(s)}{s-\lambda}\mathrm{d}{s}.
\end{equation}
\end{lemma}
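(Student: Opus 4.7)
The plan is to establish Lemma \ref{uniquesol} by the standard small-norm argument for $L^{2}$-Riemann--Hilbert problems, leveraging the operator estimates already obtained in Lemma \ref{cb} and Lemma \ref{hatmu}.

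First, I would record that Lemma \ref{cb} guarantees $I-\hat{\mathcal{C}}_{\hat{\omega}}$ is invertible on $L^{2}(\Sigma^{err})$ for $t$ sufficiently large, so that the resolvent element $\hat{\mu}=I+(I-\hat{\mathcal{C}}_{\hat{\omega}})^{-1}\hat{\mathcal{C}}_{\hat{\omega}}I$ from (\ref{hatmu}) is well defined and lies in $I+L^{2}(\Sigma^{err})$. Combining (\ref{omegal2}) with the $L^{1}\cap L^{2}$ control of $\hat{\omega}$ from (\ref{hatomega}) yields $\hat{\mu}\hat{\omega}\in L^{1}(\Sigma^{err})\cap L^{2}(\Sigma^{err})$ via Cauchy--Schwarz, which is what is required for the Cauchy integral to make sense and to vanish at infinity.

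Second, I would verify directly that the candidate function $m^{err}(x,t,\lambda)=I+\hat{\mathcal{C}}(\hat{\mu}\hat{\omega})(\lambda)$ satisfies every clause of the $L^{2}$-RH problem (\ref{merr}). Analyticity off $\Sigma^{err}$ is immediate from the Cauchy representation, and the decay $m^{err}(x,t,\lambda)\to I$ as $\lambda\to\infty$ follows by dominated convergence using $\hat{\mu}\hat{\omega}\in L^{1}(\Sigma^{err})$. For the jump, one applies the Plemelj--Sokhotski relations to obtain
\begin{equation}
m^{err}_{\pm}=I+\hat{\mathcal{C}}_{\pm}(\hat{\mu}\hat{\omega}),
\end{equation}
and then uses the defining equation $\hat{\mu}=I+\hat{\mathcal{C}}_{-}(\hat{\mu}\hat{\omega})$ together with $\hat{\mathcal{C}}_{+}-\hat{\mathcal{C}}_{-}=\mathrm{Id}$ to rewrite $m^{err}_{+}=\hat{\mu}+\hat{\mu}\hat{\omega}=\hat{\mu}(I+\hat{\omega})=m^{err}_{-}v^{err}$, which is exactly the required jump relation.

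Third, uniqueness will follow from a classical Liouville-type argument: given any second solution $\tilde{m}^{err}$, the quotient $m^{err}(\tilde{m}^{err})^{-1}$ has trivial jumps across $\Sigma^{err}$ (since both factors satisfy the same jump), extends to an entire function on $\mathbb{C}$, and tends to the identity at infinity, so it must be identically $I$. The main technical obstacle I anticipate is making the Plemelj--Sokhotski manipulations rigorous at the self-intersection points of $\Sigma^{err}$ — in particular at the boundary circles $\partial D_{\varepsilon}(\alpha)$, $\partial D_{\varepsilon}(\beta)$, $\partial D_{\varepsilon}(\mu)$ and the adjacent components of $\mathcal{X}$ and $\Sigma^{(5)}\setminus\overline{\mathcal{D}}$, where the $\pm$ boundary values must be defined consistently and the operator $\hat{\mathcal{C}}_{\hat{\omega}}$ must be interpreted as a bounded map on $L^{2}$ of the full contour; the estimate in (\ref{omegal2}) together with the standard contour decomposition used for RH problems with several components handles this, but it is the place where care is required.
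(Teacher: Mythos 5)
Your construction of the solution is the same one the paper (implicitly) uses: the paper's entire argument is the observation that $\hat{\mu}-I=\hat{\mathcal{C}}_{\hat{\omega}}\hat{\mu}$, so that $m^{err}=I+\hat{\mathcal{C}}(\hat{\mu}\hat{\omega})$ solves the $L^{2}$-RH problem, with existence and uniqueness delegated to the general $L^{2}$-theory of \cite{lj2017,lj2018}; your Plemelj computation $m^{err}_{+}=\hat{\mu}(I+\hat{\omega})=m^{err}_{-}v^{err}$ is exactly the verification behind that assertion, and your integrability remarks ($\hat{\mu}\hat{\omega}\in L^{1}\cap L^{2}$ via Cauchy--Schwarz and (\ref{omegal2})) are the right supporting estimates. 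Where you genuinely diverge is uniqueness: you propose the classical Liouville quotient $m^{err}(\tilde{m}^{err})^{-1}$, whereas the framework the paper invokes obtains uniqueness from the operator equation itself --- any $L^{2}$-solution $\tilde{m}^{err}$ has minus boundary value satisfying $\tilde{m}^{err}_{-}=I+\hat{\mathcal{C}}_{\hat{\omega}}\tilde{m}^{err}_{-}$, so invertibility of $I-\hat{\mathcal{C}}_{\hat{\omega}}$ (your Lemma \ref{cb}) forces $\tilde{m}^{err}_{-}=\hat{\mu}$ and hence $\tilde{m}^{err}=I+\hat{\mathcal{C}}(\hat{\mu}\hat{\omega})$. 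The operator route is preferable here because the Liouville argument in the low-regularity setting needs additional work you do not supply: invertibility of $\tilde{m}^{err}$ pointwise (a unit-determinant argument), a Morera/Smirnoff-class justification that boundary values agreeing a.e.\ in $L^{2}$ (so the quotient only has $L^{1}$ boundary data) give analytic continuation across the Carleson contour, and removability of the isolated singularities at the self-intersection points of $\Sigma^{err}$; you flag the intersection-point issue for the Plemelj step but not for the quotient argument, where it is the more serious obstruction. With uniqueness rerouted through $I-\hat{\mathcal{C}}_{\hat{\omega}}$, which you already have, your proof is complete and matches the paper's intended argument.
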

From the Lemma \ref{uniquesol}, we have
\begin{equation}
\begin{split}
\lim_{\lambda\rightarrow\infty}\lambda(m^{err}(x,t,\lambda)-I)&=\lim_{\lambda\rightarrow\infty}\frac{1}{2i\pi}\int_{\Sigma^{err}}\frac{\lambda\hat{\mu}(s)\hat{\omega}(s)}{s-\lambda}\mathrm{d}{s}\\
&=-\frac{1}{2i\pi}\int_{\Sigma^{err}}\hat{\mu}(s)\hat{\omega}(s)\mathrm{d}s.\label{hatmuomega}
\end{split}
\end{equation}
This implies that
\begin{equation}
\begin{split}
&\lim_{\lambda\rightarrow\infty}\lambda(m^{err}(x,t,\lambda)-I)\\
&=-\frac{1}{2i\pi}\int_{\partial D_{\varepsilon}(\mu)}\hat{\omega}(x,t,\lambda)\mathrm{d}\lambda-\frac{1}{2i\pi}\int_{\partial D_{\varepsilon}(\mu)}(\hat{\mu}(x,t,\lambda)-I)\hat{\omega}(x,t,\lambda)\mathrm{d}\lambda\\
&=-\frac{1}{2i\pi}\int_{\partial D_{\varepsilon}(\mu)}(m^{mod}(m^{\mu})^{-1}-I)\mathrm{d}\lambda+\mathcal{O}(\parallel\hat{\mu}-I\parallel_{L^{2}(\partial D_{\varepsilon}(\mu))}\parallel\hat{\omega}\parallel_{L^{2}(\partial D_{\varepsilon}(\mu))})\\
&=-\frac{Y_{\mu}(x,t,\mu)m_{1}^{pc}Y_{\mu}^{-1}(x,t,\mu)}{\sqrt{t}\psi_{\mu}(\mu)}+\mathcal{O}(t^{-1}),\quad t\rightarrow\infty.
\end{split}\nonumber
\end{equation}
Equation (\ref{hatomega}) and Lemma \ref{hatmu} imply that the contribution of $\mathcal{X}$ to the right hand side of equation (\ref{hatmuomega}) is
\begin{equation}
\mathcal{O}(\parallel\hat{\omega}\parallel_{L^{1}(\mathcal{X})})+\mathcal{O}(\parallel\hat{\mu}-I\parallel_{L^{2}(\mathcal{X})}\parallel\hat{\omega}\parallel_{L^{2}(\mathcal{X})})=\mathcal{O}(t^{-1}\ln{t})
\quad t\rightarrow\infty.\end{equation}
Thus, we have the following limit
\begin{equation}
\lim_{\lambda\rightarrow\infty}\lambda(m^{err}(x,t,\lambda)-I)=-\frac{Y_{\mu}(x,t,\mu)m_{1}^{pc}Y_{\mu}^{-1}(x,t,\mu)}{\sqrt{t}\psi_{\mu}(\mu)}+\mathcal{O}(t^{-1}\ln{t}),\quad t\rightarrow\infty.\label{merrI}
\end{equation}
Collecting the five transformations in Section \ref{deformation}, we have
\begin{equation}
\begin{split}
&\lim_{\lambda\rightarrow\infty}\lambda(\hat{m}(x,t,\lambda)-I)\\
&=e^{itg^{(0)}\sigma_{3}}e^{ih(\infty)\sigma_{3}}\lim_{\lambda\rightarrow\infty}
\lambda(m^{mod}-I+(m^{err}-I)m^{mod})e^{-ih(\infty)\sigma_{3}}e^{-itg^{(0)}\sigma_{3}}\\
&=e^{i(tg^{(0)}+h(\infty))\sigma_{3}}\left(\lim_{\lambda\rightarrow\infty}\lambda(m^{mod}-I)+\lim_{\lambda\rightarrow\infty}\lambda(m^{err}-I)\right)e^{-i(tg^{(0)}+h(\infty))\sigma_{3}}.
\end{split}\nonumber
\end{equation}
Thus, we have the solution of the DNLS equation (\ref{dnls3}) is given by
\begin{equation}
\begin{split}
q(x,t)&=2i\lim_{\lambda\rightarrow\infty}(\lambda\hat{m}(x,t,\lambda))_{12}\\
&=2ie^{2i(tg^{(0)}+h(\infty))}\left(\lim_{\lambda\rightarrow\infty}\lambda m^{mod}_{12}(x,t,\lambda)+\lim_{\lambda\rightarrow\infty}\lambda m^{err}_{12}(x,t,\lambda)\right).
\end{split}\label{q1}
\end{equation}
Insert Lemma \ref{mmod} and equation (\ref{merrI}) into equation (\ref{q1}), Lemma \ref{solution} can be derived.

\appendix

\begin{appendices}
\section{Airy Function}\label{airy}
For $\zeta\in\mathbb{C}\backslash Y$, we define the function $m^{Ai}(\zeta)$ as
\begin{equation}
m^{Ai}(\zeta)=\Psi(\zeta)\cdot
\begin{cases}
\begin{split}
&e^{\frac{2}{3}\zeta^{\frac{3}{2}}\sigma_{3}},\quad\quad\quad\quad\quad\quad\quad \zeta\in S_{1}\cup S_{4},\\
&\left(
\begin{array}{cc}
1&0\\
-1&1
\end{array}
\right)e^{\frac{2}{3}\zeta^{\frac{3}{2}}\sigma_{3}},\quad~ \zeta\in S_{2},\\
&\left(
\begin{array}{cc}
1&0\\
1&1
\end{array}
\right)e^{\frac{2}{3}\zeta^{\frac{3}{2}}\sigma_{3}},\quad\quad~ \zeta\in S_{3},
\end{split}
\end{cases}
\end{equation}
where
\begin{equation}
\Psi(\zeta)=\begin{cases}
\left(
\begin{array}{cc}
Ai(\zeta)&Ai(\omega^{2}\zeta)\\
Ai^{'}(\zeta)&\omega^{2}Ai^{'}(\omega^{2}\zeta)
\end{array}
\right)e^{-\frac{i\pi}{6}\sigma_{3}},\quad \zeta\in\mathbb{C}^{+},\\
\left(
\begin{array}{cc}
Ai(\zeta)&-\omega^{2}Ai(\omega\zeta)\\
Ai^{'}(\zeta)&-Ai^{'}(\omega\zeta)
\end{array}
\right)e^{-\frac{i\pi}{6}\sigma_{3}},\quad \zeta\in\mathbb{C}^{-},
\end{cases}
\end{equation}
and $\mathcal{A}=\cup \mathcal{A}_{j}\subset\mathbb{C}$, $j=1,2,3,4$, and
\begin{equation}
\begin{split}
&\mathcal{A}_{1}=\{y|0\leq y\leq\infty\},\quad \mathcal{A}_{2}=\{ye^{\frac{2i\pi}{3}}|0\leq y\leq\infty\},\\
&\mathcal{A}_{3}=\{-y|0\leq y\leq\infty\},\quad \mathcal{A}_{4}=\{ye^{\frac{-2i\pi}{3}}|0\leq y\leq\infty\},
\end{split}\nonumber
\end{equation}
see the Figure \ref{a1}.

\begin{figure}[H]
\begin{center}
\begin{tikzpicture}[scale=0.45]
\draw[thick,->] (-4.2,0)--(-2,0);
\draw[thick,-] (-2,0)--(2,0);
\draw[thick,->] (4.2,0)--(2,0);
\draw[thick,->] (-2.2,4) -- (-1,2);
\draw[thick,-] (-1,2) -- (0,0);
\draw[thick,->] (-2.2,-4) -- (-1,-2);
\draw[thick,-] (-1,-2) -- (0,0);
\draw [] (1.8,2) circle [radius=0] node[right] {$S_{1}$};
\draw [] (-2.5,2) circle [radius=0] node[right] {$S_{2}$};
\draw [] (-2.5,-2) circle [radius=0] node[right] {$S_{3}$};
\draw [] (1.8,-2) circle [radius=0] node[right] {$S_{4}$};
\draw [] (4.4,0) circle [radius=0] node[right] {$\mathcal{A}_{1}$};
\draw [] (-5.5,0) circle [radius=0] node[right] {$\mathcal{A}_{3}$};
\draw [] (-2.5,4.4) circle [radius=0] node[right] {$\mathcal{A}_{2}$};
\draw [] (-2.5,-4.4) circle [radius=0] node[right] {$\mathcal{A}_{4}$};
\end{tikzpicture}
\end{center}\label{a1}
\caption{The contour of the Airy RH problem.}
\end{figure}
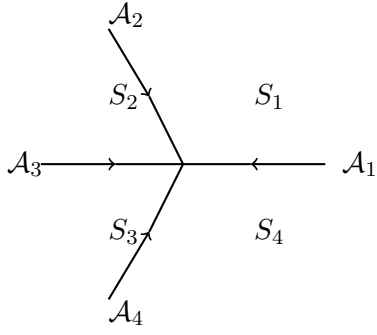

Define the asymptotic approximation $m_{as,N}^{Ai}(\zeta)$ as
\begin{equation}
m_{as,N}^{Ai}(\zeta)=\frac{e^{\frac{i\pi}{12}}}{2\sqrt{\pi}}\zeta^{-\frac{\sigma_{3}}{4}}\sum_{k=0}^{N}{\left(\frac{2}{3}\zeta^{\frac{3}{2}}\right)^{-k}}
\left(\begin{array}{cc}
(-1)^{k}u_{k}&u_{k}\\
-(-1)^{k}v_{k}&v_{k}
\end{array}
\right)e^{-\frac{i\pi}{4}\sigma_{3}},\quad \zeta\in\mathbb{C}\backslash \mathcal{A},
\end{equation}
where $N$ is the positive integer number, $u_{k}$ and $v_{k}$ are the real constants
\begin{equation}
u_{0}=v_{0}=1,\quad u_{k}=\frac{(2k+1)(2k+3)\cdots(6k-1)}{(216)^{k}k!},\quad v_{k}=-\frac{6k+1}{6k-1}u_{k},\quad k=1,2,3
\end{equation}

\begin{theorem}\label{Airy}
The function $m^{Ai}$ satisfies the following properties:\\
$\blacktriangleright$ The function $m^{Ai}$ analytic for $\zeta\in\mathbb{C}\backslash \mathcal{A}$ and admits the RH problem
\begin{equation}
m^{Ai}_{+}(\zeta)=m^{Ai}_{-}(\zeta)v^{Ai}(\zeta),\quad \zeta\in \mathcal{A}\backslash\{0\},
\end{equation}
where
\begin{equation}
v^{Ai}(\zeta)=\begin{cases}
\begin{split}
&\left(
\begin{array}{cc}
1&-e^{-\frac{4}{3}\zeta^{\frac{3}{2}}}\\
0&1
\end{array}
\right),\quad~~ \zeta\in \mathcal{A}_{1},\\
&\left(
\begin{array}{cc}
1&0\\
e^{\frac{4}{3}\zeta^{\frac{3}{2}}}&1
\end{array}
\right),\quad\quad\quad \zeta\in \mathcal{A}_{2}\cup \mathcal{A}_{4},\\
&\left(
\begin{array}{cc}
0&1\\
-1&0
\end{array}
\right),\quad\quad\quad\quad \zeta\in \mathcal{A}_{3},
\end{split}\nonumber
\end{cases}
\end{equation}
$\blacktriangleright$  The function $m^{Ai}_{as,N}(\zeta)$ is analytic for $\zeta\in\mathbb{C}\backslash (-\infty,0]$ and admits the jump condition
\begin{equation}
m^{Ai}_{as,N+}(\zeta)=m^{Ai}_{as,N-}(\zeta)
\left(
\begin{array}{cc}
0&1\\
-1&0
\end{array}
\right),\quad \zeta<0\label{Ai}.
\end{equation}
$\blacktriangleright$ The function $m^{Ai}_{as,N}(\zeta)$ approximates $m^{Airy}$ as $\zeta\rightarrow\infty$
\begin{equation}
(m^{Ai}_{as,N}(\zeta))^{-1}m^{Ai}(\zeta)=I+\mathcal{O}(\zeta^{-\frac{3(N+1)}{2}}),\quad \zeta\rightarrow\infty.
\end{equation}
where the error term is uniform  with respect to $\arg{\zeta}\in[1,2\pi]$.
\end{theorem}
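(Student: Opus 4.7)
The plan is to verify the three properties of Theorem \ref{Airy} in turn, all of which follow from classical facts about the Airy function combined with direct algebraic manipulations.

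For the first property (the jump relation for $m^{Ai}$), I would begin by recalling the connection formula $Ai(\zeta) + \omega\, Ai(\omega\zeta) + \omega^2\, Ai(\omega^2\zeta) = 0$ with $\omega = e^{2i\pi/3}$, together with its derivative analogue and the Wronskian value $W[Ai(\zeta), Ai(\omega^2\zeta)] = e^{i\pi/6}/(2\pi)$. These identities let me relate the two branches of $\Psi(\zeta)$ defined in $\mathbb{C}^+$ and $\mathbb{C}^-$ across the real axis. On each ray $\mathcal{A}_j$ I would then multiply out the triangular and exponential factors appearing in the definition of $m^{Ai}$, compute $(m^{Ai}_-)^{-1} m^{Ai}_+$, and identify it with $v^{Ai}(\zeta)$. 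The off-diagonal jumps on $\mathcal{A}_2$ and $\mathcal{A}_4$ arise purely from the triangular multipliers; the jump on $\mathcal{A}_1$ encodes the Stokes phenomenon between the upper and lower half planes; and the permutation jump on $\mathcal{A}_3$ comes from the sign flip of $\zeta^{3/2}$ across the negative real axis combined with the connection formula.

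The second property, the jump of $m^{Ai}_{as,N}$ across $(-\infty,0]$, is a direct computation. With the principal branch of $\zeta^{1/4}$ cut along $(-\infty,0]$ one has $\zeta^{\sigma_3/4}_+ = i^{\sigma_3}\zeta^{\sigma_3/4}_-$ and $\zeta^{3/2}_+ = -\zeta^{3/2}_-$ for $\zeta<0$. Substituting these into the explicit series for $m^{Ai}_{as,N}$ and exploiting the built-in alternating signs $(-1)^k$ in the coefficient matrices $\begin{pmatrix}(-1)^k u_k & u_k \\ -(-1)^k v_k & v_k\end{pmatrix}$, the two boundary values differ exactly by the constant factor $\begin{pmatrix}0&1\\-1&0\end{pmatrix}$ once one conjugates by the outer $e^{-i\pi\sigma_3/4}$. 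For the third property I would feed the classical Poincar\'e expansions
\[
Ai(\zeta)\sim \frac{e^{-\frac{2}{3}\zeta^{3/2}}}{2\sqrt{\pi}\,\zeta^{1/4}}\sum_{k\ge 0}(-1)^k u_k \bigl(\tfrac{2}{3}\zeta^{3/2}\bigr)^{-k},\qquad Ai'(\zeta)\sim -\frac{\zeta^{1/4}e^{-\frac{2}{3}\zeta^{3/2}}}{2\sqrt{\pi}}\sum_{k\ge 0}(-1)^k v_k \bigl(\tfrac{2}{3}\zeta^{3/2}\bigr)^{-k}
\]
and the rotated analogues at $\omega\zeta$ and $\omega^2\zeta$ into the definition of $\Psi(\zeta)$. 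After absorbing the factor $e^{\frac{2}{3}\zeta^{3/2}\sigma_3}$ from the definition of $m^{Ai}$ and the constant prefactor $e^{-i\pi\sigma_3/4}$, the resulting formal expansion matches $m^{Ai}_{as,N}(\zeta)$ term by term, and truncating after $N$ terms leaves the claimed error $\mathcal{O}(\zeta^{-3(N+1)/2})$.

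The main obstacle is not the algebraic bookkeeping of steps~1--2 but the uniformity of the remainder in step~3 as $\arg\zeta$ varies. The Poincar\'e-type Airy asymptotics are only genuinely uniform in Stokes sectors whose angular width is less than $2\pi$, and the subdominant column of $\Psi$ can be overwhelmed by exponential factors near the dominant Stokes ray $\arg\zeta = 0$. The restriction $\arg\zeta\in[1,2\pi]$ in the statement is precisely what excludes a neighborhood of that ray and allows one to patch together the expansions in the sectors $\{|\arg\zeta|<\pi\}\cap\mathbb{C}^\pm$ into a single uniform remainder bound; the matching at the rays $\arg\zeta = \pm 2\pi/3$ is then automatic from the jump relations already established in step~1.
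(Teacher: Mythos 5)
The paper states this theorem in Appendix \ref{airy} without proof, as a standard fact about the Airy parametrix, so there is no internal argument to compare with; judged on its own terms, your outline of the first two bullets and of the expansion matching is the standard and correct route: the jump on $\mathcal{A}_2\cup\mathcal{A}_4$ comes from conjugating the triangular factors by $e^{\frac{2}{3}\zeta^{3/2}\sigma_{3}}$, the jumps on $\mathcal{A}_{1}$ and $\mathcal{A}_{3}$ follow from the connection formula $Ai(\zeta)+\omega Ai(\omega\zeta)+\omega^{2}Ai(\omega^{2}\zeta)=0$ together with the Wronskian value you quote, and the constant jump of $m^{Ai}_{as,N}$ follows from $\zeta_{+}^{3/2}=-\zeta_{-}^{3/2}$, $\zeta_{+}^{1/4}=i\zeta_{-}^{1/4}$, which indeed produces the factor $\left(\begin{smallmatrix}0&1\\-1&0\end{smallmatrix}\right)$ term by term in $k$.

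The genuine flaw is in your resolution of the uniformity issue, which you yourself single out as the main obstacle. The estimate does \emph{not} degenerate near $\arg\zeta=0$, and the range ``$[1,2\pi]$'' in the statement is evidently a misprint for $[0,2\pi]$ (compare the parabolic cylinder theorem in Appendix \ref{pcf}, and note that the theorem is applied on the full circles $\partial D_{\varepsilon}(\alpha)$, $\partial D_{\varepsilon}(\beta)$, where $\zeta(\lambda)=(\tfrac{3it}{2}g_{\alpha}(\lambda))^{2/3}$ sweeps every argument; a bound omitting a neighborhood of the positive real ray could not produce (\ref{malpha}) and (\ref{mbeta})). Your worry that ``the subdominant column can be overwhelmed'' near $\arg\zeta=0$ confuses absolute with relative errors: in $S_{1}\cup S_{4}$ the entries of $m^{Ai}$ are $Ai(\zeta)e^{\frac{2}{3}\zeta^{3/2}}$-- and $Ai(\omega^{\pm1}\zeta)e^{-\frac{2}{3}\zeta^{3/2}}$--type combinations in which the exponential factors cancel identically entry by entry, and each Airy function is evaluated at an argument lying in a sector strictly inside the region $|\arg(\cdot)|\leq\pi-\delta$ where its Poincar\'e expansion is uniformly valid; no dominant/subdominant competition ever occurs within a single entry, so the entrywise relative error is uniformly $\mathcal{O}(\zeta^{-3(N+1)/2})$ up to and including the ray $\arg\zeta=0$. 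Equivalently, one can note that $(m^{Ai}_{as,N})^{-1}m^{Ai}$ extends analytically across $\mathcal{A}_{3}$ (both factors have the identical constant jump there), has jumps of the form $I+\mathcal{O}(e^{-c|\zeta|^{3/2}})$ on the remaining rays, and then patch the sectorial estimates into a single bound uniform over all of $[0,2\pi]$. As written, your argument proves only a weakened statement that would not suffice for the error analysis in Section \ref{section4}, and the justification offered for the restriction is not correct.
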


\section{Parabolic Cylinder Function}\label{pcf}
Define the RH problem
\begin{equation}
m^{pc}_{+}(q,z)=m^{pc}_{-}(q,z)v^{pc}(q,z),\quad z\in \mathcal{P},\label{pc}
\end{equation}
where
\begin{equation}
v^{pc}(q,z)=\begin{cases}
\begin{split}
&\left(
\begin{array}{cc}
1&0\\
q\rho(q,z)^{-2}e^{2iz^{2}}&1
\end{array}
\right),\quad\quad\quad~~ z\in \mathcal{P}_{1},\\
&\left(
\begin{array}{cc}
1&\overline{q}\rho(q,z)^{2}e^{-2iz^{2}}\\
0&1
\end{array}
\right),\quad\quad\quad~~ z\in \mathcal{P}_{2},\\
&\left(
\begin{array}{cc}
1&0\\
-\frac{q}{1+|q|^{2}}\rho(q,z)^{-2}e^{2iz^{2}}&1
\end{array}
\right),\quad z\in \mathcal{P}_{3},\\
&\left(
\begin{array}{cc}
1&-\frac{\overline{q}}{1+|q|^{2}}\rho(q,z)^{2}e^{-2iz^{2}}\\
0&1
\end{array}
\right),\quad z\in \mathcal{P}_{4},
\end{split}
\end{cases}\nonumber
\end{equation}
and $\mathcal{P}=\cup \mathcal{P}_{j}\subset\mathbb{C}$, $j=1,2,3,4$,
\begin{equation}
\begin{split}
&\mathcal{P}_{1}=\{se^{\frac{i\pi}{4}}|0\leq s\leq\infty\},\quad \mathcal{P}_{2}=\{se^{\frac{3i\pi}{4}}|0\leq s\leq\infty\},\\
&\mathcal{P}_{3}=\{se^{\frac{-3i\pi}{4}}|0\leq s\leq\infty\},\quad \mathcal{P}_{4}=\{se^{\frac{-3i\pi}{4}}|0\leq s\leq\infty\},
\end{split}\nonumber
\end{equation}
see the Figure \ref{b}.

\begin{figure}[H]
\begin{center}
\begin{tikzpicture}[scale=0.45]
\draw[thick,->] (4,4) -- (2,2);
\draw[thick,-] (2,2) -- (-2,-2);
\draw[thick,->]  (-4,-4)--(-2,-2);
\draw[thick,->] (-4,4) -- (-2,2);
\draw[thick,-] (-2,2) -- (2,-2);
\draw[thick,->]  (4,-4)--(2,-2);
\draw [] (1.8,2.5) circle [radius=0] node[left] {$\mathcal{P}_{1}$};
\draw [] (-1.8,2.5) circle [radius=0] node[right] {$\mathcal{P}_{2}$};
\draw [] (-1.8,-2.5) circle [radius=0] node[right] {$\mathcal{P}_{3}$};
\draw [] (1.8,-2.5) circle [radius=0] node[left] {$\mathcal{P}_{4}$};
\end{tikzpicture}
\end{center}\label{b}
\caption{The contour of the Parabolic Cylinder RH problem.}
\end{figure}
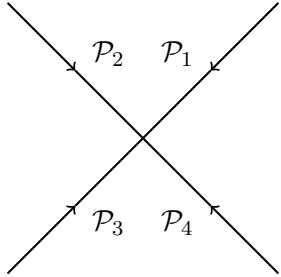

\begin{theorem}
The RH problem with a unique solution $m^{pc}(q,z)$
\begin{equation}
m^{pc}(q,z)=I+\frac{i}{z}\left(
\begin{array}{cc}
0&-e^{-\pi\nu}\beta^{\mathcal{P}}(q)\\
e^{\pi\nu}\overline{\beta^{\mathcal{P}}(q)}&0
\end{array}
\right)+\mathcal{O}(\frac{1}{z^{2}}),\quad z\rightarrow\infty,\quad q\in\mathbb{C}
\end{equation}
where the error term is uniform with respect to $\arg{z}\in[0,2\pi]$ and $q$ is compact subsets of $\mathbb{C}$, and $\beta^{\mathcal{P}}(q)$ is given by
\begin{equation}
\beta^{\mathcal{P}}(q)=\frac{\sqrt{\nu(q)}}{2}e^{i(-\frac{3\pi}{4}-2\nu(q)\ln{2}-\arg{q}+\arg{\Gamma(i\nu(q))})},\quad q\in\mathbb{C}.
\end{equation}
And for compact subset $K\subset\mathbb{C}$,
\begin{equation}
\sup_{q\in K}\sup_{z\in\mathbb{C}\backslash \mathcal{P}}|m^{pc}(q,z)|<\infty.
\end{equation}
\end{theorem}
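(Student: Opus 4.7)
The plan is to build the solution explicitly from parabolic cylinder functions along the standard Its construction and then read off the $1/z$-coefficient from the classical asymptotics of $D_a(\zeta)$. First I would conjugate out the oscillation and the $\rho$-factor: set $\Phi(q,z)=m^{pc}(q,z)\rho(q,z)^{-\sigma_3}e^{iz^2\sigma_3}$, so that $\Phi$ has piecewise-analytic sectorial pieces with \emph{constant} (in $z$) unimodular triangular jumps on the four rays $\arg z\in\{\pi/4,3\pi/4,5\pi/4,7\pi/4\}$, the jumps on opposite rays being related by the identity $1+\tfrac{-q}{1+|q|^2}\cdot\tfrac{-\bar q}{1+|q|^2}=\tfrac{1}{1+|q|^2}$. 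This puts the problem in the canonical Its normal form; the two unknowns at this stage are the index $\nu=\nu(q)$ appearing in $\rho$ and the overall normalization.

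Next I would derive the Weber equation. Since the jumps of $\Phi$ are piecewise constant along straight rays, the logarithmic derivative $H(z):=\Phi_z(z)\Phi^{-1}(z)$ has no jump across $\mathcal{P}$ and is entire. Matching the prescribed behavior $m^{pc}=I+\mathcal{O}(z^{-1})$ as $z\to\infty$ forces $\Phi$ to grow like $z^{i\nu\sigma_3}e^{iz^2\sigma_3}$, which in turn constrains $H(z)$ to be a first-degree polynomial of the form $2iz\sigma_3+B(q)$ with $B$ off-diagonal. Eliminating one column from the resulting $2\times2$ first-order system gives the Weber equation $w''-(z^2-a-\tfrac12)w=0$ with $a=i\nu$, so each entry of $\Phi$ is a linear combination of $D_{\pm i\nu}(\pm(1-i)z)$. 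Writing $\Phi$ sectorially in terms of these parabolic cylinder functions (with different constants $\{c_j^\pm\}$ in each of the four sectors) and imposing the four triangular jumps yields a linear system whose solvability uses exactly the connection formula $D_a(\zeta)=e^{-i\pi a}D_a(-\zeta)+\tfrac{\sqrt{2\pi}}{\Gamma(-a)}e^{-i\pi(a+1)/2}D_{-a-1}(-i\zeta)$; consistency produces $1+|q|^2=e^{2\pi\nu}$, i.e.\ $\nu=\tfrac{1}{2\pi}\ln(1+|q|^2)$, and determines all constants up to a factor of $q/|q|$.

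Having $\Phi$ explicit, the $1/z$-term of $m^{pc}$ is extracted from the two-term expansion $D_a(\zeta)=\zeta^{a}e^{-\zeta^{2}/4}(1-\tfrac{a(a-1)}{2\zeta^{2}}+\cdots)$ in the principal sector, combined on the opposite side with its Stokes transform (bringing in $\Gamma(-a)=\Gamma(i\nu)$). Collecting the contributions on $\arg z\in[0,2\pi]$ and undoing the conjugation by $\rho^{-\sigma_3}e^{iz^2\sigma_3}$, the $(1,2)$-entry of the $1/z$-coefficient becomes $-e^{-\pi\nu}\beta^{\mathcal{P}}(q)$ with
\begin{equation}
\beta^{\mathcal{P}}(q)=\frac{\sqrt{\nu(q)}}{2}\,e^{i\bigl(-\tfrac{3\pi}{4}-2\nu(q)\ln 2-\arg q+\arg\Gamma(i\nu(q))\bigr)}.\nonumber
\end{equation}
The factor $\sqrt{\nu}/2$ comes from $|\Gamma(i\nu)|^2=\pi/(\nu\sinh\pi\nu)$ together with $e^{-\pi\nu}\sinh\pi\nu=\tfrac12(1-e^{-2\pi\nu})=\tfrac12\tfrac{|q|^2}{1+|q|^2}$; the phase collects $-3\pi/4$ from $(1-i)^{i\nu}=2^{i\nu/2}e^{-i\pi\nu/4}$ doubled, $-2\nu\ln 2$ from $\rho(q,z)^{-2}$, $-\arg q$ from the off-diagonal normalization constant that ties $\Phi$ to $q$ rather than $|q|$, and $\arg\Gamma(i\nu)$ from the Stokes multiplier.

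Uniqueness is the cleanest part: if $m_1,m_2$ both solve the RH problem, $m_1m_2^{-1}$ is entire with determinant one and tends to $I$ at infinity, so equals $I$. For the uniform bound on $K\Subset\mathbb{C}$, note that $\nu(q)$ depends smoothly on $q$ with $\nu(0)=0$ (at $q=0$ the jump becomes trivial and $m^{pc}\equiv I$), while the parabolic cylinder functions $D_{\pm i\nu(q)}(\zeta)$ are entire in $\zeta$ and continuous in $\nu$; in each closed sector $\overline{S}_j$ the representation of $m^{pc}$ is a finite sum of bounded expressions, bounded uniformly for $q\in K$. I expect the main obstacle to be the bookkeeping of the constants in the third step: tracking every factor of $2^{i\nu}$, $e^{\pm i\pi/4}$, and the correct branch of $\arg\Gamma(i\nu)$ through the connection formulas to land exactly on the stated phase of $\beta^{\mathcal{P}}(q)$ is tedious and where most computational errors historically occur.
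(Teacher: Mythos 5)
The paper offers no proof of this appendix theorem: it is stated as a known model result (the classical parabolic-cylinder problem of Its and Deift--Zhou), so there is no in-paper argument to compare yours against. Your proposal is the standard construction and is correct in outline: conjugating by $\rho^{-\sigma_3}e^{iz^2\sigma_3}$ to get piecewise-constant jumps, passing to the Weber equation, solving sectorially in terms of $D_{\pm i\nu}$, fixing $\nu=\tfrac{1}{2\pi}\ln(1+|q|^2)$ from the connection formula, and extracting the $1/z$ coefficient; Liouville gives uniqueness. Two points deserve care. First, the identity you quote for the opposite rays is miswritten: $1+\tfrac{-q}{1+|q|^2}\cdot\tfrac{-\bar q}{1+|q|^2}\neq\tfrac{1}{1+|q|^2}$; the relevant consistency condition is $1-\tfrac{q}{1+|q|^2}\,\bar q=\tfrac{1}{1+|q|^2}$, equivalently that the cyclic product of the four triangular jump matrices around the origin is the identity. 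Second, your derivation of the ODE infers that $H(z)=\Phi_z\Phi^{-1}$ is an entire degree-one polynomial from $m^{pc}=I+\mathcal{O}(z^{-1})$; strictly this needs a differentiable (full) asymptotic expansion and removability of the singularity at $z=0$ (via $\det\Phi\equiv1$ and local boundedness), so the logically clean order is the one you also carry out: build the candidate explicitly from parabolic cylinder functions, verify the jumps with the connection formula, and then invoke Liouville for uniqueness. Finally, note that the paper's normalization has $|\beta^{\mathcal{P}}(q)|=\sqrt{\nu}/2$ rather than the more common $\sqrt{\nu}$, so the constant bookkeeping you flag as the main hazard must be done against the paper's convention; your sketch targets the right formula.
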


\end{appendices}

\noindent\textbf{Acknowledgements}
This work is supported by the National Natural Science Foundation of China(12175069 and 12235007)
 and Science and Technology Commission of Shanghai Municipality (21JC1402500 and 22DZ2229014).

  \noindent\textbf{Data Availability Statements}

    The data that supports the findings of this study are available within the article.\vspace{2mm}

        \noindent{\bf Conflict of Interest}

    The authors have no conflicts to disclose.

\end{document}